\documentclass[a4paper,UKenglish,cleveref,autoref, thm-restate]{lipics-v2021}
\pdfoutput=1

\usepackage{arxiv-lipics-v2021} 

\usepackage{mathtools} 
\usepackage[dvipsnames]{xcolor}
\usepackage[colorinlistoftodos]{todonotes}
\usepackage{CJKutf8} 
\usepackage{tikz}
\usetikzlibrary{arrows.meta, positioning}
\usepackage{xspace}
\usepackage{graphicx}
\usetikzlibrary{arrows, automata}
\usepackage{hyperref}
\usepackage[capitalise]{cleveref}
\usepackage{enumerate}
\usepackage{thmtools}
\usepackage{thm-restate}
\usepackage{environ}
\usepackage{tabularx}
\usepackage{comment}
\usepackage[linesnumbered]{algorithm2e} 
\usepackage[noend]{algpseudocode}

\usepackage[colorinlistoftodos]{todonotes}

\title{Exploring VASS Parameterised by Geometric Dimension}

\author{Wojciech Czerwi\'nski}
{University of Warsaw \and \url{https://www.mimuw.edu.pl/~wczerwin}}
{wczerwin@mimuw.edu.pl}
{https://orcid.org/0000-0002-6169-868X}
{}

\author{Roland Guttenberg}
{University of Warsaw}
{r.guttenberg@uw.edu.pl}
{https://orcid.org/0000-0001-6140-6707}
{}

\author{Łukasz Orlikowski}
{University of Warsaw}
{l.orlikowski@mimuw.edu.pl}
{https://orcid.org/0009-0001-4727-2068}
{}

\author{Henry Sinclair-Banks}
{University of Warsaw \and \url{http://henry.sinclair-banks.com}}
{hsb@mimuw.edu.pl}
{https://orcid.org/0000-0003-1653-4069}
{}

\author{Yangluo Zheng}
{Shanghai Jiao Tong University}
{wunschunreif@sjtu.edu.cn}
{https://orcid.org/0009-0000-1028-5458}
{}

\authorrunning{W. Czerwi{\'n}ski, R. Guttenberg, {\L}. Orlikowski, H. Sinclair-Banks, and Y. Zheng}

\Copyright{Wojciech Czerwi{\'n}ski, Roland Guttenberg, {\L}ukasz Orlikowski, Henry Sinclair-Banks, and Yangluo Zheng}

\ccsdesc[500]{Theory of computation~Parallel computing models}

\keywords{vector addition systems, Petri nets, geometric dimensions, coverability problem, integer reachability problem, simultaneous unboundedness, reachability problem} 

\category{}

\relatedversion{}

\funding{First four authors are supported by the ERC grant INFSYS, agreement no. 950398. Yangluo Zheng is supported by the NSFC (Natural Science Foundation of China) grant no. 62572319.}

\nolinenumbers 

\newcommand{\Oo}{\mathcal{O}}
\newcommand{\oo}{\Oo}

\newcommand{\NORM}{\textup{\textsc{norm}}}
\newcommand{\poly}{\textup{poly}}

\newcommand{\rank}{\textup{rank}}
\newcommand{\size}{\textup{size}}

\newcommand{\abs}[1]{\lvert#1\rvert}
\newcommand{\set}[1]{\{#1\}}
\newcommand{\sset}{\subseteq}
\newcommand{\N}[0]{\mathbb{N}}
\newcommand{\NN}{\mathbb{N}}
\newcommand{\Q}[0]{\mathbb{Q}}
\newcommand{\QQ}{\mathbb{Q}}
\newcommand{\Z}[0]{\mathbb{Z}}

\newcommand{\proj}[2]{\pi_{#1}(#2)}

\renewcommand{\vec}[1]{\mathbf{#1}}
\newcommand{\norm}[1]{\lVert#1\rVert}
\newcommand{\onenorm}[1]{\norm{#1}_1}
\newcommand{\infnorm}[1]{\norm{#1}_\infty}

\newcommand{\config}[2]{#1(#2)}
\newcommand{\Config}[2]{\config{#1}{\vec{#2}}}
\newcommand{\eff}{\textup{eff}}
\newcommand{\Counter}[1]{\texttt{#1}}
\newcommand{\pismall}{\pi_{\text{small}}}
\newcommand{\pitail}{\pi_{\text{tail}}}
\newcommand{\rhotail}{\rho_{\text{tail}}}
\newcommand{\rhosmall}{\rho_{\text{small}}}

\newcommand{\sumdim}{g}
\newcommand{\maxdim}{{g_{\textsc{scc}}}}
\newcommand{\ressumdim}{k_{\textsc{res}}} 
\newcommand{\cyclespace}[1]{\mathsf{CycleSpace}_\QQ(#1)}
\newcommand{\cycleeffects}{\mathsf{Cycle}}

\newcommand{\class}[1]{\textsc{#1}}
\newcommand{\NL}{\class{NL}\xspace} 

\newcommand{\np}{\class{NP}\xspace}
\newcommand{\pspace}{\class{PSpace}\xspace}
\newcommand{\expspace}{\class{ExpSpace}\xspace}
\newcommand{\twoexpspace}{\class{2-ExpSpace}\xspace}
\newcommand{\tower}{\class{Tower}\xspace}

\newcommand{\Ff}{\mathcal{F}}
\newcommand{\F}{\Ff}

\newcommand{\Tower}{\mathop{\operatorname{tower}}}

\newcommand{\cone}{\mathop{\operatorname{cone}}\nolimits}
\newcommand{\intcone}{\mathop{\operatorname{cone}}\nolimits_\N}

\newcommand{\trans}[1]{\xrightarrow{#1}}
\newcommand{\tran}{\rightarrow}

\EventEditors{John Q. Open and Joan R. Access}
\EventNoEds{2}
\EventLongTitle{42nd Conference on Very Important Topics (CVIT 2016)}
\EventShortTitle{CVIT 2016}
\EventAcronym{CVIT}
\EventYear{2016}
\EventDate{December 24--27, 2016}
\EventLocation{Little Whinging, United Kingdom}
\EventLogo{}
\SeriesVolume{42}
\ArticleNo{23}

\hideLIPIcs

\begin{document}

\maketitle

\begin{abstract}
The geometric dimension $g$ of a Vector Addition System with States (VASS) is the dimension of the vector space generated by cycles in the VASS; this parameter refines the standard dimension $d$, the number of counters.
Recently, it was discovered that the fastest-known algorithm for solving the reachability problem for VASS has the same complexity in terms of $g$ as in terms of $d$.
This suggests that the geometric dimension may in fact be a more adequate parameter for measuring the complexity of VASS reachability problems.
We initiate a more systematic study of the geometric dimension.
We discuss differences between two parameters: the geometric dimension and the SCC dimension.
Our main technical result states that classical results about the coverability and boundedness problems can be improved from dimension $d$ to geometric dimension $g$.
Namely, coverability is witnessed by runs of length $n^{2^{\mathcal{O}(g)}}$ instead of $n^{2^{\mathcal{O}(d)}}$, and unboundedness can be witnessed by runs of length $n^{2^{\mathcal{O}(g\log g)}}$ instead of $n^{2^{\mathcal{O}(d\log d )}}$, where $n$ is the size of the instance.
We also study integer reachability and simultaneous unboundedness in VASS parameterised by the geometric dimension. 

\end{abstract}



\section{Introduction}
\label{sec:introduction}
Vector Addition Systems with States (VASS), or equivalently Petri nets, are one of the most popular models of concurrent systems with many theoretical and practical applications~\cite{EsparzaN94,Reisig13,siglog/Schmitz16}.
VASS can be seen as automata with several nonnegative integer counters, that can be incremented and decremented by transitions, but not tested for equality with zero. Such automata with $d$ counters are called $d$-dimensional VASS (shortly $d$-VASS).
Investigating algorithmic problems for VASS is an active research field since the '70s. 
Reachability is a central problem which asks, for a given VASS and two configurations (states together with counter values), whether there exists a run from the source configuration to the target configuration. 
Other significant problems include coverability and boundedness.
The coverability problem asks whether there is a run from the source configuration to some configuration above the target.
The boundedness problem asks whether the set of configurations reachable from the source is finite. 
Already in 1976, coverability was shown to be \expspace-hard~\cite{Lipton76}.
This immediately implies that reachability and boundedness are also \expspace-hard (due to trivial reductions).
In 1978, Rackoff proved that both coverability and boundedness are in \expspace~\cite{Rackoff78}.
Three years later, reachability was proved to be decidable~\cite{Mayr81}. 
For many subsequent decades, determining the complexity of the reachability problem was a major open problem.
Despite plenty of effort, the Ackermannian complexity was only settled in 2021~\cite{LerouxS19,CzerwinskiO21,Leroux21}.

Contemporary research efforts have been devoted to studying reachability, coverability, and boundedness with seemingly the most natural VASS parameter: the dimension $d$ (i.e.\ the number of counters).
Examples of this fact are abundant; works on reachability in $1$-VASS~\cite{ValiantP75,HaaseKOW09}, $2$-VASS~\cite{BlondinEFGHLMT21}, $3$-VASS~\cite{CzerwinskiJLO25}, $d$-VASS for $d \in \set{4, \ldots, 8}$~\cite{Czerwinski0LLM20,CzerwinskiO22}, for arbitrary $d$-VASS~\cite{LerouxS19,Leroux21,CzerwinskiO21,Lasota22,CzerwinskiJLLO23}, and even works on reachability in subclasses of VASS~\cite{Leroux21Flat,ChistikovCMOSW24} use the dimension $d$ as a parameter. 
Investigation of the coverability problem~\cite{Rackoff78,RosierY86,MazowieckiSW23,KunnemannMSSW23,PilipczukSS25} and reachability problems in VASS extended with a pushdown stack~\cite{LerouxST15,PerezR24,BiziereC25} or a branching mechanism~\cite{LazicS15,GollerHLT16,FigueiraLLMS17,MazowieckiP19,BiziereHLS25} have predominantly parameterised the VASS in question by the dimension $d$.

The papers that prove the Ackermannian upper bound of reachability use the dimension $d$ as the parameter: Leroux and Schmitz proved that reachability in $d$-VASS is in \(\F_{d+4}\)~\cite{LerouxS19}, then Fu, Yang, and Zheng later improved this to \(\F_{d}\)~\cite{FuYZ24, FuZY25}.
Here, $\F_d$ is the $d$-th level of the fast growing hierarchy of complexity classes~\cite{Schmitz16}.
Recently however, it was observed that the number of counters $d$ is not always the most appropriate parameter for analysing the complexity of reachability problems in VASS. 
Fu, Yang, and Zheng observed that \emph{both} the $\Ff_{d+4}$ and the $\Ff_d$ upper bounds also hold when the VASS is instead parameterised by the \emph{geometric dimension} \(d\). 
The geometric dimension of a Strongly Connected Component (SCC) of a VASS is defined to be the dimension of the vector space spanned by the effects of cycles in the SCC. 
The canonical version of extending this notion to the entire VASS is to take the (Minkowski) sum of the cycle spaces of each SCC.
Fu, Yang, and Zheng used this definition of the geometric dimension, denoted \(\sumdim\), to parameterise VASS~\cite{FuYZ24, FuZY25}.
It is possible however, to take the maximum over the geometric dimensions of the SCCs in the VASS; we call this parameter the \emph{SCC dimension}, denoted \(\maxdim\).
This parameter has been used recently by Guttenberg, Czerwi\'{n}ski, and Lasota when studing reachability problems in VASS extended with nested zero tests (a certain kind of restricted zero-testing ability)~\cite{GuttenbergCL25}.

In fact, the reachability algorithm runs in \(\F_d\) for geometric dimension \(d\) in both variants (the geometric dimension and the SCC dimension).
Intuitively, the algorithm works in the same complexity if we add new counters which are copies, rescalings, or even linear combinations of existing counters. 
It even works if the copied counter changes between every SCC (for example, if counter \(6\) first copies counter \(3\) and then in the next SCC copies counter \(1\)). Notice however, that the coefficients of these linear combinations can be negative and need not be integer, so the VASS with the additional counters may have significantly fewer runs. 
To illustrate the difference between the geometric dimension $\sumdim$ and the SCC dimension $\maxdim$, consider Linear Path Schemes (LPS)~\cite{LerouxS04,BlondinEFGHLMT21,ChistikovCMOSW24}, a well-known subclass of VASS. 
LPS are characterised by their structure: a (simple) path which has disjoint simple cycles at each node along the path.
Observe that LPS can have arbitrary geometric dimension $\sumdim \in \NN$, but can only have SCC dimension one $\maxdim \leq 1$.
This is because every SCC contains just one cycle but these cycles need not have pairwise linearly dependent effects.

\paragraph*{Our Contributions}
In this paper we initiate a more systematic study of two geometric VASS parameters: the geometric dimension and the SCC dimension.
Our focus is on the set of classical and more tractable reachability problems in VASS: coverability, simultaneous unboundedness, boundedness, and integer reachability.
Algorithms for solving these problems are often used as subroutines in algorithms for solving other problems.
We also discuss differences between geometric dimension and SCC dimension taking into account the reachability problem.

Our main technical contribution is a generalisation of the results of~\cite{Rackoff78} and the more recent~\cite{KunnemannMSSW23} from VASS parameterised by the dimension $d$ to VASS parameterised by the geometric dimension $\sumdim$. 
More concretely, Rackoff proved two seminal results for $d$-VASS of size $n$~\cite{Rackoff78}. 
The first is that if there is a run from the source $s$ that covers the target $t$, then there is also a run of length at most $\max(\size(t),n)^{2^{\Oo(d \log(d))}}$ from the source $s$ that covers the target $t$~{\cite[Section 3]{Rackoff78}}.
The second is that if the set of configurations reachable from the source configuration $s$ is infinite, then it is witnessed by a ``self-covering run'' of length at most $n^{2^{\Oo(d \log(d))}}$~{\cite[Section 4]{Rackoff78}}.
A self-covering run is one from the source $s$, that first reaches a configuration $c$ and then goes on to reach a configuration $c'$ that strictly greater than $c$.
Recently, K\"{u}nnemann, Mazowiecki, Sch\"{u}tze, Sinclair-Banks, and W\k{e}grzycki improved on Rackoff's bound for the length of coverability witnesses and showed that if there is a run from the source $s$ that covers the target $t$, then there is a run from $s$ that covers $t$ of length at most $\max(\size(t),n)^{2^{\Oo(d)}}$~\cite{KunnemannMSSW23}. The difference being that the exponent was decreased from $2^{\Oo(d \log(d))}$ to $2^{\Oo(d)}$ which matches Lipton's lower bound on the length of coverability witnesses~\cite{Lipton76}. 
We generalise all three of these results to geometric dimension \(g\) (\cref{thm:coverability-sum-dim},~\cref{thm:boundedness}, and~\cref{thm:improved-coverability-sum-dim}).
Our main conceptual insights are novel definitions of ``clean basis'' (\cref{def:clean-basis}) and ``geometrically small configurations'' (\cref{def:small-configurations}).

With the aim of delivering a well-designed toolbox for VASS with fixed geometric dimension, we generalise one more classic result of a flavour similar to techniques used by Rackoff in~\cite{Rackoff78}. 
A necessary piece in the standard approach (called the KLM-decomposition) to solving reachability in VASS was the following observation about the situation in which all counters are simultaneously unbounded. 
If, from a given source configuration $s$, one cannot cover \emph{all} configurations in some state $q$, then there exists a bound $C$ such that, on every run starting from $s$, there is at least one counter whose value is bounded above by $C$.
Importantly, no counter is guaranteed to be bounded in all configurations reachable from $s$, the bounded counter may depend on the run.
Originally, the constant $C$ was nonprimitive-recursive~\cite{Kosaraju82}, as it was taken from the construction of the Karp-Miller tree~\cite{KarpM69}, but Leroux and Schmitz observed in~\cite{LerouxS19} that $C$ can be doubly-exponential (see also~{\cite[Lemma 31]{CzerwinskiJLO25arxiv}} for an explicit statement of this result), which significantly lowered the complexity of their algorithm. 
We generalise this result to geometric dimension $\sumdim$ (\cref{thm:no-pump}).

Furthermore, we consider the integer reachability problem for VASS which asks whether a given target configuration can be reached from a given source configuration by a run in which counters may have negative values. 
It is folklore (see~\cite{HaaseH14}) that if there is an integer run from the source $s$ to the target $t$ in a $d$-VASS, for a fixed $d$, then there is a polynomial  length integer run from $s$ to $t$ (with respect to the size of the VASS encoded in unary). 
We generalise this result to geometric dimension $\sumdim$ (\cref{thm:z-vass-sum}).

Finally, we conclude by comparing the two VASS parameters in question: geometric dimension and SCC dimension. 
For the geometric dimension, we have managed to show that for the classical problems coverability, boundedness, integer reachability, and simultaneous unboundedness the state-of-the-art algorithms can be generalised from the dimension to the geometric dimension. 
It is known that sometimes geometric dimension behaves differently than the number of counters. 
For example, reachability in binary VASS with geometric dimension 1 is \pspace-hard~\cite{Zheng25}, in contrast to \np-completeness~\cite{HaaseKOW09} for binary 1-VASS.
By and large however, we do find that geometric dimension behaves very similarly to the number of counters.

The situation is different for the SCC dimension. 
The algorithm solving the reachability problem for $d$-VASS~\cite{LerouxS19} works equally well for SCC dimension $\maxdim = d$, but it seems to be the only case in which we are currently able to generalise existing results from the dimension to the SCC dimension.
As corollaries of our results when VASS are parameterised by the geometric dimension $\sumdim$, we obtain upper bounds when the VASS is instead parameterised by the SCC dimension.
Specifically, for coverability, boundedness, integer reachability, and simultaneous unboundedness we can use the inequality $\sumdim \leq \maxdim \cdot n$, where $n$ is the number of states in the VASS.
Notice however, that in contrast to fixed geometric dimension $\sumdim$, we do not get polynomial upper bounds on the size of coverability, unboundedness, or integer reachability witnesses for fixed SCC dimension $\maxdim$. 
We remark that even for fixed $\sumdim$, our polynomial upper bound on the witness size does not imply $\NL$ algorithms.
This is because the dimension $d$ is not fixed, so one cannot store a configuration in logarithmic space.

Despite the fact that the general VASS reachability algorithm in~\cite{LerouxS19} works fine with respect to the SCC dimension, it seems to be the case that reachability in VASS with fixed SCC dimension $\maxdim$ has higher complexity than reachability in VASS with the corresponding dimension.
For this, we provide the following pieces of evidence.

First, as shown in~\cite{FuYZ24},
in VASS with $\sumdim = 2$ the reachability relation is semilinear and can be represented by a finite union of LPSs, similarly as for $2$-VASS~\cite{BlondinEFGHLMT21}. 
For $2$-VASS, these Linear Path Schemes (LPSs) have at most exponential size and contain at most polynomially many cycles.
This implies exponential length shortest runs. However, the construction described in~\cite{FuYZ24} does not provide any bound on the size of the constructed LPSs. It seems that the obtained LPSs are of at most doubly-exponential size (due to a polynomial blowup for each increment of the dimension), which would give at most triply-exponential run length, but there is no such claim in~\cite{FuYZ24}. 
That means that the currently existing techniques could possibly be used to show that the shortest runs witnessing reachability in VASS with SCC dimension $\maxdim = 2$ have triple exponential length.
In turn, this would yield an \twoexpspace upper bound for the reachability problem, in contrast to the \pspace upper bound for reachability in $2$-VASS~\cite{BlondinEFGHLMT21}.

For $\maxdim = 3$, the situation also seems more complicated than for $3$-VASS. We observe that an example of a binary $4$-VASS presented in~{\cite[Section 5]{Czerwinski0LLM20}} has SCC dimension $\maxdim = 3$. 
In this example, the shortest run between the source and target configuration has double exponential length. 
Such an example is not known for $3$-VASS; all known examples have the property that the shortest run between source and target is has at most exponential length. 
Therefore, the naive \pspace algorithm that enumerates all the possible exponential length runs is insufficient VASS with SCC dimension $\maxdim = 3$, but such an algorithm can be conjectured to solve reachability in $3$-VASS.

Last but not least we consider $\maxdim = 4$.
We prove that the reachability problem for VASS with SCC dimension $\maxdim = 4$ is \tower-hard (\cref{thm:max-dim-4-tower-hard}). 
To contrast it with $d$-VASS, the lowest number of counters for which reachability is known to be \tower-hard is $d = 8$~\cite{CzerwinskiO22}. 
Our construction (to prove~\cref{thm:max-dim-4-tower-hard}) follows the same ideas used to obtain \tower-hardness for reachability in $8$-VASS presented in~\cite{CzerwinskiO22}. 
This indicates that VASS of small SCC dimension can be much more expressive than VASS with a small number of counters.

\paragraph*{Paper Organisation} 
In Section~\ref{sec:preliminaries} we define preliminary notions and recall useful facts. Next Sections~\ref{sec:coverability},~\ref{sec:simultaneous-unboundedness},~\ref{sec:boundedness},~and~\ref{sec:z-reachability}, contain our main technical results about various algorithmic problems for VASS parameterised by geometric dimension and SCC dimension.
In Section~\ref{sec:coverability}, we present our results about the coverability problem. 
Next, we consider simultaneous unboundedness in Section~\ref{sec:simultaneous-unboundedness}. Section~\ref{sec:boundedness} is devoted to the boundedness problem. 
In Section~\ref{sec:z-reachability}, we study the integer reachability problem.
In Section~\ref{sec:tower-max-dim-4-lower-bound}, we prove \tower-hardness of the reachability problem for VASS with $\maxdim = 4$. Finally, in Section~\ref{sec:future}, we conclude and present future research directions.

\section{Preliminaries}
\label{sec:preliminaries}
We use $\mathbb{Z}, \mathbb{N}, \mathbb{Q}$ and $\mathbb{Q}_{\ge 0}$ to denote the sets of integers, nonnegative-integers, rational numbers, and nonnegative rational numbers, respectively. Let $d \in \mathbb{N}$ be a number. We denote $[d]$ for the set $\{1, 2, \ldots, d\}$. In particular $[0] = \emptyset$ is the empty set. Given a $d$-dimensional vector $\vec{x} \in \mathbb{Q}^d$, we write $\vec{x}[i]$ for its $i$-th component, where $i \in [d]$. Hence $\vec{x} = (\vec{x}[1], \vec{x}[2], \ldots, \vec{x}[d])$. For $q \in \Q$ we denote $\vec{q}^d = (q, q, \ldots, q) \in \mathbb{Q}^d$, and write $\vec{e}_i^d$ for the $d$-dimensional vector satisfying $\vec{e}_i^d[i] = 1$ and $\vec{e}_i^d[j] = 0$ for all $j \ne i$. When the dimension $d$ is clear from the context, we shall drop the super-script $d$ and simply write $\vec{q}$, $\vec{e}_i$ for these vectors. Vectors are compared in the component-wise manner. Given $\vec{x}, \vec{y} \in \mathbb{Q}^d$, we have $\vec{x} \le \vec{y}$ if $\vec{x}[i] \le \vec{y}[i]$ for all $i \in [d]$. We write $\vec{x} < \vec{y}$ if $\vec{x} \le \vec{y}$ and $\vec{x} \ne \vec{y}$. For a vector $\vec{x} \in \mathbb{Q}^d$, we will use its max-norm $\infnorm{\vec{x}} = \max_{i \in [d]} \abs{\vec{x}[i]}$ and its one-norm $\onenorm{\vec{x}} = \sum_{i = 1}^d \abs{\vec{x}[i]}$. For a set of vectors $X \subseteq \Q^d$ we write $\infnorm{T} = \max_{\vec{t} \in T} \infnorm{\vec{t}}$. For $\vec{x} \in \mathbb{Q}^d$ and $K \subseteq [d]$ we use the notation $\proj{K}{\vec{x}}$ for the projection of $\vec{x}$ onto indices in $K$.

\paragraph*{Vector Spaces and Cones}

Let $X \subseteq \mathbb{Q}^d$ be a finite set of vectors and let $\mathbb{S} \subseteq \mathbb{Q}$ be a set of numbers. The $\mathbb{S}$-linear combination of $X$ is the set $\mathbb{S}(X) = \{ \sum_{i = 1}^m \lambda_i \vec{m}_i \mid m \in \mathbb{N}, \lambda_i \in \mathbb{S}, \vec{m}_i \in X \}$. The \emph{vector space} generated by $X$ is defined to be $\mathbb{Q}(X)$. The \emph{(rational) cone} generated by $X$ is $\cone(X) = \mathbb{Q}_{\ge 0}(X)$. We also consider the \emph{integer cone} generated by $X$, which is $\intcone(X) = \mathbb{N}(X)$. Carath\'eodory's Theorem is a useful tool for cones.

\begin{theorem}[{Carath\'eodory's Theorem for Rational Cones, see e.g.\ \cite[Corollary 7.1i]{schrijver1998theory}}]
    \label{thm:caratheodory-rational}
    Let $X \subseteq \mathbb{Q}^d$ be a finite set of vectors and let $\vec{b} \in \cone(X)$. Then there exists a subset $\widehat{X} \subseteq X$ such that $\vec{b} \in \cone(\widehat{X})$ and $|\widehat{X}| \le d$.
\end{theorem}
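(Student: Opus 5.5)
The plan is the classical reduction argument: take a representation of $\vec{b}$ as a nonnegative combination of vectors of $X$ that uses as few vectors as possible, and show that such a minimal representation involves linearly independent vectors. Since $\vec{b} \in \cone(X)$, we can write $\vec{b} = \sum_{i=1}^m \lambda_i \vec{x}_i$ with distinct $\vec{x}_i \in X$ and $\lambda_i \in \mathbb{Q}_{\ge 0}$. Discarding terms with $\lambda_i = 0$, we may assume all $\lambda_i > 0$. We then choose such a representation with $m$ minimal and set $\widehat{X} = \{\vec{x}_1, \ldots, \vec{x}_m\}$. It then suffices to show that $\vec{x}_1, \ldots, \vec{x}_m$ are linearly independent over $\mathbb{Q}$, since any linearly independent family in $\mathbb{Q}^d$ has at most $d$ elements, giving $|\widehat{X}| = m \le d$.

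To prove independence, suppose otherwise. Then there are rationals $\mu_1, \ldots, \mu_m$, not all zero, with $\sum_i \mu_i \vec{x}_i = \vec{0}$. Replacing $\mu$ by $-\mu$ if needed, we may assume some $\mu_i > 0$. Let
\[ t = \min\{ \lambda_i / \mu_i \mid \mu_i > 0 \}, \]
which is a positive rational, attained at some index $j$. Set $\lambda_i' = \lambda_i - t\mu_i$, so that $\vec{b} = \sum_i \lambda_i' \vec{x}_i$. For indices with $\mu_i \le 0$ we have $\lambda_i' \ge \lambda_i > 0$. For indices with $\mu_i > 0$ the choice of $t$ gives $\lambda_i' \ge 0$. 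Moreover $\lambda_j' = 0$. Thus $\vec{b}$ is a nonnegative rational combination of at most $m-1$ of the vectors, which contradicts the minimality of $m$.

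The only point that needs care is this choice of $t$: it must simultaneously keep every coefficient nonnegative and zero out at least one of them, and this is exactly why $t$ is taken as the minimum ratio over the indices with positive $\mu_i$. All quantities remain rational because $t$ is a ratio of rationals, so the argument works over $\mathbb{Q}$ and not just over $\mathbb{R}$. The degenerate case $\vec{b} = \vec{0}$ is covered by the empty set, since $m = 0$ is then minimal.
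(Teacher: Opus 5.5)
Your proof is correct and complete. The minimal-support representation, the ratio test $t = \min\{\lambda_i/\mu_i \mid \mu_i > 0\}$ and the resulting contradiction are all sound, and every quantity stays rational.

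The paper does not prove this statement. It only cites Schrijver, where the result (Corollary 7.1i) is derived from the fundamental theorem of linear inequalities. That theorem is a Farkas-type alternative, proved by a pivoting procedure: it either writes $\vec{b}$ as a nonnegative combination of linearly independent vectors from $X$, or produces a separating hyperplane. That route gives more, namely a certificate of non-membership when $\vec{b} \notin \cone(X)$, but none of this is needed here. Your direct elimination argument is the shorter, self-contained way to get exactly the stated conclusion.

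Your argument also proves the stronger form that matches Schrijver's formulation: $\widehat{X}$ can be chosen linearly independent, so $|\widehat{X}| \le \dim \QQ(X)$ rather than merely $|\widehat{X}| \le d$. This sharper bound is what the paper actually relies on later, in the appendix proof of \cref{clm:linear_inequalities}. There Carath\'eodory's theorem is used to get a representation with at most $r = \rank(A)$ nonzero coefficients, not just at most the ambient dimension.
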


We also need the analogy of Carath\'eodory's Theorem for integer cones.

\begin{theorem}[{Carath\'eodory's Theorem for Integer Cones, \cite[Theorem 1]{EisenbrandS06}}]
    \label{thm:caratheodory-integer}
    Let $X \subseteq \mathbb{Z}^d$ be a finite set of vectors and let $\vec{b} \in \intcone(X)$. Then there exists a subset $\widehat{X} \subseteq X$ such that $\vec{b} \in \intcone(\widehat{X})$ and $|\widehat{X}| \leq 2d \log(4dM)$ where $M = \max_{x \in X} \infnorm{\vec{x}}$. 
\end{theorem}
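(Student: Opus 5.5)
The statement is Carath\'eodory's Theorem for integer cones in the form of Eisenbrand and Shmonin, and I would prove it by the standard pigeonhole argument on subset sums.

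First fix a representation $\vec{b} = \sum_{\vec{x} \in X} \lambda_{\vec{x}} \vec{x}$ with $\lambda_{\vec{x}} \in \N$ whose support $S = \{\vec{x} : \lambda_{\vec{x}} > 0\}$ is inclusion-minimal. I claim $|S| \le 2d\log(4dM)$, so we may take $\widehat{X} = S$.

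Suppose $|S| = k$. For every subset $T \subseteq S$ consider the sum $\sum_{\vec{x} \in T} \vec{x} \in \Z^d$. Each coordinate of this sum lies in $[-kM, kM]$. Hence there are at most $(2kM+1)^d$ possible sums, while there are $2^k$ subsets. If $2^k > (2kM+1)^d$, then two distinct subsets $T_1 \neq T_2$ have equal sums. Removing $T_1 \cap T_2$ from both gives disjoint $T_1', T_2'$, not both empty, with $\sum_{T_1'} \vec{x} = \sum_{T_2'} \vec{x}$. Without loss of generality $T_1' \neq \emptyset$.

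Now exchange. Since every $\lambda_{\vec{x}} \ge 1$ on $S$, we may subtract $1$ from $\lambda_{\vec{x}}$ for $\vec{x} \in T_1'$ and add $1$ for $\vec{x} \in T_2'$. This keeps $\vec{b}$ unchanged and all coefficients in $\N$. Repeat this $\mu = \min_{\vec{x} \in T_1'} \lambda_{\vec{x}}$ times. The resulting representation has support strictly contained in $S$, since some coefficient on $T_1'$ hits $0$ and $T_2'$ already lies in $S$. This contradicts minimality. Hence $2^k \le (2kM+1)^d$, i.e.\ $k \le d\log_2(2kM+1)$.

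The only mildly fiddly step, and the one I expect to be the main obstacle, is solving this implicit inequality to get $k \le 2d\log(4dM)$. Suppose instead $k > 2d\log(4dM)$. Since $k \mapsto k - d\log(2kM+1)$ is eventually increasing, it suffices to check the bound at $k_0 = 2d\log(4dM)$. There
\[
d\log(2k_0M+1) \le d\log(4dM \cdot 2\log(4dM)) = d\log(4dM) + d\log(2\log(4dM)).
\]
Because $2\log y \le y$ for $y \ge 4$, this is at most $2d\log(4dM) = k_0$. The degenerate cases $M = 0$ or $\vec{b} = \vec{0}$ are trivial, taking $\widehat{X} = \emptyset$.

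Boundary effects in the monotonicity (small $d$, $M = 1$) would be checked directly. Beyond that, I would simply cite \cite[Theorem 1]{EisenbrandS06} for the exact constants.
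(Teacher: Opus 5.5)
Your argument is correct and is essentially the proof of \cite[Theorem 1]{EisenbrandS06}, which the paper cites rather than reproves: take a minimal-support representation, apply pigeonhole to the $2^k$ subset sums lying in $[-kM,kM]^d$, perform the exchange step, and then solve $2^k \le (2kM+1)^d$. The monotonicity you defer to ``boundary checks'' holds with no exceptions. Indeed $\frac{d}{dk}\bigl(k - d\log_2(2kM+1)\bigr) > 0$ for all $k \ge d/\ln 2$, and $k_0 = 2d\log_2(4dM) \ge 4d$ whenever $dM \ge 1$, so there are no small-$d$ or $M=1$ cases left to check.
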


\paragraph*{Vector Addition Systems with States}

A \emph{$d$-dimensional vector addition system with states} ($d$-VASS) is a pair $V = (Q, T)$ where $Q$ is a finite set of \emph{states} and $T \subseteq Q \times \mathbb{Z}^d \times Q$ is the set of \emph{transitions}. A transition of the form $(p, \vec{a}, q)$ may be written as $p \xrightarrow{\vec{a}} q$ for clarity, and the vector $\vec{a}$ is called the \emph{effect} of this transition. 
A VASS may be viewed as a directed graph where edges are labeled with integer vectors. 
Accordingly, a \emph{path} in a VASS is a sequence of transitions and a \emph{cycle} is a path that starts and ends at the same state.
Given a path (or a cycle) $\pi$ of $V$, its effect $\eff(\pi)$ is the sum of effects of all transitions constituting $\pi$. The (unary-encoded) size of $V$ is defined by $\size(V) = |Q| + |T| \cdot d \cdot (\max_{u \in T}\infnorm{\eff(u)} + 1)$. The max-norm of $V$ is defined to be $\infnorm{T} := \infnorm{\{\eff(u) \mid u \in T\}}$.

A configuration of $V$ consists of a state $q \in Q$ and a nonnegative vector $\vec{x} \in \mathbb{N}^d$. 
Such a configuration shall be written as $q(\vec{x})$.
For a configuration $c = \Config{q}{x}$, its \emph{norm} is defined by $\infnorm{c} = \infnorm{\vec{x}}$ and $\onenorm{c} = \onenorm{\vec{x}}$. We write $c[i]$ for $\vec{x}[i]$ where $i \in [d]$. We remark that coordinates of the vector $\vec{x}$ will often be called ``counters'', as VASS is a special form of counter automata. Configurations can be compared when they are in the same state. For $c = q(\vec{x})$ and $c' = q'(\vec{x}')$, we write $c \le c'$ if $q = q'$ and $\vec{x} \le \vec{x}'$, and $c < c'$ if $q = q'$ and $\vec{x} < \vec{x}'$.

Each transition $u = (p, \vec{a}, q)$ induces \emph{one-step runs} $p(\vec{x}) \xrightarrow{t} q(\vec{y})$ between configurations $p(\vec{x}), q(\vec{y})$ satisfying $\vec{x} + \vec{a} = \vec{y}$. Similarly, a path $\pi = u_1 u_2 \ldots u_k$ induces \emph{runs} of the form $c_0 \xrightarrow{u_1} c_1 \xrightarrow{u_2} c_2 \cdots \xrightarrow{u_k} c_k$. We emphasize that vectors in each configuration $c_i$ along the run should be nonnegative. We write $c \xrightarrow{*}_V c'$ if there is a run from $c$ to $c'$ in $V$. The subscript $V$ is often omitted when the VASS $V$ is clear from the context. We say a run from $c$ to $c'$ \emph{covers} a configuration $c''$, if it holds that $c' \ge c''$.

We also consider \emph{$\Z$-configurations} $p(\vec{z})$ where the vector $\vec{z} \in \Z^d$ is allowed to have negative values. Runs over $\Z$-configurations are defined similarly, and are called \emph{$\Z$-runs}.

Problems that are commonly studied on VASS include:

\begin{description}
    \item[Reachability.] Given a VASS $V$ and configurations $c, c'$, is there a run from $c$ to $c'$ in $V$?
    \item[\(\Z\)-Reachability.] Given a VASS $V$ and $\Z$-configurations $c, c'$, is there a $\Z$-run from $c$ to $c'$?
    \item[Coverability.] Given a VASS $V$ and configurations $c, c'$, is there a run from $c$ that covers $c'$?
    \item[Boundedness.] Given a VASS $V$ and a configuration $c$, is the set $\{c' \mid c \xrightarrow{*} c'\}$ finite?
\end{description}

In this paper we focus on these problems under the parameterisation by \emph{geometric dimension} which we introduce next.

\paragraph*{Geometric Dimension of a VASS}

Let $V = (Q, T)$ be a $d$-VASS and $q \in Q$ be a state. We denote by $\cycleeffects_V(q) = \{\eff(\theta) \mid \theta \text{ is a cycle containing }q\}$ the set of effects of all cycles containing the state $q$. Two types of geometric dimension have been proposed in the literature.
The \emph{geometric dimension} $\sumdim(V)$ and its variant the \emph{SCC dimension} $\maxdim(V)$ are respectively defined by 
\begin{align*}
    & \sumdim(V) = \dim(\QQ(\bigcup_{q \in Q} \cycleeffects_V(q))) && \maxdim(V) = \max_{q \in Q} \dim(\QQ(\cycleeffects_V(q)))
\end{align*}

In particular, we define the \emph{cycle space} of $V$ as $\cyclespace{V} = \QQ(\bigcup_{q \in Q} \cycleeffects_V(q))$, which is the vector space spanned by the effects of all cycles in $V$. Hence, the geometric dimension is indeed the dimension of the cycle space of $V$. We remark that $\cyclespace{V}$ can be generated from effects of $\sumdim(V)$ many simple cycles.

Observe that the space $\QQ(\cycleeffects_V(q))$ only depends on the strongly connected component (SCC) in $V$ that contains $q$, hence the maximum can be taken over SCCs instead of states as indicated by the name SCC dimension.  From the definitions we observe the following simple fact on the relation between the two types of geometric dimension:

\begin{lemma}
    \label{lem:sum-dim-le-max-dim-mult-num-states}
    Let $V = (Q, T)$ be a VASS. Then $\sumdim(V) \le |Q| \cdot \maxdim(V)$.
\end{lemma}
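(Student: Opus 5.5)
The plan is to bound the dimension of a sum of subspaces by the sum of their dimensions. For each state $q \in Q$, set $W_q = \QQ(\cycleeffects_V(q))$. By the definition of $\maxdim(V)$, we have $\dim(W_q) \le \maxdim(V)$ for every $q$. It then remains to identify $\cyclespace{V}$ with the sum of these subspaces.

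First I will check that $\QQ(\bigcup_{q \in Q} \cycleeffects_V(q)) = \sum_{q \in Q} W_q$, where the right-hand side is the Minkowski sum of subspaces. This is immediate. Any rational linear combination of vectors drawn from the union can be regrouped by assigning each vector to one state $q$ such that the vector lies in $\cycleeffects_V(q)$; each group is then an element of $W_q$. Conversely, each $W_q$ is contained in the span of the union.

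Next I will apply the standard fact that $\dim(U_1 + U_2) \le \dim(U_1) + \dim(U_2)$ for subspaces, which follows from the Grassmann formula or from taking the union of bases as a spanning set. By induction over the finitely many states this gives $\sumdim(V) = \dim(\sum_{q} W_q) \le \sum_{q} \dim(W_q) \le |Q| \cdot \maxdim(V)$.

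There is no real obstacle here; the only point needing care is the regrouping step, and it is routine. If one prefers to work with SCCs, the same argument yields the slightly sharper bound by the number of SCCs, since $W_q$ depends only on the SCC containing $q$. The number of SCCs is at most $|Q|$, so the stated bound follows.
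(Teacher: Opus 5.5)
Your proposal is correct and takes essentially the same route as the paper: each $\QQ(\cycleeffects_V(q))$ is spanned by at most $\maxdim(V)$ vectors, so the union of these spanning sets, of size at most $|Q| \cdot \maxdim(V)$, spans $\cyclespace{V}$. Your sharpening of the bound to the number of SCCs is also valid.
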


\begin{proof}
    For each $q \in Q$, the vector space $\QQ(\cycleeffects_V(q))$ is spanned by at most $\maxdim(V)$ generators. Therefore, the cycle space of $V$ has no more than $|Q| \cdot \maxdim(V)$ generators.
\end{proof}

\paragraph*{Integer Solutions of Linear Systems} 

The following result on system of linear equations will be useful.

\begin{lemma}[{\cite[Prop.\ 4]{linear_equations}}]
    \label{lem:linear_equations}
    Consider a system $A \cdot \vec{x} = \vec{b}$ of $d$ Diophantine linear equations with $n$ unknowns, where absolute values of coefficients in $A$ and $\vec{b}$ are bounded by $N$. If the system has a nonnegative integer solution, then it also has one bounded by $\infnorm{\vec{x}} \le \oo(nN)^d$.
\end{lemma}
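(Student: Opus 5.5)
The plan is to prove the bound geometrically, using the polyhedron of rational solutions together with Cramer's rule. Let $P = \{\vec{x} \in \QQ_{\ge 0}^n \mid A\vec{x} = \vec{b}\}$ and let $r = \rank(A) \le \min(d,n)$. Since $P$ lies in the nonnegative orthant it contains no line, so by Minkowski--Weyl it decomposes as $P = \mathrm{conv}(\mathcal{V}) + \cone(\mathcal{R})$. Here $\mathcal{V}$ is the finite set of vertices of $P$, and $\mathcal{R}$ is a set of representatives of the extreme rays of $C = \{\vec{y} \ge 0 \mid A\vec{y} = \vec{0}\}$.

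The first step bounds the vertices and the rays. Every vertex is a basic feasible solution: it is supported on $r$ columns forming a nonsingular $r\times r$ subsystem, and all its other entries are zero. By Cramer's rule each nonzero coordinate is a quotient of $r\times r$ minors of $[A \mid \vec{b}]$. Such minors are bounded by $r!\,N^r \le (nN)^r \le (nN)^d$. The same argument applies to extreme rays of $C$: an extreme ray is determined by $n-1$ linearly independent active constraints. Its support therefore has size at most $r+1$, and it can be scaled to an integer vector whose entries are $r\times r$ minors of $A$. So every $\vec{v} \in \mathcal{V}$ and every $\vec{y} \in \mathcal{R}$ satisfy $\infnorm{\cdot} \le (nN)^d$, and each $\vec{y}$ is integral with $A\vec{y} = \vec{0}$.

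The second step is a reduction by rounding. Take any nonnegative integer solution $\vec{x}$ and write it as $\vec{x} = \sum_i \lambda_i \vec{v}_i + \sum_j \mu_j \vec{y}_j$ with $\lambda_i \ge 0$ and $\sum_i\lambda_i = 1$. By \cref{thm:caratheodory-rational}, applied to the cone $C$ (which lies in $\ker A$, of dimension $n-r$), we may assume at most $n-r$ rays appear. Set $\vec{x}' = \vec{x} - \sum_j \lfloor \mu_j\rfloor \vec{y}_j$. This vector is integral because each $\vec{y}_j$ is integral, and it satisfies $A\vec{x}' = \vec{b}$. It is also nonnegative, since it equals $\sum_i\lambda_i\vec{v}_i + \sum_j\{\mu_j\}\vec{y}_j$, which is a sum of nonnegative vectors. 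This yields $\infnorm{\vec{x}'} \le (nN)^d + (n-r)(nN)^d$.

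The main obstacle is the extra factor $n-r$, which must be removed to obtain $\Oo(nN)^d$ rather than $\Oo(nN)^{d+1}$. I would handle it by sharpening the Cramer bounds, distinguishing two cases.
\begin{itemize}
\item If $r < d$, the minors are bounded by $(nN)^{d-1}$, and the factor $n$ is absorbed.
\item If $r = d$, use the bound $d!\,N^d$ together with $d \le n$: $(n-d+1)\, d!\, N^d \le n^d N^d$ up to a constant, because $(n-d+1)\,d! \le n^d$.
\end{itemize}
Alternatively, one can invoke Hadamard's inequality, which bounds the minors by $(\sqrt{d}N)^d$ and leaves room to spare for the factor $n$. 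Either route gives the claimed $\infnorm{\vec{x}'} \le \Oo(nN)^d$.
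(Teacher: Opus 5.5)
Your proof is correct. It takes a genuinely different route from the paper only in the sense that the paper does not prove this lemma at all: it imports it from \cite[Prop.~4]{linear_equations}. Your proposal therefore replaces a citation with a self-contained derivation.

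Your route is the classical one:
\begin{itemize}
\item the Minkowski--Weyl decomposition of the pointed polyhedron $\{\vec{x}\ge\vec{0} \mid A\vec{x}=\vec{b}\}$;
\item Cramer bounds on the vertices;
\item integral representatives of the extreme rays (circuits) of $\{\vec{y}\ge\vec{0}\mid A\vec{y}=\vec{0}\}$, with entries given by subdeterminants of $A$;
\item flooring the ray coefficients.
\end{itemize}
This gives more than the statement asks for. You obtain $\infnorm{\vec{x}'}\le (n-r+1)\,r!\,N^r\le (nN)^r$ with $r=\rank(A)$ and no hidden constant.

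That rank-sensitive form is exactly what the paper needs in \cref{thm:z-vass-sum}. There the authors first discard dependent equations to get a system with $g$ rows, and only then invoke the lemma.

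It is also instructive to contrast your argument with the paper's own proof of \cref{clm:linear_inequalities}. That proof uses Carath\'eodory to shrink the support of a rational solution and then rescales it to an integer one. The rescaling is legitimate only because there $\vec{b}\ge\vec{0}$ and the constraints are inequalities. For equations with $\vec{b}\neq\vec{0}$, scaling destroys $A\vec{x}=\vec{b}$. Your separation of the bounded part (a convex combination of vertices) from the recession cone (integral rays removed by flooring) is precisely what handles this.

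Two minor points.
\begin{itemize}
\item Your case split is unnecessary. The inequality $(n-r+1)\,r!\le n^r$ holds for every $1\le r\le n$. For $r=1$ it is an equality. For $r\ge 2$, combine $(n-r+1)r\le \bigl(\tfrac{n+1}{2}\bigr)^2\le n^2$ with $(r-1)!\le n^{r-2}$. You only asserted this inequality, without proof, in the case $r=d$.
\item In the case $r<d$, the ``factor $n$'' you absorb is really $n-r+1$, which equals $n+1$ when $r=0$. That case is trivial, though: $A=0$ forces $\vec{b}=\vec{0}$, and $\vec{x}=\vec{0}$ is a solution.
\end{itemize}
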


Similar bounds also exist for systems of linear inequalities, see e.g.\ \cite[Prop.\ 3]{linear_equations}. In this paper we need a stronger bound that better fits the geometric dimension, in the sense that the exponent depends on the rank of the matrix $A$ instead of the number of variables.

\begin{restatable}{lemma}{clmLinearInequalities}
    \label{clm:linear_inequalities}
    Consider a system $A \cdot \vec{x} \geq \vec{b}$ of $d$ Diophantine linear inequalities with $n$ unknowns, where absolute values of coefficitents of $A$ and $\vec{b}$ are bounded by $N$ and, moreover, $\vec{b} \geq \vec{0}$. If the system has a nonnegative integer solution then it has one of max-norm bounded by $\infnorm{\vec{x}} \le (r+1)(rN)^{r}$ where $r = \rank(A)$.
\end{restatable}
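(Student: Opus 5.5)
The plan is to pass to a rational relaxation, locate a vertex of a low-dimensional polyhedron, and then clear denominators. Clearing denominators is harmless because $\vec{b} \ge \vec{0}$: multiplying a solution by an integer $D \ge 1$ keeps it a solution.

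Let $\vec{x} \in \NN^n$ be a solution and put $\vec{y} = A\vec{x} \ge \vec{b}$. Write $\vec{a}_1, \ldots, \vec{a}_n$ for the columns of $A$. Then $\vec{y} = \sum_i \vec{x}[i]\vec{a}_i \in \cone(\{\vec{a}_1,\ldots,\vec{a}_n\})$, and all the $\vec{a}_i$ lie in the $r$-dimensional column space of $A$. We apply Carath\'eodory's Theorem (\cref{thm:caratheodory-rational}) inside this $r$-dimensional space, using its standard form that selects linearly independent generators. This gives a set $S$ of at most $r$ linearly independent columns and a rational vector $\vec{\lambda} \ge \vec{0}$ indexed by $S$ with $A_S\vec{\lambda} = \vec{y} \ge \vec{b}$. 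In particular, the polyhedron
\[
  P = \{\vec{\mu} \in \QQ^{S} : \vec{\mu} \ge \vec{0},\ A_S \vec{\mu} \ge \vec{b}\}
\]
is nonempty. Any integer point of $P$, padded with zeros outside $S$, is a nonnegative integer solution of the original system.

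Since $P$ lies in the nonnegative orthant, it is pointed and therefore has a vertex $\vec{v}$. At $\vec{v}$, exactly $|S|$ linearly independent constraints are tight. Some of these are of the form $\vec{\mu}[i]=0$ and the rest are rows of $A_S\vec{\mu} = \vec{b}$. After deleting the zero coordinates, $\vec{v}$ restricted to the remaining index set $S'$ is the unique solution of a square nonsingular system $M\vec{z} = \vec{b}'$. Here $M$ is an $s\times s$ submatrix of $A$ with $s \le r$, and $\vec{b}'$ is a subvector of $\vec{b}$. By Cramer's rule, $\vec{z} = \vec{p}/D$, where $D = \abs{\det M}$ is a positive integer. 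Each $\vec{p}[j]$ is the determinant of $M$ with one column replaced by $\vec{b}'$, so it is an integer determinant of size $s\times s$ with entries bounded by $N$. By Hadamard's inequality,
\[
  \abs{\vec{p}[j]} \le s^{s/2}N^s \le (rN)^r .
\]

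Now set $\vec{x}' = D\vec{v}$, padded with zeros. This is an integer vector, it is nonnegative, and
\[
  A\vec{x}' = D\cdot A_S\vec{v} \ge D\vec{b} \ge \vec{b},
\]
where the last step uses $D \ge 1$ and $\vec{b} \ge \vec{0}$. Its norm is $\infnorm{\vec{x}'} = \max_j\abs{\vec{p}[j]} \le (rN)^r \le (r+1)(rN)^r$, as required. If one prefers to avoid Hadamard's inequality, the cruder Leibniz-expansion bound $s!\,N^s$ on the determinant also suffices.

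The main point needing care is the reduction to at most $r$ variables: the bound must depend only on $\rank(A)$, not on $n$ or $d$. Two things make this work. Carath\'eodory has to be applied inside the column space so that at most $r$ linearly independent columns are chosen. Then the vertex of $P$ automatically uses only $s \le \abs{S} \le r$ linearly independent tight rows, so every determinant involved has size at most $r$. A secondary subtlety is that naive rounding up of $\vec{v}$ fails, because columns of $A$ may have negative entries. This is exactly why we scale by $D$ instead, which is valid only because $\vec{b} \ge \vec{0}$.
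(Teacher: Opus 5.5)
Your proof is correct, and it takes a genuinely different route from the paper in its second half. Both arguments begin with Carath\'eodory to restrict to at most $r$ columns. The paper then uses $\vec{b} \geq \vec{0}$ only to scale the rational Carath\'eodory coefficients up to an integer point, which shows that the reduced system is integer-feasible. It then obtains the small solution by citing the general integer-programming bound \cref{lem:ilp}, applied with the nonnegativity constraints added and with Hadamard's inequality bounding the subdeterminants; this yields $(r+1)(rN)^r$.

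You instead take a vertex of the LP relaxation on the reduced columns and multiply it by $\abs{\det M}$. Here $\vec{b} \geq \vec{0}$ does all the work, and Cramer's rule produces the integer solution directly. This buys you a self-contained argument with no appeal to \cref{lem:ilp}. It also gives the slightly stronger bound $r^{r/2}N^r$: the factor $r+1$ is an artefact of the general-purpose lemma, which does not exploit $\vec{b} \geq \vec{0}$. The paper's route is shorter on the page, but only because it outsources this step to that lemma.

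As a side remark, your Carath\'eodory step is not actually needed. A vertex of $\{\vec{x} \in \QQ^n : \vec{x} \geq \vec{0},\ A\vec{x} \geq \vec{b}\}$ already has support of size at most $\rank(A)$. The reason is that the tight rows of $A$, restricted to the support, must have full column rank, which is exactly the argument you use when passing from $S$ to $S'$.
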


Proof of \cref{clm:linear_inequalities} can be found in \cref{sec:app_preliminaries}. We remark that it is based on \cite{ilp}.

\section{Coverability}
\label{sec:coverability}
We will focus on the coverability problem for VASS parameterised by geometric dimension $\sumdim$.
The main result of this section is that coverability is witnessed by doubly-exponential length runs, where the double exponential dependence is only on $\sumdim$, rather than the size or (standard) dimension of the VASS (\cref{thm:coverability-sum-dim}).
This means that coverability in VASS with fixed geometric dimension is witnessed by polynomial length runs.

Throughout this section, unless otherwise specified, we fix our attention on an instance of coverability in a VASS $V = (Q, T)$, an initial configuration $\Config{s}{x}$, and a target configuration $\Config{t}{y}$.
We also fix the following parameters of $V$:
\begin{itemize}
        \item $d \in \NN$ is the dimension,
        \item $\sumdim \leq d$ is the geometric dimension, 
        \item $n \coloneqq \abs{Q}$ is the number of states,
        \item $M \coloneqq \max\set{\infnorm{\vec{x}} : (p, \vec{x},q) \in T}$ is the magnitude of the greatest effect of any transition.
\end{itemize}

We can assume, without loss of generality that $V$ consists of a series of strongly connected components (SCC) that are connected by single transitions.
This is because a run in $V$ from $\Config{s}{x}$ that covers $\Config{t}{y}$ follows a path that traverses a series of SCCs.

\begin{restatable}{theorem}{coverabilitySumDim}\label{thm:coverability-sum-dim}
    Let $f: \NN \to \NN$, $f(x) \coloneqq (x+1)^{x+1}$.
    If there is a run in $V$ from $\Config{s}{x}$ that covers $\Config{t}{y}$, then there exists a run in $V$ from $\Config{s}{x}$ that covers $\Config{t}{y}$ of length at most $\poly(d,n,M,\infnorm{\vec{y}})^{f(\sumdim)}$.
\end{restatable}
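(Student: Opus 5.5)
We adapt Rackoff's induction on the number of ``large'' counters. Dimensions are measured inside the cycle space $W \coloneqq \cyclespace{V}$, of dimension $\sumdim$, rather than in coordinates. The key observation is that along any path the counter vector equals $\vec{x}$ plus the effects of the connecting (acyclic) transitions plus an element of $W$. Since $V$ is a chain of SCCs joined by single transitions, there are at most $n$ connecting transitions. Hence, up to an additive error of $nM$, the configuration is determined by a point in $W$.

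We fix a basis $\vec{w}_1,\dots,\vec{w}_\sumdim$ of $W$ built from effects of simple cycles, so that $\infnorm{\vec{w}_i}\le nM$. We pick it so that $\sumdim$ coordinates $i_1,\dots,i_\sumdim$ form a nonsingular minor; this is our ``clean basis''. Every coordinate $j$ of a point in $W$ is then a fixed rational linear combination of the coordinates $i_1,\dots,i_\sumdim$, with coefficients that are ratios of minors of size at most $(\sumdim nM)^{\sumdim}$. Consequently, bounding $\sumdim$ suitably chosen linear functionals bounds every counter up to a factor of $\poly(d,n,M)^{\sumdim}$.

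The induction is on $k\in\{0,\dots,\sumdim\}$. The statement to prove is this: if a run covers $\Config{t}{y}$ while staying ``geometrically small'' in all but a $k$-dimensional subspace of directions, then a covering run of length $L_k$ exists, with $L_k \le (\poly\cdot L_{k-1})^{\Oo(k)}$. Here ``geometrically small'' means that the projection onto a complementary family of functionals is below a threshold $B$. For $k=0$, all the relevant functionals are bounded, so every counter is bounded by $\poly(d,n,M,\infnorm{\vec{y}})^{\Oo(\sumdim)}$ plus the drift from connecting transitions. The number of distinct configurations is then controlled by the at most $\sumdim$ free parameters in $W$, giving at most $n\cdot B^{\sumdim}$ configurations, and a shortest run repeats none of them.

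For the inductive step we follow Rackoff. We cut the run at the first configuration at which some counter exceeds a threshold $B_k = (M\cdot L_{k-1}+\infnorm{\vec{y}})\cdot\poly$. Before this point the run is geometrically small, so it is shortened by the counting argument above. After it, the large counter can never be decreased below $\infnorm{\vec{y}}$ within $L_{k-1}$ steps. We therefore project it away, and with it the direction in $W$ it certifies; the projected system has cycle space of dimension at most $k-1$ restricted to the remaining counters. Applying the induction hypothesis to the suffix gives length $L_{k-1}$.

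The main obstacle is this last step. Dropping one coordinate need not reduce the cycle-space dimension, since the large counter may be a linear combination of others with negative coefficients. We must instead choose the counter to drop so that the projection really kills a dimension of $W$, while every other counter that is not reduced stays controlled. First try: take $\vec{w}$-coordinates in the clean basis, and at each stage drop a whole class of counters whose values are determined (up to bounded error) by one new basis functional. This uses \cref{clm:linear_inequalities} with rank $\le \sumdim$ to bound the coefficients, so that ``large'' in one counter of the class forces ``large'' in all of them, and gives the recurrence $L_k \le (\poly(d,n,M,\infnorm{\vec{y}}) L_{k-1})^{k+1}$. Solving it yields $L_\sumdim \le \poly(d,n,M,\infnorm{\vec{y}})^{(\sumdim+1)^{\sumdim+1}}$, matching $f(\sumdim)$.
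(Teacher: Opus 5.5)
You correctly identify the crux (dropping one large counter need not lower $\dim \cyclespace{V}$), but the fix you sketch does not work. You use a single fixed clean basis and cut at the first configuration where \emph{some} counter exceeds $B_k$. Nothing then forces the other counters in the support of the relevant basis vector to be large. The configuration at the cut is arbitrary, since the bound must be independent of $\vec{x}$, and the other basis directions contribute with either sign.

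For example, let $\cyclespace{V} = \QQ(\set{(1,1)})$ and let the current configuration be $\Config{q}{v}$ with $\vec{v} = (N,0)$ for huge $N$. Counter $1$ is large while counter $2$ is $0$. Projecting away counter $1$ alone leaves a one-dimensional cycle space, and projecting away both is unjustified because counter $2$ is small. So ``large in one counter of the class forces large in all of them'' is false, and \cref{clm:linear_inequalities}, which only bounds small solutions of inequality systems, cannot supply it.

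Relatedly, your inductive statement (``small in all but a $k$-dimensional subspace of directions'') is never made precise. In the paper the hypothesis is simply the lemma itself, applied to the projected VASS, which is a genuine VASS of geometric dimension at most $\sumdim-1$ started from an arbitrary configuration.

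The missing idea is to change both the notion of smallness and the cut point. Call $\Config{q}{v}$ $C$-small if \emph{there exists} a clean basis of $\cyclespace{V}$ whose distinguished coordinates $K$ satisfy $\vec{v}[k] < C$ for all $k \in K$. A clean basis is determined by $K$, so there are at most $d^{\sumdim}$ of them. Together with \cref{clm:path-difference}, this still bounds the number of reachable $C$-small configurations by $n(dC)^{\sumdim}$. In the example, $(N,0)$ is small via $K=\set{2}$, so no cut occurs there.

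Now cut at the first configuration that is large for \emph{every} clean basis. An exchange argument (\cref{clm:c-large}) then gives the set of counters you need. Choose the clean basis minimising $\onenorm{\proj{K}{\vec{v}}}$ and an index $i$ with $\vec{v}[k_i] \geq C$. If some $j$ with $\vec{b}_i[j] \neq 0$ had $\vec{v}[j] < C$, pivoting $k_i$ to $j$ would give a clean basis with strictly smaller sum, a contradiction. Hence the whole support $J$ of $\vec{b}_i$ consists of counters $\geq C = M L_{\sumdim-1} + \infnorm{\vec{y}}$, all of which can be safely ignored for $L_{\sumdim-1}$ steps. Projecting away $J$ annihilates $\vec{b}_i$, so the geometric dimension drops by one. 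This yields $L_{\sumdim} = n(d(M L_{\sumdim-1} + \infnorm{\vec{y}}))^{\sumdim} + L_{\sumdim-1}$, with no need for bounds on basis coefficients.
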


Our proof of~\cref{thm:coverability-sum-dim} has the same structure as Rackoff's original proof~{\cite[Section 3]{Rackoff78}} that coverability in VASS is witnessed by double exponential length runs.
Roughly speaking, the proof goes as follows.
One observes that if a counter becomes ``sufficiently large'', then there is no need to worry about that counter dropping below zero for the remainder of the run.
Accordingly, the proof goes by induction on the dimension (at each step, another counter is ``ignored'').
One of the core components in the proof is the definition of ``small configurations'', where a configuration $\Config{q}{v}$ that is not small has a large counter value $\vec{v}[i]$.

The proof of~\cref{thm:coverability-sum-dim} follows suit, though the definition of what makes a configuration small is noticeably different.
Here is where we make our main technical contribution; we define ``geometrically small configurations'' (\cref{def:small-configurations}).
Towards proving~\cref{thm:coverability-sum-dim}, after defining geometrically small configurations, we will first state and prove some intermediate facts that will help us then prove~\cref{lem:rackoff} (which is the core lemma behind~\cref{thm:coverability-sum-dim}).

\begin{definition}[Clean basis]
\label{def:clean-basis}
    Let $U \sset \QQ^d$ be a $g$-dimensional vector space.
    We say that a basis $B = \set{\vec{b}_1, \ldots, \vec{b}_g}$ of $U$ is \emph{clean} if there exist coordinates $K = \set{k_1, \ldots, k_g} \sset [d]$ such that $\proj{K}{\vec{b}_1}, \ldots, \proj{K}{\vec{b}_g}$ forms the $g$-dimensional identity matrix. Coordinates $K$ are called the \emph{distinguished coordinates} of $B$.
\end{definition}

Observe that any basis $B$ of a $g$-dimensional vector space $U \sset \QQ^d$ can be turned into a clean basis using a Gaussian-elimination-style algorithm. 

\begin{definition}[Geometrically small configurations]
\label{def:small-configurations}
    A vector $\vec{v} \in \N^d$ is \emph{$C$-small} with respect to $g$-dimensional vector space $U \subseteq \Q^d$ if there exists a clean basis $B = \set{\vec{b}_1, \ldots, \vec{b}_g}$ of $U$ with the set $K \subseteq [d]$ of $g$ distinguished coordinates, such that $\vec{v}[i] < C$ for all $i \in K$.
    Moreover, we say that a configuration $\Config{q}{v}$ of a VASS $V$ is \emph{$C$-small} if $\vec{v}$ is $C$-small with respect to $\cyclespace{V}$.
    If a vector $\vec{v}$ or a configuration $\Config{q}{v}$ is not $C$-small then we call it \emph{$C$-large}.
\end{definition}

Now, similar to Rackoff's original proof, we would like to bound the number of $C$-small configurations.
Unfortunately, there can be an infinite number of $C$-small configurations as there are no restrictions on the coordinates that are not distinguished.
Instead, we will bound the number of $C$-small configurations that are \emph{reachable} from some initial configuration (\cref{clm:small-configurations}).

Before stating~\cref{clm:small-configurations}, we introduce a way of decomposing counter value vectors that will be useful for~\cref{clm:small-configurations} and later.
We can describe a vector $\vec{v}$ of counter values of a configuration $\Config{q}{v}$ as the sum $\vec{v} = \vec{x} + \vec{r}_q + \vec{w}$, where $\vec{x}$ are the starting counter values, $\vec{r}_q$ is the effect of some path from $s$ to $q$, and what remains is $\vec{w}$ that belongs to $\cyclespace{V}$.
In order to meaningfully use this decomposition, we need to argue that we can arbitrarily nominate a path from $s$ to $q$ which allows us to fix $\vec{r}_q$, for every state $q$.

\begin{restatable}{claim}{pathDifference}\label{clm:path-difference}
    Let $p, q \in Q$ be two states and let $\rho_1, \rho_2$ be two paths from $p$ to $q$.
    The difference between the effects of $\rho_1$ and $\rho_2$ belongs to $\cyclespace{V}$.
\end{restatable}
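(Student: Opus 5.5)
The plan is to use the standing assumption that $V$ is a chain of SCCs $S_1, \ldots, S_m$, where consecutive SCCs are joined by a single transition. This assumption is genuinely needed. In an arbitrary VASS, two acyclic paths $p \to a \to q$ and $p \to b \to q$ through different intermediate states can have effects whose difference lies outside $\cyclespace{V}$.

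First, I handle the case where $p$ and $q$ lie in the same SCC $S$, and the two paths $\rho_1, \rho_2$ are any paths inside $S$ from $u$ to $w$. By strong connectivity I fix a path $\sigma$ in $S$ from $w$ back to $u$. Then $\rho_1\sigma$ and $\rho_2\sigma$ are cycles, so their effects lie in $\cyclespace{V}$. Their difference is
\[
\eff(\rho_1\sigma)-\eff(\rho_2\sigma)=\eff(\rho_1)-\eff(\rho_2),
\]
and since $\cyclespace{V}$ is a vector space, this difference belongs to it.

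Next, I treat the general case. Let $p \in S_i$ and $q \in S_j$ with $i \le j$. Any path from $p$ to $q$ must visit $S_i, S_{i+1}, \ldots, S_j$ in order, because the SCC graph is a chain. It also leaves each $S_\ell$ ($i \le \ell < j$) via the unique connecting transition $e_\ell$ from some fixed state $o_\ell \in S_\ell$ to some fixed state $\iota_{\ell+1} \in S_{\ell+1}$. Hence each $\rho_k$ decomposes uniquely as
\[
\rho_k=\tau^{(k)}_i\, e_i\, \tau^{(k)}_{i+1}\, e_{i+1}\cdots e_{j-1}\,\tau^{(k)}_j ,
\]
where $\tau^{(k)}_\ell$ is a path inside $S_\ell$. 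Its endpoints are the same for $k=1,2$: $\tau^{(k)}_i$ runs from $p$ to $o_i$, $\tau^{(k)}_\ell$ runs from $\iota_\ell$ to $o_\ell$ for $i<\ell<j$, and $\tau^{(k)}_j$ runs from $\iota_j$ to $q$. If $i=j$, the path lies entirely in one SCC, which is the first case.

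Finally, I take the difference of the two decompositions. The connecting transitions cancel, leaving
\[
\eff(\rho_1)-\eff(\rho_2)=\sum_{\ell=i}^{j}\bigl(\eff(\tau^{(1)}_\ell)-\eff(\tau^{(2)}_\ell)\bigr).
\]
Each summand lies in $\cyclespace{V}$ by the single-SCC case, so the sum does too.

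The only delicate point is justifying that both paths share the same entry and exit states in every SCC. This follows from there being exactly one transition between consecutive SCCs and no other inter-SCC transitions, so that a path can never return to an earlier SCC.
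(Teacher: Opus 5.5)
Your proof is correct and follows essentially the same route as the paper. Both arguments close paths within a single SCC by a return path to obtain cycles, then, in the general case, decompose both paths along the chain of SCCs, cancel the shared connecting transitions, and apply the single-SCC case to each segment.
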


\begin{proof}[Proof Sketch]
    First, consider the case when $p$ and $q$ are in the same strongly connected component.
    Let $\tau$ be any path from $q$ back to $p$.
    Then $\rho_1 \tau$ is a cycle from $p$ to $p$, and so is $\rho_2 \tau$. 
    Thus the difference of the effects of $\rho_1 \tau$ and $\rho_2 \tau$ belongs to $\cyclespace{V}$.
    Clearly, the difference in effect between \(\rho_1 \tau\) and \(\rho_2 \tau\) is the same as between the effects of $\rho_1$ and $\rho_2$.
    
    Since \(V\) is a line of SCCs, \(\rho_1\) and \(\rho_2\) visit the same SCCs. We use the above case for the segments of \(\rho_1\) and \(\rho_2\) in the respective SCCs, for the computation see~\cref{app:coverability}.
\end{proof}

\begin{restatable}{claim}{smallConfigurations}\label{clm:small-configurations}
    For every configuration $\Config{p}{u}$ and every $C \in \N$, there are at most $n \cdot (d \cdot C)^\sumdim$ many $C$-small configurations reachable from $\Config{p}{u}$.
\end{restatable}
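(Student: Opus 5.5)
We fix a state $q$ and count the $C$-small configurations in state $q$ reachable from $\Config{p}{u}$. Summing over the $n$ states then gives the stated bound $n \cdot (d \cdot C)^\sumdim$.

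First we use the decomposition introduced before \cref{clm:path-difference}, now based at $\Config{p}{u}$. Fix once and for all a path $\rho_q$ from $p$ to $q$ with effect $\vec{r}_q$; if no such path exists, nothing in state $q$ is reachable. Any run from $\Config{p}{u}$ to $\Config{q}{v}$ follows some path from $p$ to $q$. By \cref{clm:path-difference}, its effect differs from $\vec{r}_q$ by an element of $\cyclespace{V}$. Hence every reachable configuration in state $q$ has the form
\[
\vec{v} = \vec{u} + \vec{r}_q + \vec{w}, \qquad \vec{w} \in \cyclespace{V}.
\]
So all such $\vec{v}$ lie in the affine space $\vec{a}_q + U$, where $\vec{a}_q = \vec{u} + \vec{r}_q$ and $U = \cyclespace{V}$, which has dimension $\sumdim$.

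Next, fix a clean basis $B$ of $U$ with distinguished coordinates $K$. We claim that a vector $\vec{v} \in \vec{a}_q + U$ is uniquely determined by $\proj{K}{\vec{v}}$. Write $\vec{v} = \vec{a}_q + \sum_j \lambda_j \vec{b}_j$. Because $\proj{K}{\vec{b}_1}, \ldots, \proj{K}{\vec{b}_\sumdim}$ form the identity matrix, we get $\lambda_j = \vec{v}[k_j] - \vec{a}_q[k_j]$. Thus the coefficients $\lambda_j$, and hence $\vec{v}$, are recovered from $\proj{K}{\vec{v}}$.

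Now suppose $\vec{v}$ is $C$-small via $B$. Then $\proj{K}{\vec{v}} \in \{0, \ldots, C-1\}^\sumdim$, since $\vec{v} \in \N^d$. So there are at most $C^\sumdim$ such $\vec{v}$ in state $q$.

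The remaining issue is that $C$-smallness allows the clean basis to vary with $\vec{v}$. We handle this by observing that a clean basis of $U$ is determined by its set $K$ of distinguished coordinates. Indeed, any two clean bases with the same $K$ are both the unique basis of $U$ whose $K$-projection is the identity, because $\proj{K}{\cdot}$ restricted to $U$ is injective whenever one such basis exists. Therefore the relevant clean bases are indexed by subsets $K \subseteq [d]$ of size $\sumdim$. There are at most $\binom{d}{\sumdim} \le d^\sumdim$ of these.

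A union bound over the choice of $K$ now gives at most $d^\sumdim \cdot C^\sumdim$ reachable $C$-small configurations in state $q$. Summing over the $n$ states gives the claim. The only subtle point is this reduction of arbitrary clean bases to subsets $K$; the remaining steps follow directly from \cref{clm:path-difference}.
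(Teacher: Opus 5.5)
Your proposal is correct and follows essentially the same route as the paper: fix $\vec{r}_q$ via \cref{clm:path-difference}, note that a clean basis is determined by its distinguished coordinates $K$ (at most $d^{\sumdim}$ choices), show that for fixed $K$ a reachable $\vec{v}$ in state $q$ is determined by $\proj{K}{\vec{v}} \in [0,C-1]^{\sumdim}$, and multiply by $n$. You also give slightly more justification than the paper for why the clean basis is unique given $K$, namely the injectivity of $\proj{K}{\cdot}$ on $\cyclespace{V}$.
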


\begin{proof}[Proof Idea]
    The idea is to prove two facts: 1) For every set \(K \subseteq [d]\) of distinguished coordinates, there is at most one clean basis \(B\). 2) If \(\Config{q}{v}, \Config{q}{v'}\) are both reachable from \(\Config{p}{u}\), and \(\proj{K}{\vec{v}}=\proj{K}{\vec{v}'}\), then \(\vec{v}=\vec{v}'\) (here we use~\cref{clm:path-difference}).
    
    This leads to the given bound of \(|Q| \cdot \binom{d}{g} \cdot C^{\sumdim} \leq n \cdot (d \cdot C)^{\sumdim}\) since any \(C\)-small configuration is uniquely determined by its state \(q \in Q\), the subset \(K \subseteq [d]\) s.t. it is small w.r.t. the corresponding clean basis \(B\) and the vector \(\proj{K}{\vec{v}} \in [0, C-1]^{\sumdim}\).
\end{proof}

Now, again similar to Rackoff's original proof, we would like to argue that if a configuration is $C$-large, then there is some counter that is sufficiently large (so it can be ``ignored'').
We aim to prove something stronger.
Namely, if a configuration is $C$-large, then there is a collection of counters which are all sufficiently large and by ``ignoring'' them, the geometric dimension of the VASS decreases.
The following claim states that if a vector $\vec{v}$ is $C$-large, then there is some clean basis $B$ and a vector from this basis $\vec{b}_i$ which can be used to identify which coordinates of $\vec{v}$ are large.

\begin{claim} \label{clm:c-large}
    Let $U \sset \QQ^d$ be a $g$-dimensional vector space, let $\vec{v} \in \N^d$, and let $C \in \NN$.
    If $\vec{v}$ is $C$-large with respect to $U$, then there exists a clean basis $B = \set{\vec{b}_1, \ldots, \vec{b}_g}$ of $U$ and an index $i \in [g]$ such that, for every $j \in [d]$, $\vec{b}_i[j] \neq 0$ implies that $\vec{v}[j] \geq C$. 
\end{claim}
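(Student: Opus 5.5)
The idea is to split the coordinates into the large ones $L \coloneqq \set{j \in [d] : \vec{v}[j] \geq C}$ and the small ones $S \coloneqq [d] \setminus L$. We then build a clean basis whose distinguished coordinates use as many small coordinates as possible. Being $C$-large will force at least one distinguished coordinate to lie in $L$, and the basis vector attached to such a coordinate will turn out to vanish on all of $S$. Throughout, a set $K \subseteq [d]$ with $|K| = g$ is the set of distinguished coordinates of a clean basis of $U$ if and only if the restricted projection $\pi_K|_U : U \to \QQ^K$ is injective, hence bijective. In that case the clean basis is $\vec{b}_i \coloneqq (\pi_K|_U)^{-1}(\vec{e}_{k_i})$.

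\textbf{Step 1 (setup).} Let $r \coloneqq \rank(\pi_S|_U)$ and let $W \coloneqq \ker(\pi_S|_U) = \set{\vec{u} \in U : \vec{u}[j] = 0 \text{ for all } j \in S}$, so $\dim W = g - r$.

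\textbf{Step 2 (show $W \neq \set{\vec{0}}$).} Suppose $r = g$. Choose $K_S \subseteq S$ with $|K_S| = g$ such that $\pi_{K_S}|_U$ has rank $g$; this is possible because the image $\pi_S(U)$ has dimension $g$, so some $g$ of its coordinates are independent. Then $K_S$ is the set of distinguished coordinates of a clean basis, and all of them satisfy $\vec{v}[j] < C$. So $\vec{v}$ would be $C$-small, a contradiction. Hence $r < g$ and $\dim W \geq 1$.

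\textbf{Step 3 (build the clean basis).} Choose $K_S \subseteq S$ with $|K_S| = r$ such that $\pi_{K_S}|_U$ still has rank $r$. Since $\ker(\pi_{K_S}|_U) \supseteq W$ and both kernels have dimension $g - r$, we get $\ker(\pi_{K_S}|_U) = W$. Every vector of $W$ vanishes on $S$, so $\pi_L|_W$ is injective. Therefore we can choose $K_L \subseteq L$ with $|K_L| = g - r$ such that $\pi_{K_L}|_W$ is bijective; this is Gaussian elimination inside $W$.

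Put $K \coloneqq K_S \cup K_L$, so $|K| = g$. The projection $\pi_K|_U$ is injective: if $\pi_K(\vec{u}) = \vec{0}$, then $\pi_{K_S}(\vec{u}) = \vec{0}$ gives $\vec{u} \in W$, and then $\pi_{K_L}(\vec{u}) = \vec{0}$ gives $\vec{u} = \vec{0}$. So $K$ is the set of distinguished coordinates of a clean basis $B$.

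\textbf{Step 4 (pick the index).} Take any $i$ with $k_i \in K_L$; such an $i$ exists since $g - r \geq 1$. Then $\pi_{K_S}(\vec{b}_i) = \vec{0}$, so $\vec{b}_i \in W$, so $\vec{b}_i[j] = 0$ for every $j \in S$. Equivalently, $\vec{b}_i[j] \neq 0$ implies $j \in L$, that is, $\vec{v}[j] \geq C$, as required.

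The only delicate point is the choice of $K_S$ in Step 3: it must have full rank $r$ on $U$, not merely on $S$, so that the kernel equality $\ker(\pi_{K_S}|_U) = W$ holds. This is standard linear algebra: pick $r$ independent rows of the coordinate functionals restricted to $U$.
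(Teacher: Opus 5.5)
Your argument is correct, and it reaches the claim by a different route than the paper. The paper argues extremally. It fixes a clean basis whose distinguished coordinates $K$ minimise $\onenorm{\proj{K}{\vec{v}}}$, and uses $C$-largeness to find a distinguished coordinate $k_i$ with $\vec{v}[k_i] \geq C$. It then shows that if $\vec{b}_i$ had a nonzero entry at some $j$ with $\vec{v}[j] < C$, one Gaussian pivot on $\vec{b}_i[j]$ would produce a clean basis with distinguished coordinates $(K \setminus \set{k_i}) \cup \set{j}$ and strictly smaller weight.

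You instead build the basis directly by rank counting. You take a set $K_S$ of small coordinates that is independent on $U$ and of maximal size $r = \rank(\pi_S|_U)$. You identify $\ker(\pi_{K_S}|_U) = W$ by comparing dimensions, and then complete $K_S$ with large coordinates inside $W$. $C$-largeness is used only once, to get $r < g$.

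Your version buys a constructive procedure and a sharper conclusion. Exactly $g - r \geq 1$ vectors of a single clean basis are supported on the large coordinates. This is the maximum possible for any clean basis, since such vectors lie in $W$, and it depends only on $U$ and $\vec{v}$. That is the kind of statement needed for Claim~\ref{clm:c-thick}. There you would additionally need one $K$ that is rank-maximal on all the nested sets $\set{j : \vec{v}[j] < C_t}$ at once. Choosing coordinates greedily in increasing order of $\vec{v}$ gives exactly this, and so does the paper's minimum-weight basis.

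The paper's exchange argument needs no dimension counting and is reused almost verbatim for Claim~\ref{clm:c-thick}. In matroid language, both proofs pick a basis of the linear matroid formed by the coordinate functionals on $U$. The paper minimises total $\vec{v}$-weight; you maximise the intersection with the small coordinates.
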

\begin{proof}
    Assume, for sake of contradiction, that $\vec{v}$ is $C$-large and there does not exist a clean basis $B$ and an index $i \in [g]$ such that, for every $j \in [d]$, $\vec{b}_i[j] \neq 0$ implies that $\vec{v}[j] \geq C$.
    Consider a clean basis $B = \set{\vec{b}_1, \ldots, \vec{b}_g}$ with a distinguished set of coordinates $K = \set{k_1, \ldots, k_g}$ for which $N \coloneqq \onenorm{\proj{K}{\vec{v}}}$ is minimised. 
    
    Now, our goal is to construct a new clean basis $B'$ with a distinguished set of coordinates $K'$ such that $\onenorm{\proj{K'}{\vec{v}}}$ is less than $N$.
    Since $\vec{v}$ is $C$-large, there exists $i \in [g]$ such that $\vec{v}[k_i] \geq C$.
    By assumption, there exists $j \in [d]$ such that $\vec{b}_i[j] \neq 0$ and $\vec{v}[j] < C$. 
    Now, we will define $B' = \set{\vec{b}'_1, \ldots, \vec{b}'_g}$ by
    \begin{equation*}
        \vec{b}'_i \coloneqq \frac{\vec{b}_i}{\vec{b}_i[j]} 
        \text{\; and \;}
        \vec{b}'_k \coloneqq \vec{b}_k - \vec{b}_k[j]\cdot\vec{b}'_i, \text{ for all } k \neq i.
    \end{equation*}
    We identify the set of distinguished coordinates $K' = \set{k_1, \ldots, k_{i-1}, j, k_{i+1}, \ldots, k_g}$.
    
    We shall now verify that indeed $B'$ is a clean basis of $U$ with distinguished coordinates $K'$.
    First, notice that $\vec{b}'_i$ is defined to be equal to a linear scalar of $\vec{b}_i$.
    This means that the only non-zero coordinate of $\proj{K'}{\vec{b}'_i}$ is the $j$-th coordinate; and clearly $\vec{b}'_i[j] = 1$.
    Second, we will argue that the distinguished coordinate for $\vec{b}'_k$ is the same as the distinguished coordinate for $\vec{b}_k$.
    Suppose that $k \in K$ was the distinguished coordinate for $\vec{b}_k$ (i.e.\ $\proj{K}{\vec{b}_k} = (0, \ldots, 0, 1, 0, \ldots, 0) = \vec{e}_k$ is the $k$-th standard basis vector).
    We verify that $\proj{K'}{\vec{b}'_k} = \vec{e}_k$ as well.
    \begin{itemize}
        \item For the $k$-th coordinate: $\vec{b}'_k[k] = \vec{b}_k[k] - \vec{b}_k[j] \cdot \vec{b}'_i[k] = 1 - 0$.
        \item For the $j$-th coordinate: $\vec{b}'_k[j] = \vec{b}_k[j] - \vec{b}_k[j] \cdot \vec{b}'_i[j] = \vec{b}_k[j] - \vec{b}_k[j] = 0$.
        \item For the $k'$-th coordinate ($k' \in K' \setminus\set{k, j}$): $\vec{b}'_k[k'] = \vec{b}_k[k'] - \vec{b}_k[j]\cdot \vec{b}'_i[k'] = 0 - 0$.
    \end{itemize}
    
    Finally, to see that $\onenorm{\proj{K'}{\vec{v}}} < N$, observe that (i) $K'$ only differs from $K$ by the replacement of $k_i$ with $j$, (ii) $\vec{v}[k_i] \geq C$, and 
    (iii) $\vec{v}[j] < C$.
    Thus, $\onenorm{\proj{K}{\vec{v}}} = \onenorm{\proj{K'}{\vec{v}}} - \vec{v}[k_i] + \vec{v}[j] < \onenorm{\proj{K}{\vec{v}}} = N$ 
    (contradicts the minimality of $N$).
\end{proof}

With the definition of geometrically small configurations in hand and an understanding that configurations that are not small will have counters with large values, we are ready to prove~\cref{thm:coverability-sum-dim}.
As a first step towards proving~\cref{thm:coverability-sum-dim}, we shall define the bound on the length of runs that we would like to establish.
\begin{equation*}
    L_i \coloneqq
    \begin{cases}
        n - 1 & \text{ for } i = 0 \\
        n(d(\infnorm{\vec{y}} + M\cdot L_{i-1}))^i + L_{i-1} & \text{ for } i \geq 1
    \end{cases}
\end{equation*} 
Our goal is to prove that coverability is witnessed by runs of length at most $L_{\sumdim}$ (\cref{lem:rackoff}).

\begin{lemma} \label{lem:rackoff}
    If there is a run in $V$ from $\Config{s}{x}$ that covers $\Config{t}{y}$, then there is a run from $\Config{s}{x}$ that covers $\Config{t}{y}$ of length at most $L_{\sumdim}$.
\end{lemma}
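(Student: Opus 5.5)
We prove \cref{lem:rackoff} by induction on the geometric dimension, in the spirit of Rackoff's argument. The statement we actually prove is: for every VASS $V'$ (a line of SCCs) with geometric dimension $i$, at most $n$ states and transition norm at most $M$, in dimension $d$, and for every initial configuration and every target $\Config{t}{y}$, if some run covers the target then some run of length at most $L_i$ does.

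\textbf{Base case $i=0$.} Every cycle has effect $\vec{0}$, so removing a cycle from a run changes no later counter value. Hence any covering run can be shortened to a simple path, of length at most $n-1 = L_0$.

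\textbf{Inductive step.} Let $i\ge 1$ and fix a covering run $\rho$. Set $C \coloneqq \infnorm{\vec{y}} + M\cdot L_{i-1}$. There are two cases.

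\emph{Case 1: every configuration on $\rho$ is $C$-small.} If some configuration repeats, we cut out the segment between the two occurrences; this gives a shorter valid covering run. So we may assume $\rho$ has no repetitions. All its configurations are reachable from $\Config{s}{x}$, so \cref{clm:small-configurations} bounds their number by $n(dC)^{i}$. Hence $\abs{\rho} \le n(dC)^i \le L_i$.

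\emph{Case 2: some configuration on $\rho$ is $C$-large.} Let $c=\Config{q}{v}$ be the first $C$-large configuration, so $\rho = \rho_1\rho_2$ with $\rho_1$ ending at $c$.
\begin{itemize}
\item The prefix of $\rho_1$ before $c$ consists only of $C$-small configurations. By the no-repetition argument of Case 1, $\rho_1$ can be shortened to length at most $n(dC)^i$.
\item For $\rho_2$, apply \cref{clm:c-large} to $\vec{v}$ and $U=\cyclespace{V}$. This gives a clean basis $B$ and an index $i_0$ such that $\vec{v}[j]\ge C$ for every $j$ in the support $J$ of $\vec{b}_{i_0}$.
\item Let $V'$ be $V$ with all coordinates in $J$ deleted. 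We must check that $\dim \cyclespace{V'} \le i-1$. Projecting away $J$ kills $\vec{b}_{i_0}$, and the projection of $\cyclespace{V}$ is $\cyclespace{V'}$ because projection commutes with taking cycle effects. So the dimension drops by at least one.
\item The suffix $\rho_2$ is a run of $V'$ from the projection of $c$ covering the projection of the target. By induction it can be replaced by a covering run $\rho_2'$ in $V'$ of length at most $L_{i-1}$. (The induction is over all VASS, and $V'$ is again a line of SCCs with the same parameters $n,M$ and dimension at most $d$.)
\item Lift $\rho_2'$ back to $V$, starting from $c$. Each coordinate in $J$ starts at least $C$ and decreases by at most $M\cdot L_{i-1}$. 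It therefore stays nonnegative throughout and ends at least $\infnorm{\vec{y}}$. The coordinates outside $J$ are handled by $\rho_2'$.
\end{itemize}
The total length is at most $n(dC)^i + L_{i-1} = L_i$.

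\textbf{Main obstacle.} The delicate step is the dimension drop in Case 2. Deleting the coordinates in $J$ must lower $\dim\cyclespace{\cdot}$, and it does because an entire basis vector becomes zero. This is exactly why \cref{clm:c-large} asks for all coordinates in the support of $\vec{b}_{i_0}$ to be large, not just one.

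A second point needs care: \cref{clm:small-configurations} counts configurations reachable from a fixed starting configuration. We apply it in $V$ with the start $\Config{s}{x}$, so the induction must keep the bound relative to $\cyclespace{V}$ of the current VASS. This is consistent because the claim holds for any start configuration.
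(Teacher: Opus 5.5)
Your proof is correct and follows essentially the same route as the paper: induction on the geometric dimension, splitting at the first $C$-large configuration with $C = M\cdot L_{\sumdim-1} + \infnorm{\vec{y}}$, bounding the prefix by repetition-freeness and \cref{clm:small-configurations}, and handling the suffix by projecting away the support of the basis vector given by \cref{clm:c-large}. The differences are cosmetic: you shorten the two parts directly instead of starting from a globally shortest run, and you treat explicitly the case with no $C$-large configuration, which the paper's proof leaves implicit.
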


\begin{proof}
    Proof by induction on the geometric dimension $\sumdim$.
    For $\sumdim = 0$, we know that all cycles in $V$ have zero effect.
    This means that the shortest run between any pair of configurations does not visit a state more than once.
    Hence, if there is a run in $V$ from $\Config{s}{x}$ that covers $\Config{t}{y}$, then there is one of length at most $L_0 = n-1$.
    
    For $\sumdim \geq 1$, we shall assume~\cref{lem:rackoff}
    holds for VASS with geometric dimension $\sumdim-1$.
    Now, assume that there is a run in $V$ from $\Config{s}{x}$ that covers $\Config{t}{y}$.
    Let $\Config{s}{x} \xrightarrow{\pi} \Config{t}{y'}$, for some $\vec{y}' \geq \vec{y}$, be the shortest run from $\Config{s}{x}$ that covers $\Config{t}{y}$.
    We will split this run about the first configuration that is $C$-large for $C = M\cdot L_{\sumdim-1} + \infnorm{\vec{y}}$.
    Precisely, consider the decomposition: $\Config{s}{x} \xrightarrow{\pismall} \Config{q}{m} \xrightarrow{\pitail} \Config{t}{y'}$, where $\Config{q}{m}$ is the first $C$-large configuration.
    
    Since $\Config{s}{x} \xrightarrow{\pi} \Config{t}{y'}$ has minimal length, we know that no configuration is repeated.
    This means that the length of $\pismall$ is at most $n \cdot (d \cdot C)^{\sumdim} = n(d(M\cdot L_{\sumdim-1} + \infnorm{\vec{y}}))^{\sumdim}$ (by~\cref{clm:small-configurations}).
    
    Now, we will use the inductive assumption to bound the length of $\pitail$.
    Since $\Config{q}{m}$ is $(M \cdot L_{\sumdim-1} + \infnorm{\vec{y}})$-large, by~\cref{clm:c-large}, there exists a clean basis $B = \set{\vec{b}_1, \ldots, \vec{b}_{\sumdim}}$ and an index $i \in [g]$ such that, for every $j \in [d]$, if $\vec{b}_i[j] \neq 0$, then $\vec{m}[j] \geq M\cdot L_{\sumdim-1} + \infnorm{\vec{y}}$.
    In fact, we shall define $J \coloneqq \set{j \in [d] : \vec{b}_i[j] \neq 0}$.
    Now, we shall remove the counters in $J$ from the VASS $V$.
    Let $V' = (Q, T')$ be the $(d - \abs{J})$-VASS obtained by projecting all transition vectors to the coordinates $[d] \setminus J$.
    Precisely, $T' = \set{(p, \proj{[d]\setminus J}{\vec{u}}, p') : (p, \vec{u}, p') \in T}$.
    Recall that $B = \set{\vec{b}_1, \ldots, \vec{b}_{\sumdim}}$ was a basis of $\cyclespace{V}$.
    We will argue that 
    \begin{equation*}
        B' = \set{\proj{[d]\setminus J}{\vec{b}_1}, \ldots, \proj{[d]\setminus J}{\vec{b}_{i-1}}, \proj{[d]\setminus J}{\vec{b}_{i+1}}, \ldots, \proj{[d]\setminus J}{\vec{b}_{\sumdim}}}
    \end{equation*}
    is a generating set of $\cyclespace{V'}$ which suffices to prove that the geometric dimension of $V'$ is $\sumdim-1$.
    Since the transition vectors of $V'$ are the transition vectors of $V$ projected to counters $[d] \setminus J$, we know that $\cyclespace{V'}$ is just $\cyclespace{V}$ projected to counters $[d] \setminus J$.
    This means that the span of $\set{\proj{[d]\setminus J}{\vec{b}_1}, \ldots, \proj{[d]\setminus J}{\vec{b}_{\sumdim}}}$ is $\cyclespace{V'}$.
    Now, notice that $\proj{[d]\setminus J}{\vec{b}_i} = \vec{0}$ (by definition of $J$).
    We can therefore remove $\proj{[d]\setminus J}{\vec{b}_i}$ from this set to obtain the generating set $B'$ of size $\sumdim - 1$.
    
    Now, we shall analyse the length of the suffix of the run in $V'$.
    Consider the configuration $q(\proj{[d]\setminus J}{\vec{m}})$.
    By the inductive assumption, we know that there is a run in $V'$ from $q(\proj{[d]\setminus J}{\vec{m}})$ that covers $t(\proj{[d]\setminus J}{\vec{y}})$ of length at most $L_{\sumdim-1}$.
    Now, we shall lift this run back to $V$ (by copying the appropriate transitions). 
    It remains to argue that the counters in $J$ remain above $0$ and cover the target $\vec{y}$.
    For this, recall that for all $j \in J$, $\vec{m}[j] \geq M \cdot L_{\sumdim-1} + \infnorm{\vec{y}}$.
    The most negative effect of a single transition is $-M$, thus over the suffix of length $L_{\sumdim-1}$, the greatest possible negative effect is $-M \cdot L_{\sumdim-1}$.
    This means that all counters in $J$ remain above zero, and at the end have counter values at least $\infnorm{\vec{y}}$ (which guarantees the target is covered).
    Thus, by following the same suffix from $\Config{q}{m}$ in $V$, we obtain a run in $V$ from $\Config{q}{m}$ that covers $\Config{t}{y}$ of length at most $L_{\sumdim-1}$.
    As $\pitail$ has minimal length, we therefore conclude that the length of $\pitail$ is at most $L_{\sumdim-1}$.
    
    Altogether, the length of $\pi$ is at most the length of $\pismall$ plus the length of $\pitail$ which is $n(d(M\cdot L_{\sumdim-1} + \infnorm{\vec{y}}))^{\sumdim} + L_{\sumdim-1} = L_\sumdim$ (as required).
\end{proof}

Now, to conclude this section by proving~\cref{thm:coverability-sum-dim}, we only need to argue that $L_{\sumdim}$ is indeed bounded by $\poly(d,n,M,\infnorm{\vec{y}})^{f(\sumdim)}$.
See~\cpageref{proof:coverability-sum-dim} in~\cref{app:coverability}.

\begin{remark}\label{rem:coverability-max-dim}
    As a corollary of~\cref{thm:coverability-sum-dim}, one can obtain double exponential length witnesses for coverability in terms of the SCC dimension and the number of states in the VASS.
    Recall that by~\cref{lem:sum-dim-le-max-dim-mult-num-states}, the product of $\maxdim$ and the number of states $n$ is an upper bound of $\sumdim$.
    Hence, if there is a run from $\Config{s}{x}$ that covers $\Config{t}{y}$, then there is a run from $\Config{s}{x}$ that covers $\Config{t}{y}$ of length at most $\poly(d,n,M,\infnorm{\vec{y}})^{f(n\cdot\maxdim)}$.
    Unfortunately, this means that even when the SCC dimension is fixed, the length of the shortest runs witnessing coverability can depend doubly-exponentially on the number of states $n$.
\end{remark}

\subsection{Improving Coverability Witness Length Bounds}

In this section we continue to study the coverability problem in VASS parameterised by geometric dimension $\sumdim$. 
Our main result here is an improved version of Theorem~\ref{thm:coverability-sum-dim} with the function $h(x) = 2^{x+1}-1$ improving upon function $f(x) = (x+1)^x$ from Theorem~\ref{thm:coverability-sum-dim}.

\begin{restatable}{theorem}{improvedCoverabilitySumDim}\label{thm:improved-coverability-sum-dim}
    Let $h: \NN \to \NN$, $h(x) \coloneqq 2^{x+1}-1$.
    If there is a run in $V$ from $\Config{s}{x}$ that covers $\Config{t}{y}$, then there exists a run in $V$ from $\Config{s}{x}$ that covers $\Config{t}{y}$ of length at most $\poly(d,n,M,\infnorm{\vec{y}})^{h(\sumdim)}$.
\end{restatable}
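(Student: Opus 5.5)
We follow the strategy of K\"{u}nnemann et al.~\cite{KunnemannMSSW23}, transferred to the geometric setting. The source of the $(x+1)^{x+1}$ bound in \cref{lem:rackoff} is the exponent $\sumdim$ in the count $n(dC)^{\sumdim}$ of $C$-small configurations, applied at every level of the induction. This gives the recurrence $L_i \approx (L_{i-1})^{i}$. To obtain $h(\sumdim)=2^{\sumdim+1}-1$ we want instead a recurrence of the form $L_i \le \poly(d,n,M,\infnorm{\vec{y}})\cdot L_{i-1}^2$. Unrolling it gives exponent $1+2+\dots+2^{\sumdim} = 2^{\sumdim+1}-1$.

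The plan is to replace ``first $C$-large configuration'' by a finer, threshold-per-level decomposition. Fix a chain of thresholds $C_{\sumdim} > C_{\sumdim-1} > \dots > C_1$. Consider a shortest covering run $\pi$, and for each configuration let its \emph{level} be the largest $i$ such that the configuration is $C_i$-small with respect to some clean basis in which only $i$ distinguished coordinates are constrained. Formally, we generalise \cref{def:small-configurations}: a vector is $(C,i)$-small if there is a clean basis $B$ of $\cyclespace{V}$ and $i$ distinguished coordinates of $B$ on which it is below $C$. For such configurations, \cref{clm:small-configurations} adapts to give at most $n(dC)^{i}$ reachable ones on the relevant segment, after projecting away the remaining $\sumdim-i$ directions.

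We then cut $\pi$ at the first point where a counter pattern of the \cref{clm:c-large} type becomes large with respect to threshold $C_{\sumdim}$, as in \cref{lem:rackoff}. The difference is in how the prefix is bounded: we do not bound it by $n(dC_{\sumdim})^{\sumdim}$, but apply the idea of \cite{KunnemannMSSW23} recursively to the prefix itself. Inside the prefix, each maximal segment on which some $C_{\sumdim-1}$-large direction persists is handled by the induction hypothesis in the projected VASS $V'$ of geometric dimension $\sumdim-1$, exactly as in the tail argument of \cref{lem:rackoff}. Segments on which everything is $C_{\sumdim-1}$-small are short by the counting claim. The induction statement to prove is therefore two-sided: \emph{a covering run exists from a configuration that is $C_i$-large in a direction killing one dimension, of length at most $L_{i-1}$}, together with \emph{a bound on the prefix length} expressed via $L_{i-1}$ and the thresholds. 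We choose $C_i = M\cdot L_{i-1}+\infnorm{\vec{y}}$ so that the lifted runs stay nonnegative.

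The main obstacle is making the prefix bound only quadratic in $L_{i-1}$ rather than $L_{i-1}^{\sumdim}$. The naive count of small configurations is inherently $C^{\sumdim}$. We would first try the potential-style argument of \cite{KunnemannMSSW23}: show that in a shortest run, between two consecutive ``jumps'' of the \cref{clm:c-large} index set, the run either is handled by induction (cost $L_{i-1}$) or revisits a bounded region of size $\poly\cdot L_{i-1}$ in a single new direction. This would give $\poly\cdot L_{i-1}$ jumps of cost $L_{i-1}$ each. Making ``one new direction at a time'' precise requires the basis-exchange step from \cref{clm:c-large} to be monotone along the run, which is the delicate part. If that fails, we would fall back on bounding cycle effects via \cref{clm:linear_inequalities}, whose bound depends only on rank $\le\sumdim$, and on rearranging cycles to shorten the prefix.
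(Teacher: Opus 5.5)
\paragraph{Verdict: there is a genuine gap.} Your target recurrence $L_i \le \poly\cdot L_{i-1}^2$ is the right one. However, the proposal never gives a mechanism that achieves it. The step you call ``the delicate part'' is the whole content of the theorem, and the two concrete ingredients you do state would fail.

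\textbf{The counting claim for $(C,i)$-small configurations is false.} Bounding only $i$ of the $\sumdim$ distinguished coordinates leaves the other $\sumdim-i$ unbounded. The proof of \cref{clm:small-configurations} pins down a reachable configuration only from its state, the set $K$, and \emph{all} $\sumdim$ values $\proj{K}{\vec{v}}$. So in general there are infinitely many reachable $(C,i)$-small configurations. ``Projecting away'' the remaining directions is legitimate only when those coordinates are guaranteed to be large enough to ignore for the rest of the run, and your definition does not provide this.

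\textbf{Prefix segments cannot be shortened by the induction hypothesis.} The hypothesis (as in \cref{lem:rackoff}) is about covering the fixed target $\Config{t}{y}$, with a bound depending on $\infnorm{\vec{y}}$. Applied to a segment, it would have to cover the segment's endpoint, whose norm can be of order $C_{\sumdim}$. Moreover, the lifted replacement only keeps the ignored counters above a threshold; it does not restore their values at the endpoint. Hence the rest of the original run need not be executable afterwards.

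\textbf{The missing idea: ignore several directions at once, at a lower threshold, and do not recurse on the prefix.} The paper fixes sorted thresholds $C_j = M\cdot K_j+\infnorm{\vec{y}}$ for $j=0,\ldots,\sumdim-1$. It calls $\Config{q}{v}$ thin if, for some clean basis with distinguished coordinates $K$, the $k$-th largest entry of $\proj{K}{\vec{v}}$ is below $C_{\sumdim-k}$ for every $k$. Two facts then drive the proof.
\begin{enumerate}
\item The uniqueness argument of \cref{clm:small-configurations} bounds the number of reachable thin configurations by roughly $n\,d^{\sumdim}\prod_{j<\sumdim}C_j$, up to a $\sumdim!$ factor for orderings. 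This is a product of \emph{different} thresholds, not $C_{\sumdim-1}^{\sumdim}$.
\item The $\ell_1$-minimal-basis exchange argument of \cref{clm:c-large} extends to thick configurations. It yields a clean basis and $k$ basis vectors such that $\vec{v}[j]\ge C_{\sumdim-k}$ for every coordinate $j$ in their supports. Projecting these supports away lowers the geometric dimension by $k$. The tail is then replaced by a covering run of length at most $K_{\sumdim-k}\le K_{\sumdim-1}$, which lifts safely because of the choice of $C_{\sumdim-k}$.
\end{enumerate}
Splitting the shortest covering run at its first thick configuration gives
\[
K_{\sumdim} = n\,d^{\sumdim}\prod_{j<\sumdim}(\infnorm{\vec{y}}+M K_j) + K_{\sumdim-1},
\]
that is, $K_{\sumdim}\approx\poly\cdot K_{\sumdim-1}^2$, which unrolls to exponent $2^{\sumdim+1}-1$. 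As the paper describes it, this simultaneous elimination of several moderately large counters is exactly the K\"{u}nnemann et al.\ idea. It is not a potential argument about monotone basis exchanges along the run.
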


Similar to how the proof of Theorem~\ref{thm:coverability-sum-dim} was based on the ideas introduced by Rackoff in~\cite{Rackoff78}, the proof of Theorem~\ref{thm:improved-coverability-sum-dim} is based on the ideas from a recent paper by Kunnemann et.\ al.~\cite{KunnemannMSSW23}, where it was shown that the length of a covering run can be bounded by $\poly(d, n, M, \infnorm{\vec{y}})^{2^{\Oo(d)}}$. Similarly as in~\cite{Rackoff78} the construction from~\cite{KunnemannMSSW23} uses an induction on the dimension $d$. It also uses the fact that if a counter becomes ``sufficiently large'' then there is no need to worry that this counter will drop below zero and we can simply ignore this counter. The main contribution of~\cite{KunnemannMSSW23} is an observation that if a few counters become sufficiently large together we can ignore all of them at once. Therefore, the threshold above which we ignore them is lower.
In other words: in~\cite{Rackoff78} a configuration is treated as large if one of its counters exceeds some threshold $C$, while in~\cite{KunnemannMSSW23} a configuration is treated as large if either one of its counters exceeds an appropriate threshold $C_{d-1}$, or two of its counters exceed some smaller threshold $C_{d-2}$, or three of its counters exceed some even smaller threshold $C_{d-3}$, etc., or all of its counters exceed a quite small threshold $C_0$. Towards generalising this result we call vectors which do not satisfy this assumption $\vec{C} = (C_0, \ldots, C_{d-1})$-bounded and generalise this notion to $\vec{C}$-thin.

\begin{definition}[Bounded and thin configurations]
\label{def:thin-configurations}
Let $\vec{C} = (C_0, C_1, \ldots, C_{g-1}) \in \N^g$ be a vector of constants for some $C_0 \leq C_1 \leq \ldots \leq C_{g-1}$.
A vector $\vec{v} \in \N^g$ is \emph{$\vec{C}$-bounded} if for all $k \in [g]$ the number of coordinates $i \in [g]$ such that $\vec{v}[i] \geq C_{g-k}$ is strictly less than $k$.
A vector $\vec{v} \in \N^d$ is \emph{$\vec{C}$-thin}
with respect to $g$-dimensional vector space $U \subseteq \Q^d$ if there exists a clean basis $B$ of $U$ with set of distinguished coordinates $K$ such that $\proj{K}{\vec{v}}$ is $\vec{C}$-bounded.
Moreover, we say that a configuration $\Config{q}{v}$ of a VASS $V$ is \emph{$\vec{C}$-thin} if $\vec{v}$ is $\vec{C}$-thin with respect to $\cyclespace{V}$.
If a vector $\vec{v}$ or a configuration $\Config{q}{v}$ is not $\vec{C}$-thin then we call it \emph{$\vec{C}$-thick}.
\end{definition}

Similarly as before we prove that there are not too many $\vec{C}$-thin configurations reachable from a given configuration, we state it in Claim~\ref{clm:thin-configurations}.
We say that a vector $\vec{v} \in \N^k$ is \emph{sorted} if for each $i,j \in [k]$, $i < j$ we have $\vec{v}[i] \leq \vec{v}[j]$.

\begin{restatable}{claim}{thinConfigurations}\label{clm:thin-configurations}
For every configuration $\Config{p}{u}$ and every sorted $\vec{C} = (C_0, C_1, \ldots, C_{g-1}) \in \N^\sumdim$ there are at most $n \cdot d^\sumdim \cdot \prod_{i=0}^{\sumdim-1} C_i$ many $\vec{C}$-thin configurations reachable from $\Config{p}{u}$.
\end{restatable}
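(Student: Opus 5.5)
The plan follows the proof idea of \cref{clm:small-configurations}, changing only the final counting step. Two structural facts carry over unchanged. First, for every set $K \subseteq [d]$ of $\sumdim$ coordinates there is at most one clean basis $B$ of $\cyclespace{V}$ with distinguished coordinates $K$. Indeed, if $B$ and $B'$ both have distinguished coordinates $K$, then $\vec{b}_i - \vec{b}'_i$ lies in $\cyclespace{V}$ and vanishes on $K$. Writing it in the basis $B$, its coefficients are exactly its $K$-projection, which is zero, so $\vec{b}_i = \vec{b}'_i$.

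Second, if $\Config{q}{v}$ and $\Config{q}{v'}$ are both reachable from $\Config{p}{u}$ and $\proj{K}{\vec{v}} = \proj{K}{\vec{v}'}$ for such a $K$, then $\vec{v} = \vec{v}'$. By \cref{clm:path-difference}, both vectors equal $\vec{u} + \vec{r}_q$ plus an element of $\cyclespace{V}$, so $\vec{v}-\vec{v}' \in \cyclespace{V}$. This difference vanishes on $K$, and the same argument as before forces it to be zero.

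It follows that every reachable $\vec{C}$-thin configuration is uniquely determined by three pieces of data: its state $q$ ($n$ choices), a set $K$ witnessing thinness (at most $\binom{d}{\sumdim} \le d^{\sumdim}$ choices), and the vector $\proj{K}{\vec{v}} \in \N^{\sumdim}$, which must be $\vec{C}$-bounded. The claim therefore reduces to a purely combinatorial statement: the number of $\vec{C}$-bounded vectors in $\N^{\sumdim}$ is at most $\prod_{i=0}^{\sumdim-1} C_i$, up to a factor that can be absorbed into $d^{\sumdim}$.

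This combinatorial count is the main obstacle. The key observation is that a vector $\vec{w}$ is $\vec{C}$-bounded exactly when its sorted rearrangement $w_{(1)} \le \dots \le w_{(\sumdim)}$ satisfies $w_{(j)} < C_{j-1}$ for every $j \in [\sumdim]$. To see this, suppose $w_{(j)} \ge C_{j-1}$ for some $j$. Then the top $\sumdim-j+1$ coordinates are all at least $C_{j-1}$. Setting $k = \sumdim - j+1$ gives $C_{\sumdim-k} = C_{j-1}$, so at least $k$ coordinates reach $C_{\sumdim-k}$, contradicting boundedness. The converse argument is symmetric.

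Hence the sorted versions of bounded vectors lie in $\prod_{j}[0, C_{j-1})$, and there are at most $\prod_i C_i$ of them. Each sorted vector corresponds to at most $\sumdim!$ unsorted vectors, which gives a bound of $\sumdim! \cdot \prod_i C_i$.

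To remove the $\sumdim!$ factor, I would fold the ordering into the choice of $K$. Count ordered tuples $(k_1,\dots,k_{\sumdim})$ of distinct coordinates of $[d]$, of which there are at most $d^{\sumdim}$, and arrange them so that $\vec{v}[k_1] \le \dots \le \vec{v}[k_{\sumdim}]$. A given thin configuration is then determined by its state, by some ordered tuple $K$ for which the clean basis exists and which is sorted with respect to $\vec{v}$, and by the sorted vector $\proj{K}{\vec{v}} \in \prod_{j}[0,C_{j-1})$. This yields the bound $n \cdot d^{\sumdim} \cdot \prod_{i=0}^{\sumdim-1} C_i$ exactly, as the claim requires.
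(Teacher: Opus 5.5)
Your proof is correct, and its skeleton is the paper's. A reachable $\vec{C}$-thin configuration is determined by its state, by a distinguished set $K$ of a clean basis of $\cyclespace{V}$, and by its values on $K$, with uniqueness coming from \cref{clm:path-difference}.

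Where you differ is the count of $\vec{C}$-bounded vectors, and here your version is the careful one. The paper asserts that, for a fixed basis and state, there are $\prod_i C_i$ bounded vectors, and this is false. For $\sumdim = 2$ and $\vec{C} = (1,2)$ the bounded vectors are $(0,0)$, $(0,1)$ and $(1,0)$: three of them, against $\prod_i C_i = 2$. The general bound is the $\sumdim! \cdot \prod_i C_i$ you derive.

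The paper's final bound survives only because its figure of $d^{\sumdim}$ clean bases is really the count $d(d-1)\cdots(d-\sumdim+1)$ of ordered tuples. That count exceeds the true number $\binom{d}{\sumdim}$ of distinguished sets by exactly the missing factor $\sumdim!$.

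Your two steps make this trade-off explicit. The sorted-rearrangement characterization ($w_{(j)} < C_{j-1}$) gives the box bound, and ordering $K$ by the values of $\vec{v}$ absorbs the $\sumdim!$. Together they give a rigorous injection into a set of size at most $n \cdot d^{\sumdim} \cdot \prod_{i=0}^{\sumdim-1} C_i$.

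As a minor remark, your first structural fact (uniqueness of the clean basis for a given $K$) is not actually needed. You count tuples $K$ directly, and the second fact only needs that some clean basis with distinguished set $K$ exists.
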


\begin{proof}
The proof is similar to the proof of Claim~\ref{clm:small-configurations}.
We argue as before that there are at most $d^\sumdim$ clean basis.
For a fixed clean basis and one of $n$ states fixed each configuration corresponds to a $\vec{C}$-bounded vector.
The number of such vectors is $\prod_{i=1}^\sumdim C_i$, therefore the number of $\vec{C}$-thin configurations
is bounded by $n \cdot d^\sumdim \cdot \prod_{i=1}^\sumdim C_i$, as required.
\end{proof}

In analogue of Kunnemann et.\ al., we would like to argue that if a configuration is $\vec{C}$-large, then there is a number of counters that are sufficiently large to be ``ignored''.
However, as in the proof of Theorem~\ref{thm:coverability-sum-dim} we need to prove something stronger. We show that if a configuration is $\vec{C}$-thick, then there is a collection of counters which are all sufficiently large and by ``ignoring'' them, the geometric dimension of the VASS decreases appropriately.

\begin{claim}\label{clm:c-thick}
    Let $U \sset \QQ^d$ be a $g$-dimensional vector space, let $\vec{v} \in \N^d$, and let $\vec{C} = (C_0, \ldots, C_{g-1}) \in \NN^g$ be sorted.
    If $\vec{v}$ is $\vec{C}$-thick with respect to $U$, then there exists a clean basis $B = \set{\vec{b}_1, \ldots, \vec{b}_g}$ of $U$, a number $k \in [g]$ and a set $S = \{i_1, \ldots, i_k\} \subseteq [g]$ of $k$ indices such that for every $i \in S$ and every $j \in [d]$, $\vec{b}_i[j] \neq 0$ implies that $\vec{v}[j] \geq C_{g-k}$. 
\end{claim}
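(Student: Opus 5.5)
Mirroring the proof of \cref{clm:c-large}, I would argue by contradiction and use a potential function over clean bases. Assume $\vec{v}$ is $\vec{C}$-thick, and that for no clean basis $B$, no $k \in [g]$ and no set $S$ of $k$ indices do all vectors $\vec{b}_i$ with $i \in S$ have support inside $\{j : \vec{v}[j] \geq C_{g-k}\}$. Consider the finitely many clean bases; by fact 1) from \cref{clm:small-configurations} there is at most one per set $K$ of distinguished coordinates. Pick the clean basis $B$ with distinguished coordinates $K$ that minimises a lexicographic potential: the sorted vector $\proj{K}{\vec{v}}$, compared lexicographically from its largest entry downwards. Equivalently, one may minimise the multiset $\{\vec{v}[k] : k \in K\}$ in the multiset ordering.

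Since $\vec{v}$ is thick, $\proj{K}{\vec{v}}$ is not $\vec{C}$-bounded. So there is $k \in [g]$ such that the set $S \subseteq [g]$ of indices $i$ with $\vec{v}[k_i] \geq C_{g-k}$ has at least $k$ elements. Shrinking $S$ to exactly $k$ elements, the assumption yields some $i \in S$ and some $j \in [d]$ with $\vec{b}_i[j] \neq 0$ and $\vec{v}[j] < C_{g-k} \le \vec{v}[k_i]$. We then perform exactly the pivot step from \cref{clm:c-large}: set $\vec{b}'_i = \vec{b}_i/\vec{b}_i[j]$ and $\vec{b}'_\ell = \vec{b}_\ell - \vec{b}_\ell[j]\vec{b}'_i$ for $\ell \ne i$. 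This gives a clean basis with $K' = K \setminus \{k_i\} \cup \{j\}$; the verification is verbatim the earlier one. Note that $j \notin K$, since $\vec{b}_i$ vanishes on $K \setminus \{k_i\}$ and $\vec{v}[j] < \vec{v}[k_i]$ rules out $j = k_i$. Replacing one entry of the multiset by a strictly smaller one strictly decreases it in the multiset ordering. This contradicts minimality.

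In fact the simpler potential $\onenorm{\proj{K}{\vec{v}}}$ from \cref{clm:c-large} already works, since the swap strictly decreases the $1$-norm. So the minimiser of $\onenorm{\proj{K}{\vec{v}}}$ gives the contradiction directly. I would state the argument with that potential and use the multiset remark only as intuition.

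The main obstacle is the choice of witness. The violated boundedness condition must supply both the level $k$ and a set $S$ of exactly $k$ indices, all lying above the \emph{same} threshold $C_{g-k}$, so that one bad coordinate $j$ yields a swap lowering the potential. This comparison $\vec{v}[j] < C_{g-k} \le \vec{v}[k_i]$ is where the thresholds for the chosen $k$ are used. Sortedness of $\vec{C}$ is not needed beyond making the definitions consistent.
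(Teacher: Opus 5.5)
Your proof is correct and is essentially the paper's argument: you minimise $\onenorm{\proj{K}{\vec{v}}}$ over clean bases, use the failure of $\vec{C}$-boundedness to get a level $k$ and $k$ distinguished indices with values at least $C_{g-k}$, and pivot on a bad coordinate $j$ to lower the potential. Your explicit check that $j \notin K$, so that $K'$ has $g$ elements, is a detail the paper leaves implicit, and the multiset-ordering potential is not needed.
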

\begin{proof}
    The proof of Claim~\ref{clm:c-thick} is similar to the proof of Claim~\ref{clm:c-large}. The difference is as follows. In Claim~\ref{clm:c-large} we need to find one vector $\vec{b}_i$ in some clean basis $B$ such that all the coordinates affected by $\vec{b}_i$ (namely $j \in [d]$ such that $\vec{b}_i[j] \neq 0$) have at least value $C$ in our vector $\vec{v}$. Here our task is more challenging. Besides the choice of $B$ we also have a choice of the number $k \in [g]$ indicating how many vectors from $B$ we plan to ignore. For a chosen $k \in [g]$ we need to find vectors $\vec{b}_{i_1}, \ldots, \vec{b}_{i_k}$ from $B$ such that all the coordinates affected by them have a value at least $C_{g-k}$.
    
    Even though the task seems more complicated the proof is essentially proceeding the same as the proof of Claim~\ref{clm:c-large}. We mainly present here the parts which differ.
    
    Assume, for sake of contradiction, that $\vec{v}$ is $\vec{C}$-thick and there does not exist a clean basis $B$, a number $k \in [g]$
    and a set of indices $S = \{i_1, \ldots, i_k\}$ such that
    for every index $i \in S$ and for every $j \in [d]$, $\vec{b}_i[j] \neq 0$ implies that $\vec{v}[j] \geq C_{g-k}$.
    Consider a clean basis $B = \set{\vec{b}_1, \ldots, \vec{b}_g}$ with a distinguished set of coordinates $K = \set{k_1, \ldots, k_g}$ of minimal $N \coloneqq \onenorm{\proj{K}{\vec{v}}}$.
    
    Now, our goal is to construct a new clean basis $B'$ with a distinguished set of coordinates $K'$ such that $\onenorm{\proj{K'}{\vec{v}}}$ is less than $N$.
    Since $\vec{v}$ is $\vec{C}$-thick we know that $\proj{K}{\vec{v}}$ is not $\vec{C}$-bounded. That means that there is a number $k \in [g]$ such that the number of coordinates $i \in [g]$ such that $\vec{v}[i] \geq C_{g-k}$ is at least $k$. Let $S = \{i_1, \ldots, i_k\}$ be a set of some $k$ coordinates such that $\vec{v}[i_\ell] \geq C_{g-k}$ for all $\ell \in [k]$. Now, using our assumption we get that there is some index $i \in S$ and some coordinate $j \in [d]$ such that $\vec{b}_i[j] \neq 0$, but $\vec{v}[j] < C_{g-k}$. That means that substituting the distinguished coordinate $k_i$ by the coordinate $j$ will decrease the sum of values $\vec{v}[i]$ for distinguished coordinates $i$, exactly as in the proof of Claim~\ref{clm:c-large}.
    
    The rest of the proof is literally the same as for Claim~\ref{clm:c-large}. We recall here the definition of the new basis $B'$.
    We set $B' = \set{\vec{b}'_1, \ldots, \vec{b}'_g}$ by
    \begin{equation*}
        \vec{b}'_i \coloneqq \frac{\vec{b}_i}{\vec{b}_i[j]} 
        \text{\; and \;}
        \vec{b}'_k \coloneqq \vec{b}_k - \vec{b}_k[j]\cdot\vec{b}'_i, \text{ for all } k \neq i.
    \end{equation*}
    We identify the set of distinguished coordinates $K' = \set{k_1, \ldots, k_{i-1}, j, k_{i+1}, \ldots, k_g}$.
    
    Verification that $B'$ is a clean basis is performed exactly as in the proof of Claim~\ref{clm:c-large}. Finally, we also verify that $\onenorm{\proj{K'}{\vec{v}}} < N$ exactly in the same way as before. The intuition is that $\vec{v}[k_i]$ was replaced by a smaller value $\vec{v}[j]$, thus $\onenorm{\proj{K'}{\vec{v}}}$ decreased. This contradicts minimality of $N$ and finishes the proof.
\end{proof}

With the understanding that in a $\vec{C}$-thick configuration one can ignore some coordinates to decrease the geometric dimension we are ready to prove~\cref{thm:improved-coverability-sum-dim}.
We first define the bounds on the length of runs that we would like to establish.
\begin{equation*}
    K_i \coloneqq
    \begin{cases}
        n - 1 & \text{ for } i = 0 \\
        n \cdot d^i \cdot \prod_{j=0}^{i-1} (\infnorm{\vec{y}} + M \cdot K_j) + K_{i-1} & \text{ for } i \geq 1
    \end{cases}
\end{equation*} 
Our goal is to prove that coverability is witnessed by runs of length at most $K_{\sumdim}$.

\begin{lemma}\label{lem:henry}
    If there is run in $V$ from $\Config{s}{x}$ that covers $\Config{t}{y}$, then there is a run from $\Config{s}{x}$ that covers $\Config{t}{y}$ of length at most $K_{\sumdim}$.
\end{lemma}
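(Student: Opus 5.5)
We argue by induction on the geometric dimension $\sumdim$, mirroring the proof of \cref{lem:rackoff} but with $\vec{C}$-thin configurations (\cref{def:thin-configurations}) in place of $C$-small ones. The base case $\sumdim = 0$ is identical to before. All cycles have zero effect, so a shortest covering run never repeats a state, and its length is at most $K_0 = n-1$. Note that the induction hypothesis will be applied to VASS of every geometric dimension $\sumdim - k$ with $k \in [\sumdim]$, not only $\sumdim-1$. We therefore use strong induction, and we check that the bounds $K_j$ are monotone in $j$ (immediate from the definition).

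For $\sumdim \geq 1$, we take a shortest run $\Config{s}{x} \xrightarrow{\pi} \Config{t}{y'}$ covering $\Config{t}{y}$. We set $\vec{C} = (C_0, \ldots, C_{\sumdim-1})$ with $C_j \coloneqq \infnorm{\vec{y}} + M \cdot K_j$. This vector is sorted because the $K_j$ are nondecreasing. We split $\pi$ as $\pismall\,\pitail$ at the first $\vec{C}$-thick configuration $\Config{q}{m}$; if there is none, $\pitail$ is empty. Every configuration visited by $\pismall$ is $\vec{C}$-thin and reachable from $\Config{s}{x}$, and no configuration repeats on a shortest run. Hence \cref{clm:thin-configurations} bounds the length of $\pismall$ by $n \cdot d^{\sumdim}\prod_{j=0}^{\sumdim-1} C_j$, which is exactly the first summand of $K_{\sumdim}$.

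For $\pitail$ we apply \cref{clm:c-thick} to $\vec{m}$. It yields a clean basis $B$, a number $k \in [\sumdim]$ and a set $S$ of $k$ indices such that every coordinate in $J \coloneqq \set{j \in [d] : \vec{b}_i[j] \neq 0 \text{ for some } i \in S}$ satisfies $\vec{m}[j] \geq C_{\sumdim-k}$. We project $V$ onto $[d]\setminus J$ to obtain $V'$. As in the proof of \cref{lem:rackoff}, $\cyclespace{V'}$ is the projection of $\cyclespace{V}$, and the projections of $\vec{b}_i$ for $i \in S$ vanish. So $\cyclespace{V'}$ is spanned by the remaining $\sumdim - k$ projected basis vectors, and $\sumdim(V') \leq \sumdim - k$. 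The projection of $\pitail$ witnesses coverability in $V'$. By the induction hypothesis and monotonicity of the $K_j$, there is a covering run in $V'$ from $q(\proj{[d]\setminus J}{\vec{m}})$ of length at most $K_{\sumdim-k}$. We lift this run to $V$. Each counter in $J$ starts at $\vec{m}[j] \geq \infnorm{\vec{y}} + M K_{\sumdim-k}$, so it stays nonnegative and ends at least $\infnorm{\vec{y}}$. By minimality, $\pitail$ therefore has length at most $K_{\sumdim-k} \leq K_{\sumdim-1}$. Adding the two bounds gives $K_{\sumdim}$.

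The main subtlety is the index bookkeeping in the tail step. The threshold $C_{\sumdim-k}$ guaranteed by \cref{clm:c-thick} must match the run-length bound $K_{\sumdim-k}$ for the reduced dimension. This requires the geometric dimension to drop by at least $k$ rather than exactly $k$, and it requires the $K_j$ to be monotone so that a smaller dimension gives a shorter bound. A second point to check is that the dimension bound is stated for the whole VASS $V'$, including the earlier SCCs that $\pitail$ no longer visits. This is harmless, because the induction hypothesis only needs an upper bound on $\sumdim(V')$.
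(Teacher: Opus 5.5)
Your proof is correct and follows the paper's argument essentially step for step: strong induction on $\sumdim$, splitting a shortest covering run at the first $\vec{C}$-thick configuration with $C_j = \infnorm{\vec{y}} + M\cdot K_j$, bounding $\pismall$ by \cref{clm:thin-configurations}, and bounding $\pitail$ by projecting away the coordinates supplied by \cref{clm:c-thick} and invoking the hypothesis in geometric dimension at most $\sumdim-k$. Your extra checks only make explicit what the paper leaves implicit: sortedness of $\vec{C}$, monotonicity of the $K_j$, the case with no thick configuration, and the threshold $C_{\sumdim-k} = \infnorm{\vec{y}} + M\cdot K_{\sumdim-k}$, which the paper mistypes as $M\cdot K_i + \infnorm{\vec{y}}$.
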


\begin{proof}
    The proof proceeds similarly to the proof of Lemma~\ref{lem:rackoff}, so we focus on the differences while only sketching the identical parts.
    
    The proof is by induction on the geometric dimension $\sumdim$.
    The case of $\sumdim = 0$ is easy and the same as for Lemma~\ref{lem:rackoff}.
    For $\sumdim \geq 1$ we assume that Lemma~\ref{lem:henry} holds for VASS with geometric dimension less than $g$.
    Now, assume that there is a run in $V$ from $\Config{s}{x}$ that covers $\Config{t}{y}$.
    Let $\Config{s}{x} \xrightarrow{\pi} \Config{t}{y'}$, for some $\vec{y}' \geq \vec{y}$, be the shortest run from $\Config{s}{x}$ that covers $\Config{t}{y}$.
    We will split this run at the first configuration that is $\vec{C}$-thick for $\vec{C} = (C_0, C_1, \ldots, C_{g-1})$,
    where $C_i = M \cdot K_i + \infnorm{\vec{y}}$.
    Precisely, consider the decomposition: $\Config{s}{x} \xrightarrow{\pismall} \Config{q}{m} \xrightarrow{\pitail} \Config{t}{y'}$, where $\Config{q}{m}$ is the first $\vec{C}$-thick configuration.
    
    Since $\Config{s}{x} \xrightarrow{\pi} \Config{t}{y'}$ has minimal length, we know that no configuration is repeated.
    As all the configurations before $\Config{q}{m}$ are $\vec{C}$-thin
    we know by Claim~\ref{clm:thin-configurations} that
    the length of $\pismall$ is at most $n \cdot d^\sumdim \cdot \prod_{i=0}^{\sumdim-1} C_i$.
    
    Now let us focus on bounding the length of $\pitail$. Since $\Config{q}{m}$ is $\vec{C}$-thick by Claim~\ref{clm:c-thick}
    we know that there is a clean basis $B = \set{\vec{b}_1, \ldots, \vec{b}_g}$ of $U$, a number $k \in [g]$ and a set $S = \{i_1, \ldots, i_k\} \subseteq [g]$ of $k$ indices such that for every $i \in S$ and every $j \in [d]$, $\vec{b}_i[j] \neq 0$ implies that $\vec{v}[j] \geq C_{g-k}$.
    In other words it means that all the coordinates influenced by basis vectors $\vec{b}_{i_1}, \ldots, \vec{b}_{i_k}$
    are of size at least $C_{g-k}$.
    
    We now sketch bounding $\pitail$ without going into details, since the details are the same as in Lemma~\ref{lem:rackoff}.
    We consider a VASS $V'$, which is exactly $V$ with coordinates in $J$ removed, where $J$ contains all the coordinates
    on which one of the $\vec{b}_{i_j}$ is non-zero (for $j \in [k]$). Let $J' = [d] \setminus J$ be the complement of $J$.
    Removing all the coordinates in $J$ implies that $\dim(\cyclespace{V'}) \leq \dim(\cyclespace{V})-k$ decreased by $k$, since all the basis vectors
    $\vec{b}_{i_j}$ for $j \in [k]$ are zero vectors after removing coordinates in $J$.
    Therefore $V'$ has geometric dimension at most $g-k$ and by induction assumption there is a covering run
    from $\Config{q}{\proj{J'}{m}}$ to $\Config{t}{\proj{J'}{y}}$ of length at most $K_{g-k}$. Now consider this run lifted back to $V$,
    with coordinates from $J$ added back. Since max-norm of $V$ is $M$ the run of length $K_{g-k}$ decreases counters in $J$
    by at most $M \cdot K_{g-k}$. Recall however that for each $i \in J$ we have $\vec{m}[i] \geq C_{g-k} = M \cdot K_i + \infnorm{\vec{y}}$.
    That means that after decreasing by $M \cdot K_{g-k}$ all the counters in $J$ still have value at least $\infnorm{\vec{y}}$.
    This is sufficient to cover $\Config{t}{y}$, so indeed the presented run covers $\Config{t}{y}$.
    Its length is at most the length of $\pismall$ plus $K_{g-k} \leq K_{g-1}$, which equals
    $n \cdot d^\sumdim \cdot \prod_{i=0}^{\sumdim-1} C_i + K_{g-1} = K_g$, as required.
\end{proof}

In order to prove~\cref{thm:improved-coverability-sum-dim}, we only need to argue that $K_{\sumdim}$ is
indeed bounded by $\poly(d,n,M,\infnorm{\vec{y}})^{h(\sumdim)}$, which we show in the following claim.

\begin{claim}\label{cl:h-bounded}
For any $g \in \N$ we have $K_g \leq \poly(d,n,M,\infnorm{\vec{y}})^{h(g)}$ for $h(x) = 2^{x+1}-1$.  
\end{claim}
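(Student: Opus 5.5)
We prove the bound by strong induction on $g$. Set $P \coloneqq c\cdot d\cdot n\cdot (M+1)\cdot(\infnorm{\vec{y}}+1)$ for a suitable absolute constant $c \geq 2$; this is a polynomial in $d,n,M,\infnorm{\vec{y}}$. The claim we prove is the concrete statement $K_g \leq P^{h(g)}$ for all $g \in \N$. Note that $h(0)=1$ and $h(i) = 1 + \sum_{j=0}^{i-1} h(j) + i$; more importantly, $h(i) = 2h(i-1)+1$, and $\sum_{j=0}^{i-1} h(j) = 2^{i+1}-2-i$.

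For the base case, $K_0 = n-1 \leq P = P^{h(0)}$.

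For the inductive step, fix $i \geq 1$ and assume $K_j \leq P^{h(j)}$ for all $j<i$. Bound each factor in the product first:
\[
\infnorm{\vec{y}} + M K_j \leq \infnorm{\vec{y}} + M P^{h(j)} \leq \tfrac{1}{2}P^{h(j)+1}.
\]
This holds because $P \geq 2(M+1)(\infnorm{\vec{y}}+1)$, which lets us absorb both $M$ and $\infnorm{\vec{y}}$ into one extra factor of $P$ (with room to spare). Hence
\[
n d^i \prod_{j=0}^{i-1}(\infnorm{\vec{y}}+MK_j) \leq n d^i P^{\sum_{j<i}(h(j)+1)} = n d^i P^{2^{i+1}-2}.
\]
The factor $d^i$ is the one delicate point. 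We cannot absorb it as a single factor of $P$, since $d^i \leq P^i$ would cost $i$ extra in the exponent. Instead we use the slack of $2^{-i}$ coming from the halves in the product bound, or more simply distribute one factor $d$ into each of the $i$ product terms. Concretely, we replace the factor bound by $d(\infnorm{\vec{y}}+MK_j) \leq \tfrac{1}{2}P^{h(j)+1}$, which is valid because $P$ contains the factor $d$ and the constant $c$ is large enough. This yields
\[
n d^i \prod_{j<i}(\cdots) \leq \tfrac{1}{2}\, n P^{2^{i+1}-2} \leq \tfrac{1}{2}P^{2^{i+1}-1}.
\]
Finally, $K_{i-1} \leq P^{h(i-1)} \leq \tfrac{1}{2}P^{h(i)}$ because $P \geq 2$. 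Adding the two halves gives $K_i \leq P^{2^{i+1}-1} = P^{h(i)}$, which completes the induction.

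The only real obstacle is this bookkeeping: making sure that every extra factor ($n$, $d^i$, $M$, $\infnorm{\vec{y}}$, and the additive term $K_{i-1}$) is absorbed without raising the exponent beyond $2^{i+1}-1$. Placing one factor of $d$, together with $M$ and $\infnorm{\vec{y}}$, into each product term, and keeping the leftover $n$ and the factor-of-two slack for the sum, does exactly this.
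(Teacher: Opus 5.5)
Your proof is correct and follows essentially the same route as the paper: induction on $g$ with the concrete bound $K_g \le P^{h(g)}$ for a fixed polynomial base (the paper uses $A = 4ndM(\infnorm{\vec{y}}+1)$), moving one factor of $d$ into each of the $g$ product terms and summing the exponents $\sum_{j<g} 2^{j+1} = 2^{g+1}-2$. Your bookkeeping differs only in the details and is slightly cleaner. You use $(M+1)(\infnorm{\vec{y}}+1)$ and absorb $K_{i-1}$ via the factor-of-two slack and $P \geq 2$. The paper instead uses $a+b \le 2ab$, which tacitly needs $\infnorm{\vec{y}} \geq 1$ and $M K_j \geq 1$.
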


\begin{proof}
    Recall that
    \begin{equation*}
       K_i \coloneqq
        \begin{cases}
          n - 1 & \text{ for } i = 0, \\
          n \cdot d^i \cdot \prod_{j=0}^{i-1} (\infnorm{\vec{y}} + M \cdot K_j) + K_{i-1} & \text{ for } i \geq 1.
       \end{cases}
    \end{equation*}
    Let $A = 4ndM (\infnorm{y}+1)$.
    We will prove that \hbox{$K_g \leq A^{h(g)}$} by induction on $g$.
    For $g = 0$, we have $(4ndM (\infnorm{y}+1))^{h(0)} = 4ndM (\infnorm{y}+1) \geq 4n > n-1 = K_0$.
    For $g \geq 1$, assume that $K_i \leq A^{h(i)}$ for all $i < g$.
    Now, by definition, we know that $K_g = n \cdot d^g \cdot \prod_{j=0}^{g-1} (\infnorm{\vec{y}} + M \cdot K_j) + K_{g-1}$.
    Therefore,
    \begin{align*}
        K_g 
        & = n \cdot d^g \cdot \prod_{j=0}^{g-1} (\infnorm{\vec{y}} + M \cdot K_j) + K_{g-1}
        \leq 2n \cdot d^g \cdot \prod_{j=0}^{g-1} (\infnorm{\vec{y}} + M \cdot K_j) \\
        & \leq \prod_{j=0}^{g-1} 2nd (\infnorm{\vec{y}} + M \cdot K_j)
        \leq \prod_{j=0}^{g-1} 4nd \cdot \infnorm{\vec{y}} \cdot M \cdot K_j \\
        & \leq \prod_{j=0}^{g-1} 4nd\cdot (\infnorm{\vec{y}}+1) \cdot M \cdot K_j
        = \prod_{j=0}^{g-1} A \cdot K_j
        \leq \prod_{j=0}^{g-1} A \cdot A^{2^{j+1}-1} \\
        & = \prod_{j=0}^{g-1} A^{2^{j+1}} = A^{2^{g+1}-1} = A^{h(g)},
    \end{align*}
    where the third inequality holds because $a + b \leq 2ab$ for all $a, b \in \N$ and the last inequality holds by induction assumption.
    We conclude as $K_g \leq (4ndM(\infnorm{\vec{y}} + 1))^{h(g)}$ is bounded by $\poly(d,n,M,\infnorm{\vec{y}})^{h(g)}$.
\end{proof}

\section{Simultaneous Unboundedness}
\label{sec:simultaneous-unboundedness}
Viewed as a decision problem, simultaneous unboundedness asks, from a given initial configuration $\Config{s}{x}$, whether it is possible to reach a given target state $q$ with counter values $\vec{v} \geq (G, \ldots, G)$, for a given target value $G \in \NN$. 
One can observe that if such a run does not exist, then there is a bound $H$ such that, for every run, there is a counter which does not exceed $H$. 
This observation is a necessary ingredient in the standard approach for solving reachability in VASS (i.e.\ this is used as a tool in the KLM decomposition)~\cite{Kosaraju82, LerouxS19}. 
Simultaneous unboundedness, as well as other variants of (un)boundedness have already been studied in VASS parameterised by the dimension $d$~\cite{Demri13,CzerwinskiHZ18}.
In this section, we study simultaneous unboundedness in VASS parameterised by the geometric dimension $\sumdim$. 

Throughout this section, unless otherwise specified, we fix our attention to a VASS $V = (Q, T)$.
We also fix the following parameters of $V$:
\begin{itemize}
        \item $d \in \NN$ is the dimension,
        \item $\sumdim \leq d$ is the geometric dimension, 
        \item $n \coloneqq \abs{Q}$ is the number of states,
        \item $M \coloneqq \max\set{\infnorm{\vec{x}} : (p, \vec{x},q) \in T}$ is the magnitude of the greatest effect of any transition.
\end{itemize}

We will prove that, for all \(G \in \N\), there are \(H, L \in \N\) of size \( \poly(d,n,G,M)^{f(\sumdim)}\) for some exponential function $f$ s.t. if there is a run in which every counter observes a value \(\geq H\) (not necessarily at the same time), then there is a run of length \(\leq L\) in which all counters are \(\geq G\) \emph{simultaneously}. 
The relationship between $G$, $H$, and $L$ is detailed in~\cref{thm:no-pump} and their precise values are provided ahead of~\cref{lem:no-pump-concrete}.

\begin{restatable}{theorem}{noPump}\label{thm:no-pump}
    Let $f: \N \to \N$, $f(x) \coloneqq (x+1)^{x+1}$. 
    For every $G \in \N$, there exist $H, L \in \NN$ such that $H, L \leq \poly(d,n,G,M)^{f(\sumdim)}$ and, if there exists a run from $\Config{s}{x}$ to $\Config{q}{u}$ in which, for every $i \in [d]$, there is a configuration $p_i(\vec{h}_i)$ with $\vec{h}_i[i] \geq H$, then there exists a run from $\Config{s}{x}$ to $\Config{q}{v}$ of length at most $L$, such that $\vec{v} \geq (G, \ldots, G)$.
\end{restatable}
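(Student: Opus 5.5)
I will prove a slightly more general, inductive statement, in the style of Rackoff and of \cref{lem:rackoff}. For a VASS of geometric dimension $\sumdim$, a threshold $G$ and a set of counters $D \subseteq [d]$, I define $H_{\sumdim}, L_{\sumdim}$ by a recursion of the same shape as $L_i$. The inductive claim is as follows: if a run from $\Config{s}{x}$ to $q$ lets every counter in $D$ reach a value $\geq H_{\sumdim}$ at some point, then there is a run of length $\leq L_{\sumdim}$ from $\Config{s}{x}$ to $q$ ending with all counters in $D$ at least $G$.

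The first step handles $\sumdim = 0$. Every cycle then has zero effect, so the counter values are determined by the path up to a bounded error; more precisely, by \cref{clm:path-difference} the value at state $p$ is exactly $\vec{x} + \vec{r}_p$. Hence if some run reaches $q$, then so does a simple path of length $\leq n-1$. Moreover every counter that ever exceeds $H_0 := G + nM$ is at least $G$ in \emph{every} reachable configuration, since all configurations differ from each other by at most $nM$ in each counter. So taking the simple path to $q$ suffices.

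The inductive step splits the given run, as in \cref{lem:rackoff}, at the first $C$-large configuration $\Config{p}{m}$, for a threshold $C$ to be chosen, roughly $C = G + M\cdot L_{\sumdim-1} + H_{\sumdim-1}$. There are two cases.

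In the first case the run stays $C$-small throughout. By \cref{clm:small-configurations}, every reachable $C$-small configuration is determined by its state, its clean basis and its at most $C$ values on the distinguished coordinates. I want to show that every counter is then bounded by roughly $\infnorm{\vec{x}} + $ (the number of small configurations)$\cdot M$ along any shortest stretch. Concretely, I take the threshold $H_{\sumdim}$ larger than $\infnorm{\vec{x}}$ plus $M$ times $n(dC)^{\sumdim}$. Then reaching the value $H_{\sumdim}$ forces the run to repeat a $C$-small configuration. Here I need care, because $\vec{x}$ is not small; I would rather measure growth relative to $\vec{x}$ and put $G$ into the bound. The repetition gives two reachable configurations with the same state and the same distinguished projection, which by the uniqueness fact in \cref{clm:small-configurations} must be equal. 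So the run can be shortcut. This shows that the first case forces every counter to stay below $H_{\sumdim}$, contradicting the hypothesis, provided that $H_{\sumdim}$ absorbs $\infnorm{\vec{x}}$.

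To avoid the dependence on $\vec{x}$, I would state the theorem with $H$ depending on $\infnorm{\vec{x}}$ as well, or use the trick that counters already $\geq H$ in $\vec{x}$ count as "seen". This is the part I expect to need the most care.

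In the second case the run reaches a $C$-large configuration $\Config{p}{m}$. The prefix up to $\Config{p}{m}$ can be shortened to length $\leq n(dC)^{\sumdim}$ without revisiting configurations. By \cref{clm:c-large} there is a set $J$ of coordinates, all with $\vec{m}[J] \geq C$, whose removal yields a VASS $V'$ of geometric dimension $\leq \sumdim - 1$.

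For the suffix from $\Config{p}{m}$ I apply induction in $V'$ with $D' = D \setminus J$. However, the counters of $D'$ may have reached $H$ already in the prefix rather than in the suffix. So I also keep track of which counters exceeded $H$ before $\Config{p}{m}$. Any such counter that is outside $J$ exceeded $H$ within the $C$-small prefix, and I would argue that this is impossible once $H > C + M\cdot n(dC)^{\sumdim}$ plus the initial values. Thus all counters of $D'$ see $H_{\sumdim-1}$ in the suffix, and induction gives a suffix of length $\leq L_{\sumdim-1}$ in $V'$ ending with $D'$ at least $G$.

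Lifting this suffix back to $V$, the counters in $J$ start at $\geq C \geq G + M\cdot L_{\sumdim-1}$, so they stay nonnegative and end at least $G$. The total length is $L_{\sumdim} = n(dC)^{\sumdim} + L_{\sumdim-1}$. Finally, solving the recursion as in \cref{thm:coverability-sum-dim} gives $\poly(d,n,G,M)^{f(\sumdim)}$.

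The main obstacle I anticipate is the bookkeeping of counters that become large only in the prefix, together with the dependence on $\vec{x}$. I would handle it by making $H$ exceed the maximum possible value of any counter in a $C$-small, repetition-free prefix.
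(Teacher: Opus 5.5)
Your skeleton matches the paper's: induction on $\sumdim$, a split at the first $C$-large configuration $\Config{p}{m}$, shortening the $C$-small prefix via \cref{clm:small-configurations}, projecting away the coordinates $J$ from \cref{clm:c-large}, and lifting the suffix. The gap is in how you handle counters that become high before $\Config{p}{m}$, and as written your proof does not give the stated theorem. Two of your steps need $H$ to exceed the initial values. In Case~1 you refute the all-small run by bounding counters by $\infnorm{\vec{x}} + M\cdot n(dC)^{\sumdim}$. In Case~2 you claim a counter outside $J$ cannot exceed $H$ inside the small prefix. The theorem, however, requires $H, L \le \poly(d,n,G,M)^{f(\sumdim)}$, independent of $\vec{x}$. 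For such an $H$ your Case~2 claim is false: if $\vec{x}$ already has entries above $H$, those counters are high at the very first configuration of the prefix. Your fallbacks do not repair this. Letting $H$ depend on $\infnorm{\vec{x}}$ changes the statement, and through the induction $H_{\sumdim-1}$ would then also depend on $\infnorm{\vec{m}}$. The ``already seen in $\vec{x}$'' trick is left unspecified.

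The missing idea is to anchor on the high configurations rather than on $\vec{x}$, and to carry prefix-high counters forward instead of excluding them. Take two configurations on the run such that every configuration from the first up to (but possibly excluding) the second is $C$-small and reachable from $\Config{s}{x}$. They are joined by a repetition-free run of length at most $n(dC)^{\sumdim}$, so their values differ by at most $M\cdot n(dC)^{\sumdim}$ in every counter. Hence if $\vec{h}_i[i] \ge H_{\sumdim}$ occurs in the prefix, then $\vec{m}[i] \ge H_{\sumdim} - M\cdot n(dC_{\sumdim})^{\sumdim} = H_{\sumdim-1}$. So $\Config{p}{m}$ itself serves as the high configuration for counter $i$ in $V'$, which is exactly why the paper sets $H_{\sumdim} = n\cdot M\cdot (d\cdot C_{\sumdim})^{\sumdim} + H_{\sumdim-1}$. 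The same anchoring settles your Case~1 without reference to $\vec{x}$ (a case the paper leaves implicit). If no configuration of the run were $C_{\sumdim}$-large, then every configuration, in particular $\Config{s}{x}$, would have all counters at least $H_{\sumdim-1}$. The paper's recursion gives $H_{\sumdim-1} = M\cdot L_{\sumdim-1} + G = C_{\sumdim}$, so that configuration would be $C_{\sumdim}$-large (as $\sumdim \ge 1$), a contradiction. With this replacement the rest of your argument goes through with $x$-independent bounds.
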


The proof of~\cref{thm:no-pump} is very similar to the proof of~\cref{thm:coverability-sum-dim} which is based on Rackoff's proof that coverability is witnessed by a double exponential length run~\cite{Rackoff78}.
We will sketch the ideas behind the proof of~\cref{thm:no-pump} first for the scenario in which the VASS is parameterised by its dimension $d$ (not its geometric dimension).
Given the assumption that there is a run in which every counter exceeds the threshold $H$ (at some point), we can  ``ignore'' the first counter that exceeds this threshold.
This idea is that, if in a $(d-1)$-VASS, we can prove that there is a run that witnesses simultaneous unboundedness of length $L$, and we know that $H \geq M \cdot L + G$, then we can ``safely ignore'' the counter that exceeded the threshold $H$.
Accordingly, we proceed by induction on the dimension with appropriately chosen values for $H$ and $L$. 
In our case, we are considering VASS parameterised by the geometric dimension $g$.
Just like the proof of~\cref{thm:coverability-sum-dim}, we instead wish to ignore multiple counters at the same time in order to decrease the geometric dimension by 1 for the inductive step.
For this, we consider two levels of ``large'': a \emph{large} upper bound $C$ and \emph{very large} threshold $H$.
We use the idea that, from a geometrically $C$-large configuration, we can ``safely ignore'' a collection of counters that will decrease the geometric dimension.
However, \emph{after} the first $C$-large configuration is observed, we have to prove that we still fulfill the induction hypothesis that there is a run which observes a large value on every counter. In order to still observe configurations that exceed the threshold, we need to set the threshold $H$ to be (significantly) larger than $C$.
Next, we will precisely define the threshold and upper bound values.
Fix $G \in \N$.
Let us inductively define three sequences of numbers.

\begin{equation*}
    C_i = 
    \begin{cases}
        0                   & i = 0, \\
        M\cdot L_{i-1} + G  & i \geq 1.
    \end{cases}
    \hspace{10mm}
    H_i = 
    \begin{cases}
        n\cdot (d+1) \cdot  M + G & i = 0, \\
        n \cdot M \cdot (d\cdot C_i)^i + H_{i-1}   & i \geq 1.
    \end{cases}
\end{equation*}

\begin{center}
    \begin{equation*}
        L_i = 
        \begin{cases}
            n \cdot (d+1)  & i = 0, \\
            n \cdot (d\cdot C_i)^i  + L_{i-1}   & i \geq 1.
        \end{cases}
    \end{equation*}
\end{center}

\begin{lemma}\label{lem:no-pump-concrete}
    If there exists a run
    from $\Config{s}{x}$ to $\Config{q}{u}$ in which, for every $i \in [d]$, there is a configuration $p_i(\vec{h}_i)$ with $\vec{h}_i[i] \geq H_\sumdim$, then there exists a run from $\Config{s}{x}$ to $\Config{q}{v}$ of length at most $L_\sumdim$, such that $\vec{v} \geq (G, \ldots, G)$.
\end{lemma}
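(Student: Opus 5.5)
We would prove \cref{lem:no-pump-concrete} by induction on $\sumdim$, following the proof of \cref{lem:rackoff}. We use the threshold $C_\sumdim = M\cdot L_{\sumdim-1}+G$ for ``$C$-large'' and write $N_\sumdim \coloneqq n\cdot(d\cdot C_\sumdim)^\sumdim$. By \cref{clm:small-configurations}, $N_\sumdim$ bounds the number of $C_\sumdim$-small configurations reachable from any fixed configuration. The sequences $C_i,H_i,L_i$ are nondecreasing in $i$, and are also monotone in $d$ and $n$. This lets the induction hypothesis be applied to any VASS with geometric dimension at most $\sumdim-1$ and with no more counters or states, with the bounds $H_{\sumdim-1}$ and $L_{\sumdim-1}$.

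\emph{Base case $\sumdim=0$.} Every cycle has zero effect. By \cref{clm:path-difference}, the counter vector of any configuration reachable from $\Config{s}{x}$ in state $p$ is therefore a fixed vector $\vec{x}+\vec{r}_p$. Fix a counter $i$ and its witness configuration $p_i(\vec{h}_i)$. The run segment from $p_i(\vec{h}_i)$ to $\Config{q}{u}$ has the same effect as a simple path, which has at most $n-1$ transitions. Hence $\vec{u}[i]\ge H_0-nM\ge G$. The configurations along the original run are determined by their states, so we may cut out every cycle and keep a valid run. This yields a run to $\Config{q}{u}$ of length at most $n-1\le L_0$.

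\emph{Inductive step.} We split the given run at its first $C_\sumdim$-large configuration $q'(\vec{m})$, if one exists. The key observation handles counters whose witness $p_i(\vec{h}_i)$ lies in the small prefix. The segment from $p_i(\vec{h}_i)$ to $q'(\vec{m})$ passes only through $C_\sumdim$-small configurations, apart from its endpoint, and all of them are reachable from $\Config{s}{x}$. Removing repeated configurations shortens it to length at most $N_\sumdim$, so its effect on counter $i$ is at least $-MN_\sumdim$. Since this shortened segment has the same endpoints, $\vec{m}[i]\ge H_\sumdim - M N_\sumdim = H_{\sumdim-1}$. Counters witnessed later in the tail are witnessed with value $\ge H_\sumdim\ge H_{\sumdim-1}$. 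Thus the tail from $q'(\vec{m})$ to $\Config{q}{u}$ satisfies the hypothesis with threshold $H_{\sumdim-1}$.

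If no large configuration occurs at all, the same argument applied to segments ending at $\Config{q}{u}$ gives $\vec{u}\ge (G,\dots,G)$. Shortening the whole run to length $\le N_\sumdim\le L_\sumdim$ then finishes this case. Otherwise, we shorten the prefix to length $\le N_\sumdim$ without changing its endpoint $q'(\vec{m})$.

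For the tail, we apply \cref{clm:c-large} to $\vec{m}$ to get a clean basis and a vector $\vec{b}_i$. Let $J$ be the support of $\vec{b}_i$ and project $V$ to $V'$ on the coordinates $[d]\setminus J$. Exactly as in \cref{lem:rackoff}, $V'$ has geometric dimension at most $\sumdim-1$. The projected tail is a run of $V'$ in which every remaining counter reaches $H_{\sumdim-1}$. The induction hypothesis gives a run in $V'$ from the projection of $q'(\vec{m})$ to a configuration in state $q$ with all counters $\ge G$, of length at most $L_{\sumdim-1}$. We lift this run to $V$. Counters in $J$ start at $\ge C_\sumdim = M L_{\sumdim-1}+G$, so they stay nonnegative and end at $\ge G$. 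The total length is at most $N_\sumdim + L_{\sumdim-1}=L_\sumdim$.

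The main obstacle is the step guaranteeing that the witnesses survive the cut. The shortening argument has to be applied to the segment starting at each witness rather than to the whole prefix, because removing loops could delete the witness configurations themselves. The gap $H_\sumdim-H_{\sumdim-1}=MN_\sumdim$ is chosen precisely so that this bookkeeping works.
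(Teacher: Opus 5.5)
Your proof is correct and follows the same route as the paper: induction on $\sumdim$, a split at the first $C_\sumdim$-large configuration, projecting away the support of the vector $\vec{b}_i$ from \cref{clm:c-large}, and using the gap $H_\sumdim - H_{\sumdim-1} = M\cdot n(d\cdot C_\sumdim)^\sumdim$ to transfer high values from the prefix to $\vec{m}$. Two of your details are more careful than the paper: you shorten each segment starting at a witness $p_i(\vec{h}_i)$ with its endpoints fixed, and you treat separately the case of a run with no $C_\sumdim$-large configuration, which the paper omits. The paper instead takes a shortest run satisfying the hypothesis and asserts that it repeats no configuration, which is not justified, since cutting out a loop may delete a high configuration.
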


\begin{proof}
    In this proof, we shall refer to the configurations $p_i(\vec{h}_i)$ as \emph{high configurations}.
    
    This proof goes by induction on $\sumdim$. 
    For $\sumdim = 0$, consider a run from $\Config{s}{x}$ to $\Config{q}{u}$ and consider every high configuration $p_i(\vec{h}_i)$.
    We shall mark the starting configuration $\Config{s}{x}$, every high configuration $p_i(\vec{h}_i)$, and the final configuration $\Config{q}{u}$.
    Since $\sumdim = 0$, all cycles have zero effect.
    This means that, between every consecutive marked configuration, we can remove simple cycles.
    We remark that we cannot simply remove simple cycles from the entire run, as it may be the case that a high configuration is observed by using a cycle.
    Indeed, a particular counter may exceed the threshold $H_0$ by making use of a cycle that may change its value throughout the course of the cycle (even though the cycle overall has zero effect).
    Now, we know that there is a run from $\Config{s}{x}$ to $\Config{q}{u}$ which observes every high configuration $p_i(\vec{h}_i)$, and whose length is bounded by $n(d+1) = L_0$.
    Now, we will argue that $\vec{u} \geq (G, \ldots, G)$.
    Consider the $i$-th counter: it is at least $H_0 = n\cdot (d+1) \cdot M + G$ at $p_i(\vec{h}_i)$.
    Notice that the run from $p_i(\vec{h}_i)$ to $\Config{q}{u}$ has length at most $(d+1)\cdot n$.
    This means that at the end of the run, the $i$-th counter will be at least $n \cdot (d+1) \cdot M + G - (d+1)\cdot n \cdot M \geq G$ (as required).
    
    For the induction step, assume that Lemma~\ref{lem:no-pump-concrete} is true for all geometric dimensions less than $\sumdim$. 
    Now suppose that there is a run $\Config{s}{x} \xrightarrow{\rho} \Config{q}{u}$ in which, for every $i \in [d]$, there is a configuration $p_i(\vec{h}_i)$ with $\vec{h}_i[i] \geq H_\sumdim$. 
    Recall~\cref{def:small-configurations} and split $\rho$ at the first $C_\sumdim$-large configuration $p(\vec{m})$ to obtain
    $\Config{s}{x} \xrightarrow{\rhosmall} \Config{p}{m} \xrightarrow{\rhotail} \Config{q}{u}$.
    In fact, let $\Config{s}{x} \xrightarrow{\rho} \Config{q}{u}$ be the shortest such run. 
    
    By the minimality of the length of $\Config{s}{x} \xrightarrow{\rho} \Config{q}{u}$, we know that no configuration is repeated.  
    Thus, by \Cref{clm:small-configurations}, we know that $|\rho_{\text{small}}| \leq n \cdot (d \cdot C_\sumdim)^\sumdim$. 
    
    Moreover, by~\cref{clm:c-large}, there is a clean basis $B = \{\vec{b}_1, \ldots, \vec{b}_\sumdim\}$ and an index $i \in [\sumdim]$ such that, for every $j \in [d]$, if $\vec{b}_i[j] \neq 0$ then $\vec{m}[j] \geq C_\sumdim$. 
    In the same way as in the proof of~\cref{lem:rackoff}, we will project away all counters $J = \set{j \in [d] : \vec{b}_i[j] \neq 0}$ to obtain a VASS $V'$ of dimension $d - \abs{J}$ but, crucially, of geometric dimension $\leq \sumdim - 1$.
    
    Additionally, for every $i \in [d]$, there is a high configuration $p_i(\vec{h}_i)$ observed during $\Config{s}{x} \xrightarrow{\rho} \Config{q}{u}$ such that $\vec{h}_i[i] \geq H_\sumdim$. 
    Our goal is to replace $\rhotail$ with a sufficiently short run using the inductive assumption.
    Roughly speaking, after projecting away counters in $J$, we wish to identify a run in $V'$ starting from $p(\proj{[d]\setminus J}{\vec{m}})$ which ends with all counter values at least $G$.
    To do this, we need to identify configurations in $\Config{p}{m} \xrightarrow{\rhotail} \Config{q}{u}$ which are high. 
    Specifically, for every $i \in [d] \setminus J$, we need to identify a configuration $p_i(\vec{h}_i)$ that occurs in $\Config{p}{m} \xrightarrow{\rhotail} \Config{q}{u}$ that satisfies $\vec{h}_i[i] \geq H_{\sumdim-1}$.
    We remark that the original high configuration $p_i(\vec{h}_i)$ may be observed during $\rhosmall$ or $\rhotail$; clearly if the high configuration $p_i(\vec{h}_i)$, for $i \in [d] \setminus J$, is observed during $\rhotail$, then it satisfies $\vec{h}_i[i] \geq L_{\sumdim} > L_{\sumdim-1}$.
    For the high configurations $p_i(\vec{h}_i)$ that are observed during $\rhosmall$, we will prove that there is a configuration observed during $\rhotail$ that is sufficiently high on the same counter.
    In fact, we will prove that this configuration is $\Config{p}{m}$.
    Namely, for every $i \in [d] \setminus J$, for which $p_i(\vec{h}_i)$ occurs in $\Config{s}{x} \xrightarrow{\rhosmall} \Config{p}{m}$, it is true that $\vec{m}[i] \geq H_{\sumdim-1}$.
    This is because $H_\sumdim = n \cdot (d \cdot C_\sumdim)^\sumdim \cdot M + H_{g-1} \geq \abs{\rhosmall} \cdot M + H_{g-1}$.
    
    Therefore, we can apply the induction assumption on $V'$ starting from $p(\proj{[d]\setminus J}{\vec{m}})$ to find a run $p(\proj{[d]\setminus J}{\vec{m}}) \xrightarrow{\pi} \Config{q}{u'}$, of length at most $L_{\sumdim-1}$, such that $\vec{u}' \geq (G, \ldots, G)$. 
    Since all the ignored counters have value at least $C_\sumdim = M \cdot L_{\sumdim-1} + G$ in the configuration $p(\vec{m})$, we deduce that the run $p(\vec{m}) \xrightarrow{\pi'} q(\vec{w})$ in $V$ obtained by following the same path as $\pi$ has length at most $L_{\sumdim-1}$ and $\vec{w} \geq (G, \ldots, G)$.
    Finally, the concatenation of $\rhosmall$ and $\pi'$ has length at most $n \cdot (d \cdot C_\sumdim)^\sumdim  + L_{\sumdim-1} = L_\sumdim$.
\end{proof}

Now, to conclude this section by proving~\Cref{thm:no-pump}, we only need to argue that $L_g$ and $H_g$ are indeed bounded by $\poly(d,n,M,G)^{f(\sumdim)}$. See~\Cref{app:simultaneous}.

\section{Boundedness}
\label{sec:boundedness}
In this section we study the boundedness problem in VASS parameterized by geometric dimension, which asks whether infinitely many configurations can be reached from a fixed source. We first recall that unboundedness is witnessed by any run that contains a configuration strictly covering a previous configuration.
Consider a run of the form $c_0 \xrightarrow{u_1} c_1 \xrightarrow{u_2} c_2 \cdots \xrightarrow{u_m} c_m$. We say it is an \emph{unboundedness witness for $c_0$} if there is an index $i < m$ such that $c_i < c_m$.
In this case, observe that the cycle from $c_i$ to $c_m$ can be fired any number of times from $c_m$ to reach infinitely many configurations. On the other hand, assume the reachable set from $c_0$ is indeed infinite. The configurations reachable from $c_0$ naturally form a finite-branching tree with root $c_0$. By K\"onig's Lemma there must be an infinite branch in this tree. Using the fact that $\leq$ is a well-quasi-order over configurations, we are able to extract an unboundedness witness from this branch.

\begin{lemma}[{\cite[Lemma 4.2]{Rackoff78}}]
    Let $V$ be a VASS and $c$ be a configuration of $V$. Then $\{c' \mid c \xrightarrow{*} c'\}$ is infinite if and only if there is an unboundedness witness for $c$.
\end{lemma}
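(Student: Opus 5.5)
The plan is to prove the two directions separately, since each is short.

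\emph{If there is an unboundedness witness, the reachable set is infinite.} Take a witness $c_0 \xrightarrow{u_1} \cdots \xrightarrow{u_m} c_m$ with $c_i < c_m$. Write $c_i = \Config{p}{a}$ and $c_m = \Config{p}{b}$, and let $\vec{\delta} = \vec{b} - \vec{a}$, so that $\vec{\delta} \geq \vec{0}$ and $\vec{\delta} \neq \vec{0}$. Let $\sigma = u_{i+1}\cdots u_m$, which is a cycle on $p$. We argue by induction on $k$ that $\sigma^k$ is fireable from $c_m$ and leads to $\Config{p}{(\vec{b}+k\vec{\delta})}$. The point is monotonicity: if $\sigma$ is fireable from $\vec{a}$, then it is fireable from any $\vec{a}' \geq \vec{a}$, because every intermediate vector is shifted by $\vec{a}'-\vec{a} \geq \vec{0}$. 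Since $\vec{\delta}$ is nonzero, the configurations $\Config{p}{(\vec{b}+k\vec{\delta})}$ are pairwise distinct. They are all reachable from $c_0$, so the reachable set is infinite.

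\emph{If the reachable set is infinite, there is an unboundedness witness.} Consider the reachability tree rooted at $c_0$. Each node is a configuration, and its children are the one-step successors, one for each transition. The tree is finitely branching because $T$ is finite. Every reachable configuration labels some node, so the tree has infinitely many nodes. To obtain a genuinely simple infinite branch, restrict to runs without repeated configurations. Every reachable configuration is still reached by some such run, since repetitions can be cut out. The tree of repetition-free runs is therefore infinite and finitely branching, and König's Lemma yields an infinite run $c_0 c_1 c_2 \cdots$ with pairwise distinct configurations.

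Next apply the well-quasi-order property. Colour the configurations by state; some state $p$ occurs infinitely often along the branch. The order $\leq$ on $\N^d$ is a well-quasi-order by Dickson's Lemma, so among these infinitely many configurations in state $p$ there are indices $i < j$ with $c_i \leq c_j$. Because the configurations are distinct, in fact $c_i < c_j$. The prefix $c_0 \to \cdots \to c_j$ is then an unboundedness witness.

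No step is a real obstacle. The only subtlety is making sure the branch has distinct configurations, so that Dickson's Lemma gives a strict inequality. Restricting to repetition-free runs, as above, handles this.
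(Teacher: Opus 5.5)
Your proof is correct and follows the same route as the paper's sketch. For one direction you pump the cycle from $c_i$ to $c_m$; for the converse you apply K\"onig's Lemma to the finitely branching reachability tree and then Dickson's Lemma along an infinite branch. Your restriction to repetition-free runs, which makes the inequality obtained from the well-quasi-order strict, fills in a detail the paper leaves implicit.
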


We remark that unboundedness witnesses were called \emph{self-covering paths} in \cite{Rackoff78}. Now the boundedness problem is reduced to finding unboundedness witnesses. Our result, resembling that of Rackoff \cite{Rackoff78}, gives length bounds on shortest unboundedness witnesses in terms of the geometric dimension.
Similarly as in Section~\ref{sec:coverability} we fix the following parameters of $V$:
\begin{itemize}
        \item $d \in \NN$ is the dimension,
        \item $\sumdim \leq d$ is the geometric dimension, 
        \item $n \coloneqq \abs{Q}$ is the number of states,
        \item $M \coloneqq \max\set{\infnorm{\vec{x}} : (p, \vec{x},q) \in T}$ is the magnitude of the greatest effect of any transition.
\end{itemize}

\begin{restatable}{theorem}{boundedness}
    \label{thm:boundedness}
    Let $f: \N \to \N$, $f(x) \coloneqq (4x+2)^{2x+1}$.
    If there is an unboundedness witness in VASS $V$
    for some configuration $s$ then there is also one of length bounded by $\poly(d, n, M)^{f(\sumdim)}$.
\end{restatable}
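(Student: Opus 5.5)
We follow Rackoff's Section 4 argument, with the induction on the dimension replaced by an induction on the geometric dimension, exactly as in the proof of \cref{lem:rackoff}. The main new difficulty is that a self-covering witness has a cyclic part whose effect must be nonnegative and nonzero, and ignored counters must still satisfy this. We therefore prove a stronger statement. Let $\ell_\sumdim$ bound, for every VASS $V$ of geometric dimension at most $\sumdim$ and every configuration $\Config{s}{x}$ admitting an unboundedness witness, the length of the shortest unboundedness witness $\Config{s}{x} \xrightarrow{\sigma} c_i \xrightarrow{\theta} c_m$. The cycle $\theta$ then has effect in $\cyclespace{V}$, is nonnegative and nonzero, and both $|\sigma|$ and $|\theta|$ are at most $\ell_\sumdim$. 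We aim for a recursion of the form $\ell_\sumdim \le n(d C_\sumdim)^{\sumdim} + \poly(\ell_{\sumdim-1}, d, n, M)^{O(\sumdim)}$ with $C_\sumdim = M\cdot \ell_{\sumdim-1}\cdot(\text{something})$. Unrolling this gives $\poly(d,n,M)^{f(\sumdim)}$ with $f(x) = (4x+2)^{2x+1}$; that final computation is routine.

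For the base case $\sumdim = 0$, every cycle has zero effect, so no configuration can strictly cover an earlier one. Hence no unboundedness witness exists and the statement is vacuous.

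For the inductive step, take a shortest witness and split it at the first $C$-large configuration $\Config{q}{m}$, in the sense of \cref{def:small-configurations}. If the witness contains no $C$-large configuration, then all its configurations are $C$-small and reachable from $\Config{s}{x}$. They are pairwise distinct by minimality, except that $c_m$ is new. So \cref{clm:small-configurations} bounds the whole length by $n(dC)^\sumdim + 1$.

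Otherwise, \cref{clm:c-large} gives a clean basis vector $\vec{b}_i$. Every counter in $J = \{j : \vec{b}_i[j] \ne 0\}$ is at least $C$ in $\vec{m}$, and projecting away $J$ yields $V'$ of geometric dimension at most $\sumdim-1$. The suffix of the original witness starting at $\Config{q}{m}$ contains no unboundedness witness by itself, because the covered pair may lie before $\Config{q}{m}$. So we handle two cases.

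In the first case, the pair $c_i < c_m$ lies entirely in the tail. Projecting gives $c_i' \le c_m'$ in $V'$, but the inequality may fail to be strict, since strictness might occur only on $J$. To deal with this we do not apply the induction hypothesis to unboundedness in $V'$ directly. Instead we use Rackoff's trick: the tail is an \emph{$r$-bounded} self-covering run that is strict in some coordinate, in the sense of Rackoff's Lemma 4.5. For the projected system we bound the cycle $\theta$ via a linear system. Its effect $\eff(\theta) = \sum_k \lambda_k \vec{b}_k$ lies in the $\sumdim$-dimensional cycle space, so we decompose $\theta$ into simple cycles plus a connecting path. The requirement ``$\eff(\theta) \ge \vec{0}$ and nonzero'' is a system of linear inequalities over cycle multiplicities whose matrix has rank at most $\sumdim$, because all columns lie in $\cyclespace{V}$. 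We then apply \cref{clm:linear_inequalities}, which gives multiplicities of size $(\sumdim+1)(\sumdim n M)^{\sumdim}$ after taking $\vec{b} = \vec{e}_j$ for a strict coordinate $j$.

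The inductive hypothesis on $V'$ is used only to shorten the path part. It reaches a configuration that sits above enough copies of the constituent cycles, and it is invoked through coverability (\cref{thm:coverability-sum-dim} applied to $V'$). Counters in $J$ stay nonnegative because they start at least $C$ and $C$ exceeds $M$ times the length of the new tail.

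In the second case, $c_i$ lies in the small prefix. Since both $c_i$ and $c_m$ are reachable and $c_m - c_i \in \cyclespace{V}$, the same linear-system argument applies from $c_i$ directly, again yielding a short nonnegative nonzero cycle. Simultaneously we cover the required prefix configuration using \cref{lem:rackoff}.

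We expect the main obstacle to be keeping the cycle's effect nonnegative on all $d$ counters, including those outside $J$, while bounding its length in terms of $\sumdim$ only. The key observation is that the rank in \cref{clm:linear_inequalities} is at most $\sumdim$ rather than $d$. Combined with the bound on reachable small configurations from \cref{clm:small-configurations}, this keeps every exponent dependent on $\sumdim$ alone. The extra factor $\sumdim$ in the exponent of $f$, giving $(4x+2)^{2x+1}$ instead of $(x+1)^{x+1}$, comes from this rank-$\sumdim$ blow-up at each level.
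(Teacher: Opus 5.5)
Your proposal has a genuine gap in the case where the run contains a $C$-large configuration. There, \cref{clm:linear_inequalities} applied to cycle multiplicities controls only the \emph{net} effect of the recombined cycle $\theta'$, not the counter values while $\theta'$ is executed. Counters in $J$ are safe because they start at least $C$. Counters in $[d]\setminus J$, however, are neither bounded nor large after $\Config{q}{m}$, so a cycle with reordered or changed multiplicities can drive them negative. For the same reason the tail is not ``$r$-bounded'', and invoking Rackoff's lemma there is a misapplication.

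Your proposed remedy is to use coverability in $V'$ to reach a configuration that ``sits above enough copies of the constituent cycles''. This is unjustified, because no such configuration need be reachable. A counter on which $\eff(\theta)$ is zero can be bounded by a small constant on every run, with $\theta$ firable only because it uses that counter in a particular order. Your second case has the same flaw: covering $c_i$ via \cref{lem:rackoff} does not make $\theta'$ firable from the covering configuration. Both Rackoff and the paper apply the linear-system bound only once every nonnegativity-constrained counter is bounded and absorbed into the finite control. At that point the remaining counters are unconstrained and every path lifts to a run.

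The missing idea is the paper's strengthening to VASS extended with integer counters. \cref{lem:boundedness-vass-z} is proved by induction on the restricted geometric dimension $\ressumdim=\dim\proj{I_\N}{\cyclespace{V}}$. This resolves both obstacles you identified.
\begin{itemize}
\item \textbf{Strictness on $J$.} Rather than projecting $J$ away, which can destroy strictness, the paper turns the counters in $J$ into integer counters. They still count for $\eff>\vec{0}$ but need no nonnegativity, and they stay nonnegative anyway since they start above $C=M\cdot L_{k-1}$.
\item \textbf{Covered pair before $\Config{q}{m}$.} No case split is needed. Write $\pi=\pi_1\pi_2$ with $\eff(\pi_2)>\vec{0}$. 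Then $\pitail\pi_2$ is itself an unboundedness witness for $\Config{q}{m}$: the last configuration of $\pitail$ dominates the start of $\pi_2$, so $\pi_2$ can be fired once more.
\item \textbf{All-small case.} If every configuration is $C$-small and $\pi_2$ has zero effect on $I_\N$, the finitely many $C$-small $I_\N$-parts are encoded into states. This reduces to the base case $\ressumdim=0$, where all counters are integer counters. There the Carath\'eodory plus \cref{clm:linear_inequalities} argument is sound, because every path can be lifted to a run.
\end{itemize}
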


\begin{remark}
Observe that in \cref{thm:boundedness} if we fix the geometric dimension $\sumdim$ then the corresponding bound is polynomial w.r.t.\ $d, n,$ and $M$. Moreover, the bounds do not depend on the configuration $s$.
\end{remark}

\begin{remark}\label{rem:boundedness-max}
    Similar to \cref{rem:coverability-max-dim}, using \cref{lem:sum-dim-le-max-dim-mult-num-states} we can obtain a length bound on unboundedness witnesses in terms of the SCC dimension $\maxdim$ and the number of states, namely $\poly(d, n, M)^{f(n \cdot \maxdim)}$.
\end{remark}

\subsection{Length Bound of Unboundedness Witnesses}

We will actually prove \cref{thm:boundedness} for a generalized model called \emph{VASS extended with integer counters}, which allow some counters to take negative values in a run. 

Formally, a $d$-VASS extended with integer counters is given by a triple $V_{\Z} = (V, I_\N, I_\Z)$ where $V = (Q, T)$ is a $d$-VASS and $I_\N, I_\Z$ is a partition of $[d]$. 
We remark that the underlying VASS $V$ will be the one under consideration of this section. And we use its parameters $g, n, M$ as fixed at the beginning of this section.
The set of legal configurations of $V_\Z$ is given by $\{q(\vec{x}) \in Q \times \Z^d \mid \vec{x}[i] \in \N \text{ for all } i \in I_\N\}$. That is, counters indexed by $I_\Z$ are allowed to take negative values while other counters must be kept nonnegative. 
Unboundedness witnesses are defined in the same way as for standard VASS. So an unboundedness witness for configuration $c$ is a run $\pi = \pi_1 \pi_2$ starting from $c$ such that $\pi_2$ is a cycle with positive effect $\eff(\pi_2) > \vec{0}$. However, as opposed to prior sections, VASS extended with integers counters have two natural notions of geometric dimension. We define the geometric dimension of $V_\Z$ to be the geometric dimension $\sumdim$ of $V$, i.e.\ using all counters. We also define the \emph{restricted geometric dimension} of $V_\Z$, denoted $\ressumdim(V_\Z)$, to be the geometric dimension of the VASS obtained from $V$ by removing all counters in $I_\Z$. That is, 
\begin{equation}
    \ressumdim(V_\Z) = \dim( \proj{I_\N}{\cyclespace{V}} ).
\end{equation}

We will prove \cref{thm:boundedness} for VASS extended with integer counters. As first step, we define the bounds on the length of unboundedness witnesses that we would like to establish.
\begin{align*}
    L_i = \begin{cases}
        D^{g+1} & \text{for } i = 0 \\
        (D \cdot (d \cdot M \cdot L_{i-1})^{4i})^{g+1} + L_{i-1} & \text{for } i \ge 1
    \end{cases}
\end{align*}
where $D = (5d^2n^2M)^2$. We are going to prove the following lemma.

\begin{lemma}
    \label{lem:boundedness-vass-z}
    If there is an unboundedness witness in $V_\Z = (V, I_\N, I_\Z)$ for some configuration $s$, then there is also one of length bounded by $L_{k}$ where $k = \ressumdim(V_\Z) \leq \sumdim$.
\end{lemma}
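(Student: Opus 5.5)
The argument is an induction on $k = \ressumdim(V_\Z)$, following Rackoff's proof of the boundedness bound~\cite[Section 4]{Rackoff78} but with the geometric notions of \cref{def:small-configurations} and \cref{clm:c-large} applied to the restricted cycle space $\proj{I_\N}{\cyclespace{V}}$.

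\textbf{Base case $k=0$.} Every cycle then has zero effect on the counters in $I_\N$. The only constraint on a run is nonnegativity on $I_\N$, and along a cycle with zero effect on $I_\N$ the $I_\N$-values at the start and end agree. An unboundedness witness is therefore essentially a path to some state $p$ followed by a cycle $\theta$ at $p$ such that $\eff(\theta)>\vec 0$ (the positive part may sit on $I_\Z$) and $\theta$ is executable from the reached $I_\N$-values. I would handle this with linear algebra. First bound the prefix by $n$, since with $k=0$ the $I_\N$-values at each state are determined by the state (via \cref{clm:path-difference}). Then shorten $\theta$: write its Parikh image as a solution of flow conservation plus $\eff(\theta)\geq \vec 0$ with one strict coordinate, and apply \cref{clm:linear_inequalities}. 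Since the rank is governed by $d$ and $n$, this gives a small solution. Connectivity is restored via Euler's theorem, adding one copy of a spanning cycle through the used transitions. Executability on $I_\N$ is then automatic, because intermediate $I_\N$-values along any path from $p$ are fixed, up to cycle effects that vanish on $I_\N$, by the state alone. This should produce the bound $D^{g+1}$, with the exponent $g+1$ coming from \cref{clm:linear_inequalities} applied with rank at most $g+1$ (the cycle space plus one strictness constraint).

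\textbf{Inductive step.} Take a shortest witness $\pi=\pi_1\pi_2$ from $s$ and set $C = M\cdot L_{k-1}$. Find the first configuration $c$ on $\pi$ whose $I_\N$-projection is $C$-large with respect to $\proj{I_\N}{\cyclespace{V}}$.

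If no such configuration exists, every configuration on the run is $C$-small in the restricted sense. A variant of \cref{clm:small-configurations} on the $I_\N$-coordinates bounds the number of distinct reachable restricted values by $n(dC)^k$. Because values on $I_\Z$ are unconstrained, configurations with equal state and equal $I_\N$-part may differ on $I_\Z$, so we cannot cut repetitions naively. Instead, contract the run to a path through at most $n(dC)^k$ ``abstract'' configurations, and handle the positive-cycle requirement with the same integer-programming shortening as in the base case, applied to the set of cycles at a repeated abstract configuration. This is where the factor $(dML_{k-1})^{4k}$ raised to the power $g+1$ comes from.

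If such a $c$ exists, cut $\pi$ at $c$. The prefix is bounded as above. At $c$, \cref{clm:c-large} gives a clean basis vector $\vec b_i$ whose support $J\subseteq I_\N$ has all values at least $C$. Move $J$ from $I_\N$ to $I_\Z$, obtaining $V'_\Z=(V,I_\N\setminus J, I_\Z\cup J)$ with $\ressumdim \leq k-1$, exactly as in \cref{lem:rackoff}. The suffix from $c$ is an unboundedness witness in $V'_\Z$, so by induction there is one of length at most $L_{k-1}$. Lifted back to $V_\Z$, it keeps the $J$-counters nonnegative since they start at least $M L_{k-1}$. Its final cycle still has positive effect, because the witness in $V'_\Z$ already had positive effect on all coordinates including $J$. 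This is why we keep the $J$-counters as integer counters rather than deleting them.

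\textbf{Main obstacle.} The difficult step is the ``all small'' case, which is an unboundedness witness among restricted-small configurations where the $I_\Z$ part is free. I would argue it by taking the first repetition of abstract configuration $(q,\proj{I_\N}{\cdot})$ that lies inside $\pi_2$'s relevant segment, and using \cref{clm:linear_inequalities} to select a bounded multiset of simple cycles with nonnegative total effect that is positive somewhere.
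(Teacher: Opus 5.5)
Your overall architecture matches the paper's: induction on $k$, threshold $C=M\cdot L_{k-1}$, a cut at the first $C$-large configuration, turning the support $J$ of a clean-basis vector into integer counters, and encoding $I_\N$-values into states in the all-small case. Two steps, however, do not go through as you describe them.

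\textbf{The base case.} You shorten $\theta$ via its Parikh image with flow-conservation constraints. The rank of that constraint matrix grows with the number of states: the incidence rows alone have rank $n'-1$ for the $n'$ states on $\theta$, and the effect rows add up to $\sumdim$ more. So \cref{clm:linear_inequalities} yields an exponent of order $n'+\sumdim$, not $\sumdim+1$. The strictness constraint changes only the right-hand side, not the rank. This matters, because in the inductive step the base case is invoked on the abstract VASS whose states are the set $S$ of restricted-small configurations, with $\abs{S}$ up to $n(dC)^k$. The exponent must therefore be independent of the number of states, or $L_k$ is lost entirely.

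The paper avoids transition variables altogether:
\begin{itemize}
\item it replaces $\pi_1$ by a simple path;
\item it shrinks $\pi_2$ to a skeleton $\sigma$ of length at most $n^2$ that still visits every state of $\pi_2$, by deleting simple cycles whose states were already visited;
\item it uses \cref{thm:caratheodory-integer} to keep only $O(d\log(dnM))$ of the deleted cycles;
\item its unknowns are only the multiplicities of these cycles, with $\eff(\sigma)$ moved to the other side.
\end{itemize}
The columns are then cycle effects, so the rank is at most $\sumdim$. Any solution re-attaches to $\sigma$ without an Euler argument. Your ``add one spanning cycle'' also changes the effect, so it would have to be built into the constraints, which is exactly the role $\sigma$ plays. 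Finally, ``the $I_\N$-values are determined by the state'' relies on \cref{clm:path-difference}, which needs $V$ to be a line of SCCs. The paper secures this, and the nonnegativity of those values, by first deleting all transitions not used by $\pi$.

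\textbf{The all-small case.} You miss the subcase $\proj{I_\N}{\eff(\pi_2)}>\vec{0}$. There $\pi_2$ is not a cycle of abstract configurations, and there may be no abstract cycle with positive effect at all. For example, take one state, $I_\N=\{1\}$, and a single loop $+\vec{e}_1$. The witness $p(0)\to p(1)$ is entirely $C$-small and has no repeated abstract configuration, so ``cycles at a repeated abstract configuration'' gives nothing. In general, your argument gives no bound for all-small witnesses with positive $I_\N$-effect.

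The paper's fix is to pump. Iterating $\pi_2$ produces infinitely many configurations reachable from $s$ with pairwise distinct $I_\N$-parts. But \cref{clm:small-configurations}, applied to the $I_\N$-coordinates, allows only $n(dC)^k$ small ones. Hence some iterate is $C$-large, and you are in the large case, whose bound does not depend on the length of the witness you started from. Only when $\proj{I_\N}{\eff(\pi_2)}=\vec{0}$ does $\pi_2$ become an abstract cycle whose positivity sits on $I_\Z$. That is when the base case on the abstract VASS applies, and precisely where the state-independent exponent from the base case is needed.

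Two smaller points in the large case. If $c$ lies inside $\pi_2$, the suffix alone is not a witness; use $\pitail\pi_2$ instead. Also, cutting abstract repetitions from the prefix is safe only after the tail has been replaced by the fresh witness from $c$, since otherwise a removed cycle may overlap $\pi_2$.
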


For a standard VASS $V$ we have $\ressumdim(V) = \sumdim$. Therefore, \cref{lem:boundedness-vass-z} bounds the length of shortest unboundedness witnesses by $L_{\sumdim}$. In \cref{app:boundedness} we verify that this implies \cref{thm:boundedness}, i.e.\ we do the calculation to show that \(L_{\sumdim}\) is indeed at most double exponential.

The rest of this section is devoted to proving \cref{lem:boundedness-vass-z}. Let $V_\Z = (V, I_\N, I_\Z)$ be the considered $d$-VASS extended with integer counters, and $s$ be a configuration of $V$ such that there is an unboundedness witness $\pi$ for $s$. Suppose $V = (Q, T)$.
Moreover, we may assume w.l.o.g.\ that $V$ contains only those transitions in $\pi$. Under this assumption, the structure of $V$ is simply a series of SCCs, and \cref{clm:path-difference} can be applied.
In the following we show Lemma~\ref{lem:boundedness-vass-z} by induction on the restricted geometric dimension $k = \ressumdim(V)$. 

\paragraph*{The Base Case}
\label{sec:boundedness-base-case}

First we consider the case that $k = 0$, which means that every cycle $\theta$ in $V$ satisfies $\proj{I_\N}{\eff(\theta)} = \vec{0}$. Observe however that transitions may still have effects on \(\N\)-counters, hence it is not obvious that they can be ignored. The following claim crucially uses our above assumption that \(V\) contains only those transitions on the run \(\pi\).

\begin{restatable}{claim}{NCountersIngored}
    \label{cl:N-counters-ignored-res-dim-eq-zero}
    Any path $\rho$ in $V$ can be lifted to a run $s \xrightarrow{\rho} c$ for some configuration \(c\). Moreover, for any reachable configuration $q(\vec{y})$ and any configuration $q(\vec{y}_\pi)$ on the unboundedness witness $\pi$ (with the same state), we have $\proj{I_\N}{\vec{y}_\pi} = \proj{I_\N}{\vec{y}}$.
\end{restatable}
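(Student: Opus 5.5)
The plan is to first establish the ``moreover'' part, an invariance statement about the $\N$-counters, and then derive the lifting statement from it.

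\textbf{Step 1: $\N$-counters depend only on the state.} Fix a state $q$ and two paths $\rho_1,\rho_2$ from $s$ to $q$. Since $V$ consists only of the transitions of $\pi$, it is a line of SCCs, so \cref{clm:path-difference} applies. Hence $\eff(\rho_1)-\eff(\rho_2)\in\cyclespace{V}$.

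Because $k=\ressumdim(V_\Z)=0$, every cycle effect projects to $\vec 0$ on $I_\N$. By linearity, $\proj{I_\N}{\cyclespace{V}}=\{\vec 0\}$, and therefore $\proj{I_\N}{\eff(\rho_1)}=\proj{I_\N}{\eff(\rho_2)}$. It follows that for each state $q$ reachable by a path from $s$ there is a unique vector $\vec r_q\in\Z^{I_\N}$ such that every path from $s$ to $q$ has $I_\N$-effect $\vec r_q$. So any $\Z$-run from $s=\Config{s}{x}$ to $q$ ends with $I_\N$-part $\proj{I_\N}{\vec x}+\vec r_q$.

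\textbf{Step 2: nonnegativity.} Every state of $V$ is visited by $\pi$, since $V$ has only the transitions of $\pi$. Take any state $q$ and a configuration $\Config{q}{y_\pi}$ on $\pi$. The prefix of $\pi$ reaching it is a path from $s$ to $q$, so $\proj{I_\N}{\vec y_\pi}=\proj{I_\N}{\vec x}+\vec r_q$. Because $\pi$ is a legal run, this vector is nonnegative.

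\textbf{Step 3: lifting.} Take an arbitrary path $\rho$ from $s$ in $V$. We need $\rho$ to start at state $s$; if $\rho$ starts elsewhere the statement is vacuous, so we read it as paths from $s$. Follow $\rho$ as a $\Z$-run; this is always possible. At every prefix ending in some state $q$, Step 1 says the $I_\N$-part equals $\proj{I_\N}{\vec x}+\vec r_q$, and Step 2 says this is nonnegative. The counters in $I_\Z$ carry no constraint. Hence every intermediate configuration is legal, and $s\xrightarrow{\rho}c$ is a run of $V_\Z$.

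The moreover part then follows: any reachable $\Config{q}{y}$ is reached by some path, so $\proj{I_\N}{\vec y}=\proj{I_\N}{\vec x}+\vec r_q=\proj{I_\N}{\vec y_\pi}$.

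\textbf{Main obstacle.} The main subtlety is justifying the use of \cref{clm:path-difference}, i.e.\ that the restricted $V$ is a line of SCCs. This holds because $V$ is the graph traced by the single path $\pi$. One should also check that projecting onto $I_\N$ commutes with the span argument; this is immediate since projection is linear.
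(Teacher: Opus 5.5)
Your proposal is correct and takes essentially the same route as the paper. The paper also compares $\rho$ with a prefix of $\pi$ ending in the same state via \cref{clm:path-difference}, uses $\ressumdim(V_\Z)=0$ to make the difference vanish on $I_\N$, and inherits nonnegativity from the legality of $\pi$. Your prefix-by-prefix legality check is just the paper's induction on the length of $\rho$, written out directly.
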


By Claim~\ref{cl:N-counters-ignored-res-dim-eq-zero}, in order to find an unboundedness witness it is enough to find paths with certain effects, since every path can be lifted to a run. Therefore in the following we assume $I_\N = \emptyset$, that is, $V_\Z$ has only integer counters. We decompose the unboundedness witness $\pi$ as $\pi = \pi_1\pi_2$ such that $\pi_2$ is a cycle with $\eff(\pi_2) > \vec{0}$. Since all counters are integer counters, we can replace $\pi_1$ by a simple path connecting its source and target states, so that $|\pi_1| \leq n$. We now express $\pi_2$ as a solution of a system of linear inequalities, and rely on~\cref{clm:linear_inequalities} to obtain a bound of its length. Let $Q' \subseteq Q$ and $T' \subseteq T$ be the subsets of states and transitions that appear in $\pi_2$. Simple cycles that use only transitions from $T'$ will be called $T'$-cycles. We first shrink $\pi_2$ into a short run $\sigma$ as follows: whenever there is a simple cycle $\theta$ as an infix of $\pi_2$, such that every state visited by $\theta$ is also visited by the prefix of $\pi_2$ before $\theta$, we remove $\theta$ from $\pi_2$. Notice that $\sigma$ also visits every state in $Q'$. Also observe that $|\sigma| \le n^2$ as $\sigma $ visits every state at most $n$ times. Let $\mathcal{S}$ be the set of simple cycles removed from $\pi_2$. For each cycle $\theta \in \mathcal{S}$ we have $\NORM(\eff(\theta)) \leq nM$ as it is a simple cycle. Observe that $\eff(\pi_2) - \eff(\sigma) \in \intcone(\eff(\mathcal{S}))$. Therefore, because of \Cref{thm:caratheodory-integer} we can choose $\mathcal{C} \subseteq \mathcal{S}$ such that $|\mathcal{C}| \leq 2d\log(4dnM) \leq 4d^2nM$ and $\eff(\pi_2) - \eff(\sigma) \in \intcone(\eff(\mathcal{C}))$. We define a system $U$ of $d$ linear inequalities, with unknowns $x_\theta$ correspond to $T'$-cycles $\theta$ from $\mathcal{C}$. Concretely, for $i \in [1,d]$ we have inequality:
\begin{align}
    \sum_{\theta \in \mathcal{C}}x_\theta \cdot \eff(\theta)[i] + \eff(\sigma)[i] \geq [\eff(\pi_2)[i] > 0]
    \label{eq:ilp-unboundedness}
\end{align}
where $[]$ is the Iverson bracket.

Any solution of $U$ induces a run $\pi_2'$ obtained by attaching to $\sigma$ each cycle $\theta \in \mathcal{C}$ repeated for $x_\theta$ times. This is possible as $\sigma$ visits every state in $Q'$. Also, \cref{eq:ilp-unboundedness} ensures that $\eff(\pi_2') > \vec{0}$. Thus $\pi' := \pi_1\pi_2'$ is also an unboundedness witness for $s$.

Clearly $U$ has a nonnegative integer solution, namely the one obtained from $\pi_2$ by setting $x_\theta$'s to be the coefficients witnessing $\eff(\pi_2) - \eff(\sigma) \in \intcone(\eff(\mathcal{C}))$. Suppose $U$ is expressed in the form $A\vec{x} \ge \vec{b}$. Then columns of $A$ are effects of simple cycles. We conclude that $\rank(A) \le g$. Also, each entry in $A$ and $\vec{b}$ is bounded in absolute value by $Mn^2$. Hence, by \cref{clm:linear_inequalities}, we can bound the max-norm of minimal solutions of $U$ by $(g+1)(g\cdot Mn^2)^{g} \leq (2gMn^2)^{g+1} \leq (2dMn^2)^{g+1}$. The length of $\pi_2'$ yielded by any minimal solution is then bounded by $(4d^2nM) \cdot n \cdot (2dMn^2)^{g+1} + n^2 \le (4d^2n^2M)^{2g+2}$.
Therefore, the length of the new unboundedness witness $\pi_1\pi_2'$ is bounded by $(4d^2n^2M)^{2g+2} + n \leq (5d^2n^2M)^{2g+2} = L_{0}$. This completes the proof of the base case.

\paragraph*{The Induction Step}

Now we consider the case that $k = \ressumdim(V_\Z) > 0$. Let $V_\N$ be the $d'$-VASS obtained from $V_\Z$ by removing all counters in $I_\Z$, so $d' = |I_\N|$. Recall the definition of $C$-small/large configurations for standard VASS (\cref{def:small-configurations}). Here we say a configuration $q(\vec{x})$ in $V_\Z$ is $C$-small/large if the projected configuration $q(\proj{I_\N}{\vec{x}})$ is $C$-small/large in $V_\N$. Take $C = L_{k-1} \cdot M$. We consider two cases depending on whether the unboundedness witness $\pi$ for $s$ contains a $C$-large configuration.

\textbf{Case 1.} Assume $\pi$ contains a $C$-large configuration. Let $\Config{q}{m}$ be the first $C$-large configuration on $\pi$, which splits $\pi$ into $s \trans{\pismall} \Config{q}{m} \trans{\pitail} t$.
By definition of unboundedness witness, $\pi$ can also be factored into $\pi = \pi_1\pi_2$ with $\pi_2$ being a cycle satisfying $\eff(\pi_2) > \vec{0}$. Observe that $\pi' := \pitail\pi_2$ is an unboundedness witness for $\Config{q}{m}$. The strategy is to use the induction hypothesis to replace $\pi'$ by a length-bounded run. 

Since $\Config{q}{m}$ is $C$-large, by \cref{clm:c-large} there exists a clean basis $B = \{\vec{b}_1, \ldots, \vec{b}_k\}$ of the vector space $\cyclespace{V_\N}$ and an index $i \in [k]$, such that for all $j \in I_\N$, $\vec{m}[j] \ge C$ whenever $\vec{b}_i[j] \ne 0$. Let $I := \{j \in I_\N \mid \vec{b}_i[j] \ne 0\}$. We turn every counter in $I$ into an integer counter. That is, we consider the VASS $V_\Z' := (V, I_\N \setminus I, I_\Z \cup I)$. It is easy to see that $\ressumdim(V_\Z') < \ressumdim(V_\Z) = k$, and that $\pi'$ is still an unboundedness witness for $\Config{q}{m}$ in $V_\Z'$. By induction hypothesis, there is also an unboundedness witness $\rho$ for $\Config{q}{m}$ in $V_{\Z}'$ with $|\rho| \le L_{k-1}$. We claim that $\rho$ induces a legal run in $V_\Z$ from $\Config{q}{m}$. For this we only need to consider counters in $I$. Notice that for any $j \in I$, $\vec{m}[j] \ge C \ge |\rho| \cdot M$, while for any prefix $\rho$ of $\rho$ we have $\eff(\rho')[j] \le |\rho| \cdot M$. So values of counters in $I$ will not drop below zero along $\rho$. This shows that $\rho$ is an unboundedness witness for $\Config{q}{m}$ also in $V_\Z$, and thus $\pismall\rho$ is an unboundedness witness for $s$. Now it is safe to remove from $\pismall$ any cycle $\theta$ with $\proj{I_\N}{\eff(\theta)} = \vec{0}$, which has no effect on the $\N$-counters. So we assume that no two configurations on $\pismall$ have the same values on counters in $I_\N$, and moreover, all configurations on $\pismall$ except the last one are $C$-small. By \cref{clm:small-configurations} we deduce that $|\pismall| \le n \cdot (d \cdot C)^k$. Hence, $|\pismall \rho| \le n \cdot (d \cdot C)^k + L_{k-1} \le L_{k}$.

\textbf{Case 2.} Assume all configurations on $\pi$ are $C$-small. Split $\pi$ into $\pi_1\pi_2$ where $\pi_2$ is a cycle with $\eff(\pi_2) > \vec{0}$. If $\proj{I_\N}{\eff(\pi_2)} > \vec{0}$, then by repeating $\pi_2$ sufficiently many times one eventually encounters a $C$-large configuration, which reduces to Case 1. Hence, we assume next that $\proj{I_\N}{\eff(\pi_2)} = \vec{0}$. The strategy is to encode the $\N$-counters into states and reduce to the base case where $\ressumdim(V_\Z) = 0$.

Let $S := \{(q, \proj{I_\N}{\vec{y}}) \mid q(\vec{y}) \text{ is a $C$-small configuration reachable from }s\}$. By \cref{clm:small-configurations} we have $|S| \le n \cdot (d \cdot C)^k$. We construct a VASS $V'$ whose states are $S$ and whose transitions are of the form $(p, \vec{m}) \trans{\proj{I_\Z}{\vec{a}}} (q, \vec{n})$ if there is a transition $p \trans{\vec{a}} q$ in $V_\Z$ with $\proj{I_\N}{\vec{a}} = \vec{n} - \vec{m}$. We set all counters in $V'$ to be integer counters, so $\ressumdim(V') = 0$. Now the unboundedness witness $\pi = \pi_1\pi_2$ can be mapped to a run $\pi' = \pi_1'\pi_2'$ in $V'$. Moreover, $\pi_2'$ is still a cycle in $V'$ as we have assumed that $\proj{I_\N}{\eff(\pi_2)} = \vec{0}$. Also, $\eff(\pi_2') = \proj{I_\Z}{\eff(\pi_2)} > \vec{0}$. Hence $\pi'$ is an unboundedness witness in $V'$. Using the induction base, there is an unboundedness witness $\rho$ with the same source as $\pi'$ such that $|\rho| \le (5d^2|S|^2M)^{2g+2} \le (5d^2M(n \cdot (d \cdot C)^k)^2)^{2g+2} \le L_{k}$. Finally it is easy to observe that $\rho'$ can be mapped back to a run in $V_\Z$ that is also an unboundedness witness for $s$.


\section{Integer Reachability in Geometric Dimension}
\label{sec:z-reachability}

In this section we prove that in geometric dimension, \(\Z\)-reachability behaves the same as in the case when we parameterise by the number of counters. We have the following theorem:

\begin{theorem}
\label{thm:z-vass-sum}
Let $V$ be a VASS of geometric dimension $g$. For any two $\Z$-configurations $s$ and $t$ such that $t$ is reachable from $s$ by a $\Z$-run, there exists also a $\Z$-run $\pi$ from $s$ to $t$ such that $|\pi| \leq \oo(M)^{6g+1}$ where $M = \max(\size(V), \infnorm{s}, \infnorm{t})$. 
\end{theorem}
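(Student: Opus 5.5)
We prove \cref{thm:z-vass-sum} with the standard Euler/Parikh-image decomposition of $\Z$-runs, measuring the resulting integer program by the rank of its matrix rather than by $d$, as in the base case of \cref{lem:boundedness-vass-z}. Let $\pi$ be a $\Z$-run from $s = p(\vec{x})$ to $t = q(\vec{y})$. Since $\Z$-runs have no nonnegativity constraints, only the multiset of transitions matters, together with connectivity. We proceed in the same way as the shrinking argument of the base case of \cref{lem:boundedness-vass-z}. Repeatedly delete from $\pi$ any simple cycle infix all of whose states are visited earlier. The result is a path $\sigma$ from $p$ to $q$ with $|\sigma| \le n^2$ that visits every state of $\pi$. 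Let $\mathcal{S}$ be the removed simple cycles; then $\eff(\pi) - \eff(\sigma) \in \intcone(\eff(\mathcal{S}))$. Any nonnegative integer combination of cycles in $\mathcal{S}$ can be reinserted into $\sigma$, because each such cycle shares a state with $\sigma$. Hence it suffices to find a small $\vec{z} \in \N^{\mathcal{S}}$ with $\sum_\theta \vec{z}_\theta \eff(\theta) = \vec{y} - \vec{x} - \eff(\sigma)$.

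Next we reduce the number of cycle types using \cref{thm:caratheodory-integer}. Each simple cycle has effect of max-norm at most $nM'$, where $M'$ is the maximal transition norm. We pick $\mathcal{C} \subseteq \mathcal{S}$ with $|\mathcal{C}| \le 2d\log(4dnM')$ and $\vec{y}-\vec{x}-\eff(\sigma) \in \intcone(\eff(\mathcal{C}))$. A problem is that this bound still depends on $d$ rather than $g$. To fix this, we apply \cref{thm:caratheodory-integer} in a $g$-dimensional coordinate system: all cycle effects lie in $\cyclespace{V}$, so we choose a clean basis with distinguished coordinates $K$ (\cref{def:clean-basis}). The projection $\proj{K}{\cdot}$ is injective on $\cyclespace{V}$ and maps integer vectors to integer vectors of the same norm. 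We also note that $\vec{y}-\vec{x}-\eff(\sigma) \in \cyclespace{V}$, since by \cref{clm:path-difference} it differs from the effect of a sum of cycles. So it suffices to work in $\Z^g$, which gives $|\mathcal{C}| \le 2g\log(4gnM') = \oo(g\log M)$.

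With this, the system $\sum_{\theta \in \mathcal{C}} x_\theta \proj{K}{\eff(\theta)} = \proj{K}{\vec{y}-\vec{x}-\eff(\sigma)}$ consists of $g$ equations. Its coefficients are bounded by $nM' \le M^2$ and its right-hand side by $\oo(M^3)$. We apply \cref{lem:linear_equations} with $g$ equations and $|\mathcal{C}|$ unknowns, obtaining a solution of max-norm at most $\oo(|\mathcal{C}| M^3)^g$. Multiplying by $|\mathcal{C}|$ cycles of length at most $n$ and adding $|\sigma| \le n^2$ gives a total length of roughly $\oo(M)^{3g+\oo(1)}$, comfortably within $\oo(M)^{6g+1}$. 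The slack absorbs cruder estimates, for instance if we use \cref{clm:linear_inequalities} with two inequalities per equation instead.

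The main obstacle is the connectivity and reinsertion bookkeeping. The reinserted solution must still yield a single walk from $p$ to $q$. This is handled by keeping $\sigma$ visiting all states of $\pi$, and by the fact that nonnegative multiples of simple cycles attached at visited states always splice in. A second delicate point is justifying that the right-hand side lies in $\cyclespace{V}$, so that projecting to $K$ loses no information. For this we rely on the assumption, already made earlier in the paper, that $V$ is a chain of SCCs (restricting to the transitions of $\pi$), which makes \cref{clm:path-difference} applicable.
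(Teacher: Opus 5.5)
Your proposal is correct and is essentially the paper's proof. It uses the same steps: decompose the run into a short path through all visited states plus simple cycles, apply integer Carath\'eodory, cut down to $g$ equations, apply \cref{lem:linear_equations}, and reinsert the cycles. Projecting onto the distinguished coordinates $K$ of a clean basis is just a concrete way of selecting the $g$ independent equations that the paper picks by rank, and doing it before Carath\'eodory only makes $|\mathcal{C}|$ depend on $g$ rather than $d$, which is immaterial since $d\le M$.

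Two small corrections. First, $\vec{y}-\vec{x}-\eff(\sigma)=\sum_{\theta\in\mathcal{S}}\eff(\theta)$ lies in $\cyclespace{V}$ trivially, so neither \cref{clm:path-difference} nor the chain-of-SCCs restriction is needed. Second, your aside about applying \cref{clm:linear_inequalities} with two inequalities per equation does not work, because that claim requires $\vec{b}\geq\vec{0}$.
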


\begin{proof}
Let $s = q(w)$ and $t = q(w')$. Let $\sigma$ be a $\Z$-run between $s$ and $t$. Similar to what we have done in the base case of \cref{lem:boundedness-vass-z} (see Page \pageref{sec:boundedness-base-case}), we express a $\Z$-run between $s$ and $t$ as a solution of a Diophantine system of linear equations, and rely on Lemma~\ref{lem:linear_equations}. Let $Q' \subseteq Q$ and $T' \subseteq T$ be the subsets of states and transitions that appear in $\sigma$. Simple cycles that use only transitions from $T'$ we call $T'$-cycles. The $\Z$-run $\sigma$ decomposes into a $\Z$-run
$\sigma_0$ that visits all states of $Q'$ and set $\mathcal{S}$ of simple $T'$-cycles. Choose the shortest such $\sigma_0$. Observe, that $\sigma_0$ visits each state at most $|Q'| \leq |Q| \leq M$ times, as otherwise it could be shortened, and therefore its effect has norm at most $M^3$. For each cycle $\theta \in \mathcal{S}$ we have $\infnorm{\eff(\theta)} \leq M^2$ as it is a simple cycle. Observe, that $t - s - \eff(\sigma_0) \in \intcone(\eff(\mathcal{S}))$. Therefore, because of \Cref{thm:caratheodory-integer} we can choose $\mathcal{C} \subseteq \mathcal{S}$ such that $|\mathcal{C}| \leq 2d \log(4dM) \leq 8d^2M \leq 8M^3$ such that $t-s-\eff(\sigma_0) \in \intcone(\eff(\mathcal{C}))$. 
We define a system $U$ of $d$ linear equations (one for each dimension), whose unknowns $x_\theta$ correspond to $T'$-cycles $\theta$ from $\mathcal{C}$:
$$\sum_{\theta \in \mathcal{C}} x_{\theta} \cdot \eff(\theta) = t - s - \eff(\sigma_0)$$
The system has a nonnegative integer solution, namely the one obtained from the fact that $t - s - \eff(\sigma_0) \in \intcone(\eff(\mathcal{C}))$. The system $U$ can be expressed as $A \cdot x = b$ for some matrix $A$. Observe, that rank of the matrix $A$ is at most $g$ as columns of $A$ are effects of simple cycles in $V$. Hence we can choose at most $g$ linearly independent equations from the system $U$ and create a system $U'$, which can be expressed as $A' \cdot x = b'$ for some $A' \in \Z^
{g \times |\mathcal{C}|}$ and $b' \in \Z^g$.  As all coefficients of $U'$ are bounded by $\oo(M^3)$, by Lemma~\ref{lem:linear_equations} the system has a solution of norm
$\oo((|\mathcal{C}|M^3))^g = \oo(M)^{6g}$. The solution yields a $\Z$-run $\pi$ from $s$ to $t$ of length $|\sigma_0| + M \cdot \oo(M)^{6g} = \oo(M)^{6g+1}$, consisting, of $\sigma_0$ with attached all cycles $\theta \in \mathcal{C}$ (this is possible, as $\sigma_0$ visits all states used by the cycles), each $\theta$ iterated $x_\theta$ times. This completes the proof.
\end{proof}

\section{Tower-Hardness of Reachability in SCC Dimension 4}
\label{sec:tower-max-dim-4-lower-bound}
In this section we prove the following theorem.

\begin{restatable}{theorem}{maxDimFourTowerHard}
    \label{thm:max-dim-4-tower-hard}
    Reachability in VASS of SCC dimension $\maxdim = 4$ is \tower-hard.
\end{restatable}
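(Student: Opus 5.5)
We plan to adapt the \tower-hardness construction for reachability in $8$-VASS from~\cite{CzerwinskiO22} so that every SCC of the resulting VASS spans a cycle space of dimension at most $4$. The total number of counters may be large, and even the geometric dimension $\sumdim$ may be large. The reduction will be from reachability in counter programs with zero tests, where counters are bounded by $\Tower(n)$ (equivalently, from a \tower-bounded counter machine). This problem is \tower-hard.

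\textbf{Structure of the construction.} The construction of~\cite{CzerwinskiO22} builds, level by level, a VASS that computes a ratio between counters of the form $(B, B\cdot F(B))$ (a multiplication triple). It then uses that ratio to simulate zero tests via the ``controlling counter'' technique: a counter $c$ is zero-tested by maintaining $c + \hat c = B$ and checking at the end that an accumulated counter equals $B \cdot (\text{number of tests})$. The key observation is that, at any moment, only a few counters are actively modified inside a single loop; all other counters are idle. Moreover, the different phases of the program are sequenced in a DAG-like (in fact linear) fashion. Since the SCC dimension only measures, per SCC, the span of cycle effects, we can afford unboundedly many counters overall, provided each loop (SCC) touches only $4$ of them in a linearly dependent enough way.

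\textbf{Key steps.} First, we unroll the program. For a fixed input size $n$, the number of levels of the tower construction is $n$. We make each level use fresh copies of counters instead of reusing the $8$ counters, so that control flows through a sequence of SCCs (copying a counter into a fresh one requires a loop, but a transfer loop $x \mapsto x'$ has a $1$-dimensional cycle space). Second, we check each gadget of~\cite{CzerwinskiO22} (the amplifier/multiplier loops, e.g.\ the nested loops computing $b \mapsto b\cdot 2^{b}$, or the ratio-squaring gadgets). We identify the inner SCCs and verify that the cycle effects of each SCC span a space of dimension at most $4$. Where a gadget's SCC involves more counters (nested loops put inner and outer loops into one SCC), we restructure it: we replace counters whose changes are fixed linear combinations of others by copies, which do not raise the dimension. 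Alternatively, we decompose the gadget so the outer loop's cycle space stays $4$-dimensional. Third, the final zero-tested program is simulated by one outer loop over the program's states. Each zero test is implemented by the controlling-counter technique: the simulated counters $c_1,c_2$, their complements, and an accumulator. We choose encodings so that the effects of cycles in this SCC lie in a $4$-dimensional space, e.g.\ by using the invariant $c+\hat c = B$ so that $\hat c$ is a linear function of $c$, and by making the accumulator's increments linear combinations of the other counters' increments.

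\textbf{Main obstacle.} The crux is the last point: the main simulation SCC and the multiplication gadgets of~\cite{CzerwinskiO22} genuinely use around $8$ counters within one strongly connected component. We must argue that their cycle effects span only $4$ dimensions. The first thing to try is to count linearly independent ``directions of change.'' Complement counters and accumulators move in lockstep with primary counters, so the cycle space is spanned by the updates of a few primary quantities (e.g.\ two simulated counters plus two gadget counters). If this count exceeds $4$ in some gadget, we fall back to serialising that gadget into a chain of SCCs. This is possible because nested loops in these constructions can often be flattened using previously computed multiplication triples stored in fresh counters.
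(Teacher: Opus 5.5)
Your overall frame matches the paper's: reduce from a \tower-bounded counter machine, chain the $n$ amplifiers of~\cite{CzerwinskiO22} on fresh counters so each occupies its own SCCs, and bound each SCC's cycle space using the fact that the reservoir counter of a multiplication triple moves by a fixed linear combination of the tested counters.

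\textbf{The gap.} The step you flag as the ``main obstacle'' is exactly where a new idea is needed, and your fallbacks do not supply it. In the amplifier of~\cite{CzerwinskiO22}, one loop doubles both \Counter{b'} and \Counter{d'} (through the auxiliary \Counter{t}) while paying for its zero tests with the input triple in swapped roles. Apart from $\Delta\Counter{b'}+\Delta\Counter{d'}+\Delta\Counter{t}+\Delta\Counter{c}=0$, the six counters involved move independently, so this SCC has dimension $5$.

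Splitting it into one loop for \Counter{b'} and a later one for \Counter{d'} is unsound unless something forces the two loops to run equally often. Otherwise the output is $(2^{k_1},C',2^{k_2}C')$, which is not a multiplication triple. Your remark about flattening ``using previously computed multiplication triples'' points the right way but misses the mechanism. The paper carries \emph{two identical copies} of each triple. It doubles \Counter{b'} in one loop whose zero tests are paid for by copy~$1$, and \Counter{d'} in another loop paid for by copy~$2$. Requiring $\Counter{d}_1=\Counter{d}_2=0$ at the end forces each loop to run exactly the prescribed number of times. Each loop then touches \Counter{b'} (resp.\ \Counter{d'}), \Counter{t}, $\Counter{c}_i,\Counter{b}_i,\Counter{d}_i$ with one dependency, i.e.\ dimension $4$. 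The output counters are maintained in duplicate, updated in lockstep, which adds nothing to the cycle space, so the next amplifier again receives two copies.

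\textbf{Incorrect dimension-saving devices.} Some of your other devices are also wrong.
\begin{itemize}
\item The sensor (accumulator) counter cannot have increments that are a linear combination of the other counters' increments. A passed zero test is a cycle with zero net effect on the tested counters and the reservoir but nonzero effect on the sensor (the four transfer loops sum to $-4$ on \Counter{d}). That is precisely what lets it detect a failed test.
\item The test-budget counter \Counter{c} is likewise independent, since tests occur inside loops of the simulated machine.
\end{itemize}
So the simulation SCC has dimension $2$ plus the number of simulated counters, and $\maxdim=4$ leaves room for only two. You silently use two ($c_1,c_2$), but the source problem of~\cite{CzerwinskiO22} has three counters, which would give $5$. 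You need the reduction to \tower-bounded $2$-counter machines; the paper obtains it via a sum-invariant $4$-counter machine and the encoding $2^a3^b5^c7^d$.

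Finally, replacing counters that already move as linear combinations of others ``by copies'' cannot lower the dimension of any SCC, because such counters contribute nothing to begin with.
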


The proof is based on a modification of \cite[Theorem 4]{CzerwinskiO22} which showed \tower-hardness for 8-VASS. The general strategy is to reduce from the \tower-bounded reachability problem in two-counter machines. Two-counter machines can be viewed as 2-VASSes enhanced with the ability of testing equality with zero on each counter, while \tower-boundedness requires values of both counter to stay below $\tower(n)$, where $n \in \N$ is part of the input. We have to implement zero-tests on bounded counters in VASS without increasing the SCC dimension too much. We will rely on a useful gadget called \emph{multiplication triples} \cite{CzerwinskiLLLM19}.

A multiplication triple consists of three VASS counters $\Counter{b}, \Counter{c}, \Counter{d}$ holding values $B, C, B\cdot C$. For any counter $\Counter{x}$ bounded by $B$ we can simulate $C/2$ zero tests on it. The procedure is as follows: we try to move values from $\Counter{b}$ to $\Counter{x}$ and then back to $\Counter{b}$, in the meantime we decrease $\Counter{d}$ by the amount of value actually moved. Finally, $\Counter{c}$ is decreased by two. Since \(\Counter{x}\) is bounded by \(B\), \(\Counter{d}\) was decreased by at most \(2B\). To maintain \(\Counter{d} \leq \Counter{b} \cdot \Counter{c}\), we must have been able to decrease \(\Counter{d}\) by \(2B\), i.e.\ we must have \(\Counter{x}=0\). This can be extended to zero-testing multiple counters $\Counter{x}_1, \Counter{x}_2, \ldots, \Counter{x}_k$, by moving values between $\Counter{b}$ and these counters like a chain, see \cite{CzerwinskiO22}. 
We call $\Counter{d}$ the sensor counter of this multiplication triple. It is important to notice that we always maintain the invariant $\Counter{b} + \Counter{x}_1 + \Counter{x}_2 + \cdots + \Counter{x}_k = B$, wherefore counter \(\Counter{b}\) does not increase the SCC dimension \(\maxdim\), i.e.\ a multiplication triple increases \(\maxdim\) only by \(2\).

We shall construct a VASS of two parts: the first part generates a multiplication triple $\Counter{b}, \Counter{c}, \Counter{d}$ with $\Counter{b} = \tower(n)$, and the second part simulates the 2-counter machine using this triple. As noted above, the SCC dimension in the second half is at most $2 + 2 = 4$, however the first part requires more care.

In order to generate the tower-sized multiplication triple, a so-called \emph{amplifier} was utilized in \cite{CzerwinskiO22}. An amplifier consumes a multiplication triple with values $(\Counter{b}, \Counter{c}, \Counter{d}) = (B, C, B \cdot C)$ and generates a new one with values $(\Counter{b}', \Counter{c}', \Counter{d}') = (2^B, C', 2^B \cdot C')$. Intuitively, it starts with a triple $(\Counter{b}', \Counter{c}', \Counter{d}') = (1, C', C')$ and doubles $\Counter{b}'$ and $\Counter{c}'$ for $B$ times. Each doubling requires zero-tests, which are implemented using the triple $(\Counter{c}, \Counter{b}, \Counter{d})$, with the roles of $\Counter{b}, \Counter{c}$ swapped to ensure the doubling is performed exactly $B$ times. The amplifier needs zero-tests on three counters: $\Counter{b}', \Counter{d}'$ and an additional counter used to store the result of each doubling. This gives SCC dimension $3 + 2 = 5$, which is not optimal. To improve this, notice that if we have two identical copies of the input triple $(\Counter{b}, \Counter{c}, \Counter{d})$, we can amplify $\Counter{b}'$ and $\Counter{d}'$ separately. Now each amplification procedure yields SCC dimension $4$. Of course, we need to generate two identical copies of $\Counter{b}', \Counter{c}', \Counter{d}'$ to continue this amplification. But adding copies of counters does not change the SCC dimension. Notice that the amplifier succeeds if and only if the sensor counters in the two copies of its input triples become zero at the end.

By concatenating $n$ copies of the amplifier, we are able to generate the $\tower(n)$-multiplication triple. We remark that each amplifier should use fresh new counters for its output. This does not increase the SCC dimension as each amplifier works in its own SCC. This way, we are allowed to check if all amplifiers and multiplication triples function correctly by checking whether all the sensor counters equal zero at the end of the run, which can be encoded into the reachability problem. Therefore, we conclude that reachability in VASS of $\maxdim = 4$ is \tower-hard. A more detailed proof of \cref{thm:max-dim-4-tower-hard} can be found in \cref{app:tower-hardness-max-4}.

\section{Future Research}
\label{sec:future}
In~\cref{sec:coverability}, we proved that coverability in VASS parameterised by geometric dimension $\sumdim$ is witnessed by double exponential length runs.
Specifically, we argued that the exponent only depends on $\sumdim$ (\cref{thm:coverability-sum-dim}).
As we touched on in~\cref{rem:coverability-max-dim}, this does not allow us to obtain runs with less than double exponential length for instances of coverability in VASS with fixed SCC dimension $\maxdim$.
However, we conjecture that the coverability problem in VASS with fixed SCC dimension can be witnessed by exponential length runs.

\begin{conjecture}\label{conj:coverability-max-dim-pspace}
    For every fixed $\maxdim \in \N$,
    the coverability problem in VASS with SCC dimension $\maxdim$ is in \class{PSpace}.
\end{conjecture}

Unlike for geometric dimension, one cannot hope for sub-exponential length runs for coverability in VASS with fixed SCC dimension; see the following example.

\begin{example}\label{ex:exp-path-length}
    Consider a $d$-VASS with $d$ states $q_1, \ldots, q_d$.
    For every $i \in [d-1]$, we have a transition $q_i \trans{\vec{0}} q_{i+1}$.
    Furthermore, for every $i \in [d] \setminus \{1\}$, there is a self-loop
    $q_i \trans{-2\vec{e}_{i-1} + \vec{e}_i} q_i$; and we also add the self-loop $q_1 \trans{\vec{e}_1} q_1$. The shortest run from $q_1(\vec{0})$ that covers $q_d(\vec{e}_d)$ is
    \begin{multline*}
        q_1(\vec{0}) \tran q_1(2^{d-1} \vec{e}_1) \tran q_2(2^{d-1} \vec{e}_1) \tran q_2(2^{d-2} \vec{e}_2) \\
        \tran q_3(2^{d-2} \vec{e}_2) \tran q_3(2^{d-3} \vec{e}_3) \tran \ldots \tran q_{d-1}(2 \vec{e}_{d-1}) \tran q_d(2\vec{e}_{d-1}) \tran q_d(\vec{e}_d).
    \end{multline*}
    The length of this run is exponential in $d$. 
\end{example}

One attempt to prove~\cref{conj:coverability-max-dim-pspace} would be the following \emph{strong-linearity conjecture} in VASS parameterised by geometric dimension.
\cref{conj:strong-linearity} is a substantial strengthening of~\cref{thm:coverability-sum-dim} which has already been stated for VASS parameterised by the standard dimension $d$~\cite{Hack76,Jecker22,PilipczukSS25}.

\begin{conjecture}\label{conj:strong-linearity}
    There is some exponential function $f: \N \to \N$ such that, if there is a run in $V$ from $\Config{s}{x}$ that covers $\Config{t}{y}$, then there exists a run in $V$ from $\Config{s}{x}$ that covers $\Config{t}{y}$ of length at most $\infnorm{\vec{y}} \cdot \poly(d,n,M)^{f(\sumdim)}$.
\end{conjecture}

\begin{proof}[Proof of~\cref{conj:coverability-max-dim-pspace} assuming~\cref{conj:strong-linearity}]
    It is well-known that a bound $L$ on the length of a covering run implies that the backward coverability algorithm stabilises after at most $L$ many steps (and vice versa). 
    We consider the backward coverability algorithm, starting from the target configuration $\Config{t}{y}$. We first focus on the last SCC. 
    By~\cref{conj:strong-linearity}, one can observe that running the backwards coverability algorithm in the last SCC takes at most $\infnorm{\vec{y}} \cdot \poly(d,n,M)^{f(g)}$ steps.
    This means that minimal configurations that can cover the target configuration are of size at most $L \cdot M \cdot \poly(d,n,M)^{f(g)} = L \cdot \poly(d,n,M)^{f(g)}$.
    Then, we repeat this argument for each SCC in reverse order to obtain a bound on the size of the minimal configurations that can cover the target configuration (in particular the length of the shortest covering run) equal to $\infnorm{\vec{y}} \cdot \poly(d,n,M)^{|Q| \cdot f(\maxdim)}$, which for fixed $\maxdim$ is $\infnorm{\vec{y}} \cdot \poly(d, n, M)^{|Q|} = f_g(d, n, M, L)$, for some exponential function $f_g$.
    Hence coverability in VASS with fixed SCC dimension $\maxdim$ is witnessed by exponential length runs which allows us to conclude that coverability is in \class{PSpace}.
\end{proof}

Another direction for future research is the boundedness problem. Example \ref{ex:exp-path-length} can be adapted to boundedness rather easily, proving that similar to coverability the shortest run witnessing unboundedness is either exponential or doubly-exponential when the SCC dimension is fixed. We conjecture that it is actually exponential.

\begin{conjecture}\label{conj:exp-path-lenght-boundedness}
    For every $g \in \N$ there is an exponential function $f_g$ such that
    for every $d$-VASS $V = (Q, T)$ with $\maxdim(V) = g$,
    if there is an unboundedness witness for configuration $s$ then there is also one of length at most $f_g(d,n,M)$ where $n = |Q|$ and $M = \infnorm{T}$.
\end{conjecture}

Finally, while the general VASS reachability algorithm works for SCC dimension \(\maxdim\), in low SCC dimension no tight complexity bounds are known, as opposed to the parameterisations by the number of counters \(d\) \cite{BlondinFGHM15} and geometric dimension \(\sumdim\) \cite{Zheng25}. Another natural direction for future research is to study these complexities, we conjecture the following.

\begin{conjecture}\label{conj:max-dim-2-reach}
Reachability in VASS of SCC dimension $\maxdim = 2$ and $\maxdim = 3$ is in \pspace and \expspace, respectively.
\end{conjecture}

\bibliography{references}

\appendix 

\section{Proof of \cref{clm:linear_inequalities}}
\label{sec:app_preliminaries}

\clmLinearInequalities*

To prove it we need to recall the following result from~\cite{ilp}.

\begin{lemma}\label{lem:ilp}
    Let $A, B, C, D$ be $m \times n$, $m \times 1$, $p \times n$, and $p \times 1$ matrices respectively, all with integer entries.  
    Let the rank of $A$ be $r$, and let $s$ be the rank of the $(m+p) \times n$ matrix $\begin{pmatrix}
    A \\
    C
    \end{pmatrix}
    $. Let $M$ be an upper bound for the absolute values of all $(s-1)\times(s-1)$ or $s\times s$ subdeterminants of the $(m+p)\times(n+1)$ matrix
    $\begin{pmatrix}
    A & B \\
    C & D
    \end{pmatrix}$,
    which are formed with at least $r$ rows from $(A\; B)$. If the systems
    $A\vec{x} = B$ and $C\vec{x} \ge D$
    has a common integer solution, then it has one whose coefficients are bounded in absolute value by $(n+1)M$.
\end{lemma}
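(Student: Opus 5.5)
The plan is the classical Cramer's-rule argument of von zur Gathen and Sieveking: decompose a given integer solution along the Minkowski--Weyl decomposition of the rational polyhedron, then round down the cone coefficients.

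\textbf{Setup.} Let $\vec{x}_0 \in \Z^n$ be a common integer solution. Consider the polyhedron $P = \set{\vec{x} \in \Q^n : A\vec{x} = B,\ C\vec{x} \ge D}$ and its recession cone $R = \set{\vec{z} : A\vec{z} = \vec{0},\ C\vec{z} \ge \vec{0}}$. By the Minkowski--Weyl (resolution) theorem,
\[
P = \mathrm{conv}(Y) + \cone(Z),
\]
where the sets $Y$ and $Z$ are chosen as follows. The set $Y$ contains one point from each minimal face of $P$. The set $Z$ consists of integer generators of $R$: extreme rays of $R$ intersected with a complement of its lineality space, together with $\pm$ a basis of the lineality space $\set{\vec{z} : A\vec{z} = \vec{0},\ C\vec{z} = \vec{0}}$.

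\textbf{Bounding the generators by subdeterminants.} The next step is to bound the entries of $Y$ and $Z$ using the subdeterminants in the statement.

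A minimal face of $P$ is an affine subspace $\set{\vec{x} : A'\vec{x} = B',\ C'\vec{x} = D'}$. Here the rows can be chosen to have rank $s$, and since $A$ has rank $r$, we may take $r$ of them to be independent rows of $A$. We then pick $\vec{y} \in Y$ as the solution obtained by setting the $n - s$ free variables to $0$ and applying Cramer's rule to a nonsingular $s \times s$ subsystem. Each coordinate of $\vec{y}$ is a quotient of two $s\times s$ subdeterminants of $\begin{pmatrix} A & B\\ C & D\end{pmatrix}$, each using at least $r$ rows from $(A\; B)$. The denominator is a nonzero integer, so $\infnorm{\vec{y}} \le M$.

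Similarly, each extreme ray or lineality direction $\vec{z}$ is the one-dimensional solution set of a homogeneous subsystem of rank $s-1$ (again containing $r$ rows of $A$). Hence $\vec{z}$ can be taken with entries equal to signed $(s-1)\times(s-1)$ subdeterminants, which are integers, giving $\infnorm{\vec{z}} \le M$. The main obstacle is keeping the bookkeeping honest here: we must check that every relevant subsystem can be completed to include $r$ independent rows from $A$, and we must treat the lineality space correctly when $P$ has no vertices. I would handle the lineality space by first intersecting with a complementary coordinate subspace, that is, fixing $n - s$ free coordinates to $0$.

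\textbf{Rounding.} By Carathéodory's theorem (\cref{thm:caratheodory-rational}, applied in $\Q^{n+1}$ to the homogenised vectors $(\vec{y},1)$ and $(\vec{z},0)$), we can write
\[
\vec{x}_0 = \sum_k \mu_k \vec{y}_k + \sum_j \lambda_j \vec{z}_j
\]
with $\mu_k \ge 0$, $\sum_k \mu_k = 1$, $\lambda_j \ge 0$, and at most $n+1$ vectors in total, at least one of them from $Y$. Hence at most $n$ vectors $\vec{z}_j$ appear. Define
\[
\vec{x}_1 = \vec{x}_0 - \sum_j \lfloor \lambda_j \rfloor \vec{z}_j .
\]
Then $\vec{x}_1$ is integer, because $\vec{x}_0$ and the $\vec{z}_j$ are integer. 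It lies in $P$, because
\[
\vec{x}_1 = \sum_k \mu_k \vec{y}_k + \sum_j (\lambda_j - \lfloor\lambda_j\rfloor)\vec{z}_j \in \mathrm{conv}(Y) + \cone(Z).
\]
Finally, $\infnorm{\vec{x}_1} \le M + \#\set{j}\cdot M \le (n+1)M$, which is the claimed bound.
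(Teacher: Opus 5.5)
The paper does not reprove this lemma; it imports it from von zur Gathen and Sieveking~\cite{ilp}. Your argument (bounded Minkowski--Weyl generators via Cramer's rule, then Carath\'eodory in $\Q^{n+1}$ and rounding down the cone coefficients) is essentially their proof and is correct, with one harmless slip: a lineality direction obtained by freeing a single non-basic coordinate has entries that are signed $s\times s$ minors of a rank-$s$ system on $s+1$ columns, not $(s-1)\times(s-1)$ minors, but $M$ bounds those too, so $\infnorm{\vec{z}}\le M$ and the final $(n+1)M$ bound are unaffected.
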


Now we are ready to prove \Cref{clm:linear_inequalities}.

\begin{proof}[Proof of~\cref{clm:linear_inequalities}]
    Let $\vec{x} \in \N^d$ be a solution of the system $A \cdot \vec{x} \ge \vec{b}$. Observe that $A \cdot \vec{x}$ belongs to the cone generated by columns of the matrix $A$. By \Cref{thm:caratheodory-rational} (Carath\'eodory's Theorem for Rational Cones) there exists $\vec{y} \in \Q_{\geq 0}^d$ with at most $r$ nonzero coordinates such that $A \cdot \vec{x} = A \cdot \vec{y}$. Since $\vec{b} \geq  \vec{0}$ we get that there exists $n \in \N$ such that $A \cdot (n \cdot \vec{y}) \geq \vec{b}$ and $n \cdot \vec{y} \in \N^d$.

    Consider the matrix $A'$ consists of columns corresponding to nonzero coordinates of $\vec{y}$. Then the system $A' \cdot \vec{x}' \geq \vec{b}$ has a nonnegative solution (for instance $n \cdot \vec{y}$ projected to its non-zero coordinates), and each such solution can be mapped to a solution of $A \cdot \vec{x} \geq \vec{b}$ by setting other variables to zero. 

    nonnegative solutions of $A' \cdot \vec{x}' \geq \vec{b}$ is a common solution of $N \cdot \vec{x}' = \vec{0}$, $C \cdot \vec{x}' \geq \vec{b}'$ where $N$ is the zero matrix and $C \cdot \vec{x}' \geq \vec{b}'$ is the system $A' \cdot \vec{x}' \geq \vec{b}$ extended with inequalities ensuring $\vec{x}' \ge \vec{0}$. Notice that $\rank(N) = 0$ and $\rank(C) \leq r$ as $C$ has at most $r$ columns. Recall that, by Hadamard's inequality, determinants of $k\times k$ matrices is bounded by $B^kk^{\frac{k}{2}}$ where $B$ is the bound in the absolute value of entries of the matrix. Hence, by \Cref{lem:ilp} we get that the system $N \cdot \vec{x}' = \vec{0}$, $C \cdot \vec{x}'  \geq b'$ has a nonnegative solution whose coefficients are bounded by $(r+1)N^{r}r^{\frac{r}{2}} \leq (r+1)(Nr)^r$.

    Therefore, $A' \cdot \vec{x}' \geq \vec{b}$ (and hence $A \cdot \vec{x} \geq \vec{b}$) has a nonnegative integer solution of max-norm at most $(r+1)(rN)^{r}$.
\end{proof}

\section{Missing Proofs of~\cref{sec:coverability}}
\label{app:coverability}
\coverabilitySumDim*

\begin{proof}\label{proof:coverability-sum-dim}
        By~\cref{lem:rackoff}, we know that if there is a run in $V$ from $\Config{s}{x}$ that covers $\Config{t}{y}$, then there exist a run in $V$ from $\Config{s}{x}$ that covers $\Config{t}{y}$ of length at most $L_{\sumdim}$.
        Thus, it only remains to prove that $L_{\sumdim}$ is indeed bounded by $\poly(d,n,\infnorm{y},M)^{f(\sumdim)}$.
        Let $A = 4ndM(\infnorm{\vec{y}} + 1)$ and recall that $f(g) = (g+1)^{g+1}$, wlog. we assume $M > 0$.
        We will prove that \hbox{$L_g \leq A^{f(g)}$} by induction on $g$.
        For $g = 0$, by definition $L_0 = n-1$, and we know that $(4ndM(\infnorm{\vec{y}} +1))^{f(0)} \geq 4n > n-1$.
        For $g \geq 1$, assume that $L_{g-1} \leq A^{f(g-1)}$.
        Now, by definition, we know that $L_g = n(d(M\cdot L_{g-1} + \infnorm{\vec{y}}))^g + L_{g-1}$.
        Hence,
        \begin{align*}
            L_g & \leq 2n(d(M\cdot L_{g-1} + \infnorm{\vec{y}}))^g \leq (2nd (M \cdot L_{g-1} + \infnorm{\vec{y}}))^g \\
            & \leq (4ndM \cdot \infnorm{\vec{y}} \cdot L_{g-1})^g \leq (A \cdot L_{g-1})^g \leq (A \cdot A^{f(g-1)})^g \\
            & = A^{(g^g+1)\cdot g} \leq A^{(g+1)^g \cdot g} \leq A^{(g+1)^{g+1}} = A^{f(g)},
        \end{align*}
        where the third inequality holds because $a + b \leq 2ab$ for all $a, b \in \N_{>0}$ and
        the fifth one holds by induction.
        We conclude as $L_{\sumdim} \leq 4ndM(\infnorm{\vec{y}} + 1)^{f(\sumdim)}$ is bounded by $\poly(d,n,M,\infnorm{\vec{y}})^{f(\sumdim)}$.
    \end{proof}

\pathDifference*

\begin{proof}
    First, consider the case when $p$ and $q$ are in the same strongly connected component.
    Let $\tau$ be any path from $q$ back to $p$.
    Then $\rho_1 \tau$ is a cycle from $p$ to $p$, and so is $\rho_2 \tau$. 
    Thus the difference of the effects of $\rho_1 \tau$ and $\rho_2 \tau$ belongs to $\cyclespace{V}$.
    Clearly, the difference between effects of $\rho_1 \tau$ and $\rho_2 \tau$ is the same as the difference between the effects of $\rho_1$ and $\rho_2$.
    
    We now proceed to the general case (when $p$ and $q$ are not necessarily in the same SCC).
    Recall that $V$ consists of a series of strongly connected components $V_1, \ldots, V_m$ that are connected by single transitions.
    Without loss of generality, suppose that $p$ is in $V_1$ and $q$ is in $V_m$.
    We know that both $\rho_1$ and $\rho_2$ take the same transitions between the strongly connected components.
    Let the effect of $\rho_i$ in the component $V_j$ be $\vec{r}_{i,j}$.
    The difference between the effects of $\rho_1$ and $\rho_2$ is
    \[
        \sum_{j=1}^m \vec{r}_{1,j} - \sum_{j=1}^m \vec{r}_{2,j} = \sum_{j=1}^m (\vec{r}_{1,j} - \vec{r}_{2,j}).
    \]
    However, for every $j \in [m]$, it is true that $\vec{r}_{1,j} - \vec{r}_{2,j}$ is the difference of the effects of two paths between the same pair of states inside a single SCC. 
    By our earlier argument, we know that $\vec{r}_{1,j} - \vec{r}_{2,j} \in \cyclespace{V}$.
    Thus, since $\cyclespace{V}$ is a vector space, we know that $\sum_{j=1}^m (\vec{r}_{1,j} - \vec{r}_{2,j}) \in \cyclespace{V}$.
    This completes the proof.
\end{proof}

\smallConfigurations*
\begin{proof}
    For every state $q \in Q$, let us fix $\vec{r}_q$ to be the effect of an arbitrarily selected path from $p$ to $q$.

    Observe that there is at most one clean basis $B$ for any given choice of distinguished counter $K = \set{1, \ldots, k_{\sumdim}}$.
    Accordingly, there are at most $d \cdot (d-1) \cdot \ldots \cdot (d-\sumdim+1) \leq d^{\sumdim}$ many clean bases.
    
    Now, let us fix $q \in Q$ and let us fix a clean basis $B$ of $\cyclespace{V}$ whose distinguished coordinates are $K$ (notice that $\abs{K} = \sumdim$).
    Consider two configurations $\Config{q}{v}$, $\Config{q}{v'}$ that are reachable from $\Config{p}{u}$.
    By~\cref{clm:path-difference}, we can express $\vec{v} = \vec{u} + \vec{r}_q + \vec{w}$ and $\vec{v'} = \vec{u} + \vec{r}_q + \vec{w'}$ for $\vec{w}, \vec{w}' \in \cyclespace{V}$.
    
    Now, we wish to argue that $\vec{v} \neq \vec{v}'$ implies $\proj{K}{\vec{v}} \neq \proj{K}{\vec{v'}}$. Clearly, as $\vec{v} \neq \vec{v}'$, we have $\vec{w} \neq \vec{w}'$. Notice that since $\vec{w},\vec{w}' \in \cyclespace{V}$ it holds that $\vec{w} = \sum_{i=1}^\sumdim (\proj{K}{\vec{w}})[i] \cdot \vec{b}_i$ and $\vec{w'} = \sum_{i=1}^\sumdim (\proj{K}{\vec{w'}})[i] \cdot \vec{b}_i$. Hence, $\proj{K}{\vec{w}} \neq \proj{K}{\vec{w'}}$. Observe that $\proj{K}{\vec{v}} = \proj{K}{\vec{u}} + \proj{K}{\vec{r}_q} + \proj{K}{\vec{w}}$ and $\proj{K}{\vec{v'}} = \proj{K}{\vec{u}} + \proj{K}{\vec{r}_q} + \proj{K}{\vec{w'}}$. Therefore, as $\proj{K}{\vec{w}} \neq \proj{K}{\vec{w'}}$, we get $\proj{K}{\vec{v}} \neq \proj{K}{\vec{v'}}$.
    
    By the previous argument and the fact there are at most $d^{\sumdim}$ many clean bases, we conclude that the number of configurations $\Config{q}{v}$ that are reachable from $\Config{p}{u}$ is at most 
    \begin{equation*}
        \abs{\set{\proj{K}{\vec{v}} : \proj{K}{\vec{v}} \in [0, C-1]^{\sumdim}}} \cdot d^{\sumdim}
        = (C \cdot d)^\sumdim.
    \end{equation*}
    Thus, as there are $n = \abs{Q}$ many states, the total number of $C$-small configurations that are reachable from
     $\Config{p}{u}$ is at most $n\cdot (C \cdot d)^\sumdim$.
\end{proof}


\section{Missing Proofs of~\cref{sec:simultaneous-unboundedness}}
\label{app:simultaneous}
\noPump*
\begin{proof}
By~\Cref{lem:no-pump-concrete}, we know that if there exists a run
    from $\Config{s}{x}$ to $\Config{q}{u}$ in which, for every $i \in [d]$, there is a configuration $p_i(\vec{h}_i)$ with $\vec{h}_i[i] \geq H_\sumdim$, then there exists a run from $\Config{s}{x}$ to $\Config{q}{v}$ of length at most $L_\sumdim$, such that $\vec{v} \geq (G, \ldots, G)$. Thus, it only remains to prove that $L_\sumdim$ and $H_\sumdim$ are indeed bounded by $\poly(d,n,G,M)^{f(\sumdim)}$.
    
    Let $A = 2n \cdot (d+1) \cdot M (G+1)$ and recall that $f(g) = (g+1)^{g+1}$. We will prove that $L_\sumdim, H_\sumdim \leq A^{f(\sumdim)}$ by induction on $\sumdim$. 
    
    For $\sumdim = 0$, by definition $L_0 = n \cdot (d+1) \cdot M + G \leq A^{f(0)}$. 
    Similarly, $H_0 = n\cdot (d+1) \leq A^{f(0)}$. 
    
    For $\sumdim \geq 1$, assume that $L_{\sumdim-1}, H_{\sumdim-1} \leq A^{f(\sumdim-1)}$. Now, by definition we know that $L_\sumdim = n \cdot (d \cdot C_\sumdim)^\sumdim + L_{\sumdim-1}$ and $H_\sumdim = n \cdot M \cdot (d \cdot C_\sumdim)^{\sumdim} + H_{\sumdim - 1}$ where $C_\sumdim = M \cdot L_{\sumdim-1} + G$. Hence,
    \begin{multline*}
        L_\sumdim 
        = n\cdot (d \cdot C_\sumdim)^\sumdim + L_{\sumdim-1}
        = 2n \cdot (d \cdot (M \cdot L_{\sumdim-1} + G))^g \\
        \leq (A \cdot L_{\sumdim-1})^\sumdim  
        \leq (A \cdot A^{f(\sumdim-1)})^\sumdim 
        \leq A^{\sumdim + \sumdim^\sumdim \cdot \sumdim}
        \leq A^{\sumdim + \sumdim^{\sumdim+1}} 
        \leq A^{f(\sumdim)} 
    \end{multline*}
    and 
    \begin{align*}
        H_\sumdim 
        & = n \cdot M \cdot (d \cdot C_\sumdim)^\sumdim + H_{\sumdim-1} 
        = n \cdot M \cdot (d \cdot (M \cdot L_{\sumdim-1} + G))^g + H_{\sumdim-1} \\
        & \leq (A\cdot L_{\sumdim-1})^\sumdim + H_{\sumdim-1}
        \leq (A\cdot A^{f(\sumdim-1)})^\sumdim + A^{f(\sumdim-1)} \\
        & \leq (A \cdot A^{\sumdim^\sumdim})^g + A^{\sumdim^\sumdim}
        \leq 2A^{\sumdim + \sumdim^\sumdim\cdot \sumdim}
        \leq A^{1 + \sumdim + \sumdim^{\sumdim+1}}
        \leq A^{f(\sumdim)}.
    \end{align*}
    We conclude as $L_g, H_g \leq A^{f(\sumdim)}$ is bounded by $\poly(d,n,M,G)^{f(\sumdim)}$.
\end{proof}

\section{Missing Proofs of~\cref{sec:boundedness}}
\label{app:boundedness}
\boundedness*

\begin{proof}
    Notice that for a standard VASS $V$ we have $\ressumdim(V) = g$. Thus, we only need to verify that the bound $L_{g}$ given by \cref{lem:boundedness-vass-z} is of the form $\poly(d, n, M)^{f(\sumdim)}$. Recall that $D = (5d^2n^2M)^2$ is polynomial in $d, n, M$. Let $C = 2DdM$, it is also polynomial in $d, n, M$.
    
    We show by induction on $k$ the following claim.
    \begin{claim}\label{cl:bound-lkg}
        $L_{k} \leq C^{(4k+1)^k (g+2)^{k+1} - 1}$.
    \end{claim}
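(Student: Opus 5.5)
We argue by induction on $k$, using only the recursive definition
\[
L_0 = D^{g+1}, \qquad L_k = \bigl(D\,(d M L_{k-1})^{4k}\bigr)^{g+1} + L_{k-1} \quad (k \ge 1),
\]
together with $C = 2DdM \ge 2$. Write $E_k := (4k+1)^k (g+2)^{k+1} - 1$ for the target exponent.

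\textbf{Base case.} For $k = 0$ we have $E_0 = (g+2) - 1 = g+1$. Since $D \le C$, we get $L_0 = D^{g+1} \le C^{g+1} = C^{E_0}$.

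\textbf{Inductive step.} For $k \ge 1$, assume $L_{k-1} \le C^{E_{k-1}}$. First absorb the additive term. Because $L_{k-1} \le (D(dML_{k-1})^{4k})^{g+1}$ trivially, we have
\[
L_k \le 2\bigl(D\,(dML_{k-1})^{4k}\bigr)^{g+1}.
\]
Next, using $D \le C$ and $dM \le C$ inside the bracket,
\[
L_k \le 2\,\bigl(C \cdot C^{4k}\, L_{k-1}^{4k}\bigr)^{g+1} \le 2\, C^{(4k+1)(g+1)}\, C^{4k(g+1)E_{k-1}}.
\]
The leading factor $2$ is absorbed into one extra power of $C$, since $C \ge 2$. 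Thus it suffices to show
\[
(4k+1)(g+1) + 4k(g+1)E_{k-1} + 1 \le E_k .
\]

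\textbf{Exponent arithmetic.} This is the main, though routine, obstacle. Put $P := (4k+1)^{k-1}(g+2)^k$, so that $E_{k-1} \le P$, because $(4(k-1)+1)^{k-1} \le (4k+1)^{k-1}$. Then the left-hand side is at most
\[
(4k+1)(g+1) + 4k(g+1)P + 1 .
\]
The right-hand side satisfies
\[
E_k = (4k+1)(g+2)P - 1 = 4k(g+1)P + (g+1)P + (4k+1)P - 1 .
\]
It therefore remains to show $(4k+1)(g+1) + 2 \le (g+1)P + (4k+1)P$. Since $k \ge 1$ gives $P \ge g+2 \ge 2$, we have $(4k+1)P \ge (4k+1)(g+2) \ge (4k+1)(g+1) + 5$, which already suffices.

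This establishes the claim $L_k \le C^{E_k}$. Setting $k = g$ yields
\[
E_g + 1 = (4g+1)^g (g+2)^{g+1} \le (4g+2)^{2g+1} = f(g),
\]
using $g+2 \le 4g+2$ and $4g+1 \le 4g+2$. Since $C$ is polynomial in $d, n, M$, this gives the bound of \cref{thm:boundedness}.
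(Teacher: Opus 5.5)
Your proof is correct and follows the same route as the paper: induction on $k$, absorbing the additive $L_{k-1}$ into a factor $2$, bounding $D$ and $dM$ by $C = 2DdM$, and then comparing exponents. Your exponent bookkeeping is a bit more careful than the paper's. The paper's displayed chain ends with $(4k+1)^{k+1}$ rather than the claimed $(4k+1)^{k}$, whereas your computation closes the induction with exactly the stated exponent.
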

    
    \begin{claimproof}
        The base case where $k = 0$ holds
        since $L_{0} = D^{g+1} \leq C^{g+1}$.
        For the induction step we have
        \begin{align*}
            L_{k} & = (D(d \cdot M \cdot L_{k-1})^{4k})^{g+1} + L_{k-1}
            \leq (2DdM \cdot L_{k-1})^{4k(g+1)} = (C \cdot L_{k-1})^{4k(g+1)} \\
            & \leq (C \cdot C^{(4(k-1)+1)^{k-1} (g+2)^k - 1})^{4k(g+1)} = C^{((4(k-1)+1)^{k-1} (g+2)^k) \cdot 4k(g+1)} \\
            & \leq C^{(4k+1)^{k+1} (g+2)^{k+1} -1},
        \end{align*}
        as needed.
    \end{claimproof}
    
    Having Claim~\ref{cl:bound-lkg} we conclude that
    $L_{g} \leq C^{(4g+1)^g (g+2)^{g+1} - 1} \leq C^{(4g+2)^{2g+1}} = C^{f(g)}$, as required.
\end{proof}

\NCountersIngored*

\begin{claimproof}
    We prove by induction on the length of $\rho$. Suppose $\rho$ is a path from state $p$ to $q$. We have assumed that $\pi$ contains every transition of $V$, thus $\pi$ also visits the state $q$. Let $q(\vec{y}_\pi)$ be any configuration on the run $\pi$ whose state is $q$, and let $\pi'$ be the prefix of $\pi$ from $p(\vec{x})$ to $q(\vec{y}_\pi)$. 
    According to \cref{clm:path-difference}, $\vec{\delta} := \eff(\pi') - \eff(\rho)$ belongs to $\cyclespace{V}$. Hence, $\proj{I_\N}{\vec{\delta}} = \vec{0}$. Suppose $s = p(\vec{x})$ and let $\vec{y} := \vec{x} + \eff(\rho)$. Then $\proj{I_\N}{\vec{y}} = \proj{I_\N}{\vec{y}_\pi} \ge \vec{0}$. Hence, the ``target'' of $\rho$ starting from $s$ is a legal configuration of $V_\Z$. We conclude by induction that $s \xrightarrow{\rho} q(\vec{y})$ holds.
\end{claimproof}

\section{Tower-Hardness for Max-Geometric Dimension 4}
\label{app:tower-hardness-max-4}
In this appendix we give a more detailed proof of \cref{thm:max-dim-4-tower-hard}. As mentioned there, the proof is based on a modification of \cite[Theorem 4]{CzerwinskiO22}. Hence we assume the reader is familiar with the concepts in \cite{CzerwinskiO22}, especially the notion of \emph{counter programs} that can be turned into a VASS, and the technique of \emph{multiplication triples} (originates from \cite{CzerwinskiLLLM19}) that is used to implement restricted zero-tests. We start with the source of \tower-hardness.

\paragraph*{A \tower-hard Problem.}

In \cite{CzerwinskiO22} the \emph{Tower-bounded reachability problem} for 3-counter automata was used as the source of hardness. The problem can be formulated as follows:

\begin{quote}
    \textbf{Input}: A 3-counter automaton $\mathcal{A}$ and number $n \in \mathbb{N}$ given in unary\footnote{For Tower-hardness it doesn't matter if $n$ is given in unary or binary.}.
    
    \textbf{Question}: Does $\mathcal{A}$ have a $\Tower(n)$-bounded accepting\footnote{Acceptance by final states.} run.
\end{quote}

Note that an $f(n)$-bounded run can be defined in a slightly non-standard way: we require that at any configuration of the run, the \emph{sum} of values in 3 counters is bounded by $f(n)$. This definition favors the technique of multiplication triples.

We have no upper bound on the geometric dimension of $\mathcal{A}$ better than 3. And when simulating zero-tests of $\mathcal{A}$ with a multiplication triple, the geometric dimension of this part could be 5 (see discussions in the next part). So we need to use the 2-counter version of the above problem. 

\begin{lemma}
    Given a 3-counter automaton $\mathcal{A}$ and number $n \in \mathbb{N}$ one can construct in linear time a 2-counter automaton $\mathcal{A}'$ such that $\mathcal{A}$ has a $\Tower(n)$-bounded accepting run if and only if $\mathcal{A}'$ has a $7^{\Tower(n)}$ bounded accepting run.
\end{lemma}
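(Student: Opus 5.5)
We use the classical Minsky encoding of several counters into one via prime powers. Let the three counters of $\mathcal{A}$ be $x_1,x_2,x_3$. Then $\mathcal{A}'$ keeps a single ``main'' counter $\Counter{m}$ holding $2^{x_1}3^{x_2}5^{x_3}$, and one auxiliary counter $\Counter{a}$ that is zero between simulated steps. If $x_1+x_2+x_3\le \Tower(n)$, then the encoded value is at most $5^{x_1+x_2+x_3}\le 5^{\Tower(n)}$. The auxiliary counter is only used to hold intermediate values during a single operation, so it never exceeds $5\cdot 5^{\Tower(n)}$. Under the sum-bound convention the sum $\Counter{m}+\Counter{a}$ stays below $7^{\Tower(n)}$, since $6\cdot 5^{N}\le 7^{N}$ for $N\ge 2$ and the small cases are handled directly. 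Choosing the base $7$ leaves slack for these constants.

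Each instruction of $\mathcal{A}$ becomes a constant-size gadget, which gives linear time.
\begin{itemize}
\item An increment of $x_i$ multiplies by $p_i\in\{2,3,5\}$. We loop: decrement $\Counter{m}$, add $p_i$ to $\Counter{a}$, until $\Counter{m}$ is zero-tested. Then we move $\Counter{a}$ back to $\Counter{m}$ until $\Counter{a}$ is zero-tested.
\item A decrement divides by $p_i$. We repeatedly subtract $p_i$ from $\Counter{m}$ while adding $1$ to $\Counter{a}$. This uses $p_i$ states for the residues, and a residue state tests $\Counter{m}=0$.
\item A zero test on $x_i$ checks divisibility by $p_i$. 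We use the same residue-tracking loop that transfers $\Counter{m}$ to $\Counter{a}$ and back, restoring the value, and branch on the residue. Divisible means $x_i>0$.
\end{itemize}
Acceptance is preserved because final states are copied.

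For correctness in the forward direction, a $\Tower(n)$-bounded accepting run of $\mathcal{A}$ is simulated faithfully. The bound on $\Counter{m}+\Counter{a}$ during each gadget is the maximum of $\Counter{m}$ and $p_i\Counter{m}$ or similar, which is $\le 5^{\Tower(n)+1}$ and hence below $7^{\Tower(n)}$. For the converse, every accepting run of $\mathcal{A}'$ must pass through the gadgets as intended: zero tests force complete transfers, and the residue bookkeeping is deterministic. It therefore projects to an accepting run of $\mathcal{A}$.

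The main obstacle is this converse boundedness. If $\mathcal{A}'$ is $7^{\Tower(n)}$-bounded, we need the decoded run of $\mathcal{A}$ to be $\Tower(n)$-bounded, which does not follow from prime-power encoding alone. Since $2^{x_1}3^{x_2}5^{x_3}\ge 2^{x_1+x_2+x_3}$, the bound on $\Counter{m}$ only gives $x_1+x_2+x_3\le \log_2 7\cdot\Tower(n)$. I would repair this in one of two ways. One option is to first pad $\mathcal{A}$ with a fourth bookkeeping counter making the sum exactly invariant. The other, simpler option is to observe that for Tower-hardness we may start from the problem with bound $\Tower(n)$ replaced by a slightly smaller function. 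Alternatively, we can rely on the precise source problem in~\cite{CzerwinskiO22}, where the $\Tower(n)$ bound is either imposed on sum-preserving automata or only used in the ``if'' direction. The plan is therefore to state the reduction with matching bounds in both directions by using the encoding $7^{x_1+x_2+x_3}$-style slack, and to check whichever direction the hardness proof actually needs.
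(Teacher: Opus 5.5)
You have a genuine gap. Your construction only proves the ``only if'' direction. The ``if'' direction fails for the plain encoding $2^{x_1}3^{x_2}5^{x_3}$, as you observe yourself. A $7^{\Tower(n)}$-bounded run of $\mathcal{A}'$ only gives $x_1+x_2+x_3\le \log_2 7\cdot\Tower(n)$, so it can decode to an accepting run of $\mathcal{A}$ that is not $\Tower(n)$-bounded. For example, take an $\mathcal{A}$ that can only accept after pushing $x_1$ to $\Tower(n)+1$; this costs only $2^{\Tower(n)+1}\le 7^{\Tower(n)}$ in $\mathcal{A}'$.

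Of the repairs you list, only the first would prove the lemma as stated. Replacing $\Tower(n)$ by a smaller function changes the statement, and ``checking which direction the hardness proof needs'' is not a proof. You leave the first repair unexecuted, so the step carrying the actual content of the lemma is missing.

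The missing idea is exactly the paper's proof: pass through an intermediate $4$-counter automaton $\mathcal{A}''$.
\begin{itemize}
\item First, $\mathcal{A}''$ computes $\Tower(n)$ into a fresh slack counter $d$ itself, using zero tests and $O(n)$ states, so the construction stays linear. Your ``bookkeeping counter'' has to get its initial value from somewhere, and this is where it comes from.
\item Then $\mathcal{A}''$ simulates $\mathcal{A}$ on $a,b,c$, compensating every update of $a,b,c$ by the opposite update of $d$. This keeps the total $a+b+c+d$ equal to $\Tower(n)$.
\item A run that would exceed the bound simply blocks on $d$. Hence $\mathcal{A}$ has a $\Tower(n)$-bounded accepting run iff $\mathcal{A}''$ has \emph{any} accepting run, and every run of $\mathcal{A}''$ is automatically $\Tower(n)$-bounded.
\item Only now apply the Minsky encoding, with $d$ getting its own prime: $2^a3^b5^c7^d$. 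Since the exponents sum to $\Tower(n)$ and $7$ is the largest prime, the encoded value is at most $7^{\Tower(n)}$. This is where the $7$ in the statement comes from, not from slack over $5$.
\item The converse is now trivial: any accepting run of $\mathcal{A}'$ decodes to an accepting run of $\mathcal{A}''$, which is $\Tower(n)$-bounded by the invariant.
\end{itemize}
Your increment, decrement and zero-test gadgets carry over unchanged. As a minor slip, $6\cdot 5^N\le 7^N$ holds only for $N\ge 6$, not for $N\ge 2$.
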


\begin{proof}
    We first construct a 4-counter automaton $A''$ with counters $a, b, c, d$, such that it first computes $\Tower(n)$ and put it into $d$, then simulates $\mathcal{A}$ using the first 3 counters while maintaining the invariant $a + b + c = d$. Now $\mathcal{A}$ has a $\Tower(n)$-bounded accepting run $\iff$ $\mathcal{A}''$ has an accepting run. Moreover every run of $\mathcal{A}''$ is $\Tower(n)$-bounded. Now we transform $\mathcal{A}''$ into a 2-counter automaton $\mathcal{A}'$ using the encoding $(a, b, c, d) \mapsto 2^a3^b5^c7^d$. Clearly $\mathcal{A}''$ has an accepting run $\iff$ $\mathcal{A}'$ has a $7^{\Tower(n)}$ bounded accepting run.
\end{proof}

\begin{theorem}
    \label{thm:tower-comp-of-tower-bounded-2-cm-reach}
    Tower-bounded reachability in 2-counter automata is Tower-complete.
\end{theorem}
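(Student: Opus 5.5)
Membership and hardness are handled separately, and the preceding lemma does most of the work for hardness.

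\emph{Upper bound.} A 2-counter automaton $\mathcal{A}'$ and a bound $F = \Tower(n)$ given as input determine a finite configuration graph. It has at most $|Q|\cdot(F+1)^2$ configurations (or $|Q|\cdot(F+1)^2/2$ under the sum-bounded convention). Reachability in this graph is decidable in nondeterministic space $\Oo(\log |Q| + \log F)$, and by Savitch's theorem in deterministic space $\Oo((\log|Q| + \log F)^2)$. This is elementary in $F$, hence bounded by $\Tower(p(n))$ for a polynomial $p$, so the problem lies in \tower. The same holds for any fixed elementary bound such as $7^{\Tower(n)}$.

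\emph{Lower bound.} We reduce from the \tower-hard problem of \cite{CzerwinskiO22}: \tower-bounded reachability for 3-counter automata. Given $(\mathcal{A}, n)$, the preceding lemma produces in linear time a 2-counter automaton $\mathcal{A}'$ such that $\mathcal{A}$ has a $\Tower(n)$-bounded accepting run iff $\mathcal{A}'$ has a $7^{\Tower(n)}$-bounded accepting run. We still need to match the bound $\Tower(m)$ for some $m$ computable from $n$. Since $7^{\Tower(n)} \le 2^{2^{\Tower(n)}} = \Tower(n+2)$, I take $m = n+2$. It then remains to show that $\mathcal{A}'$ has a $7^{\Tower(n)}$-bounded accepting run iff it has a $\Tower(n+2)$-bounded accepting run.

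This equivalence is where care is needed. In the construction of the lemma, every run of $\mathcal{A}''$ automatically satisfies $a+b+c=d\le\Tower(n)$. I would therefore argue that every run of $\mathcal{A}'$ reachable from the initial configuration stays below $7^{\Tower(n)}$ up to constant factors arising in the multiplication and division gadgets, which I would check and absorb. Then the looser bound $\Tower(n+2)$ admits exactly the same runs. If the gadgets overshoot by a constant factor, I would raise $m$ slightly, say to $n+3$. Alternatively, I could strengthen the lemma so that $\mathcal{A}'$ is built to be exactly $\Tower(n')$-bounded.

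The main obstacle is this boundedness bookkeeping for the intermediate values in the G\"odel-encoding simulation. Multiplying or dividing by primes uses an auxiliary counter that may temporarily hold values up to $7\cdot 7^{\Tower(n)}$. I would bound all such intermediate values explicitly and confirm that they stay within $\Tower(n+2)$.
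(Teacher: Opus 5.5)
Your proposal is correct and follows the paper's route. The paper derives the theorem directly from the preceding lemma together with the \tower-hardness of the 3-counter problem from \cite{CzerwinskiO22}, and leaves implicit both membership and the mismatch between the $7^{\Tower(n)}$ bound and a bound of the form $\Tower(m)$, so your explicit treatment of both is a useful addition. Two small points:
\begin{itemize}
\item The inequality $7^{\Tower(n)} \le \Tower(n+2)$ fails for the smallest $n$ (e.g.\ $7 > 4$ for $n=0$ with $\Tower(0)=1$). This is harmless, since you can take $m=\max(n+2,c)$ for a suitable constant $c$.
\item The direction from an accepting run of $\mathcal{A}'$ back to a $\Tower(n)$-bounded run of $\mathcal{A}$ needs no bound at all. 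Because $\mathcal{A}'$ has genuine zero tests, it simulates $\mathcal{A}''$ faithfully, so you only need to bound the values along the single run that simulates a given $\Tower(n)$-bounded run of $\mathcal{A}$.
\end{itemize}
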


\paragraph*{Multiplication Triples}

A multiplication triple is a triple $\mathcal{M}$ of 3 counters $(\Counter{b}, \Counter{c}, \Counter{d})$ holding values initially $(B, C, B \cdot C)$ for some $B, C \in \N$. It is shown in \cite{CzerwinskiO22} that $\mathcal{M}$ can be used to simulate $C/2$ zero-tests on counters with sum of values bounded by $B$. Suppose there are 2 counters $\Counter{x}, \Counter{y}$ to be zero-tested in a counter program $\mathcal{A}$ (the method generalizes to more counters naturally). We first modify $\mathcal{A}$ such that the invariant $\Counter{x} + \Counter{y} + \Counter{b} = B$ is maintained. For example, an instruction $\Counter{x} {+}{=} 3; \Counter{y} {-}{=} 1$ is replaced with $\Counter{x} {+}{=} 3; \Counter{y} {-}{=} 1; \Counter{b} {-}{=} 2$. This guarantees that the sum of $\Counter{x}$ and $\Counter{y}$ is at most $B$. Now we implement zero tests on $\Counter{x}$ and $\Counter{y}$ by the programs in \cref{fig:zero-test}. Notice that the invariant $\Counter{x} + \Counter{y} + \Counter{b} = B$ never breaks.

\begin{figure}[h]
    \centering
    \begin{minipage}[t]{0.4\textwidth}
        \begin{algorithm}[H]
            \SetKw{Loop}{loop:}
            \Loop \Counter{y}$--$; \Counter{x}$++$; \Counter{d}$--$
            
            \Loop \Counter{b}$--$; \Counter{y}$++$; \Counter{d}$--$
            
            \Loop \Counter{y}$--$; \Counter{b}$++$; \Counter{d}$--$
            
            \Loop \Counter{x}$--$; \Counter{y}$++$; \Counter{d}$--$
            
            \Counter{c} ${-}{=}~ 2$
        \end{algorithm}
        \centering 
    \end{minipage}
    \begin{minipage}[t]{0.4\textwidth}
        \begin{algorithm}[H]
            \SetKw{Loop}{loop:}
            \Loop \Counter{x}$--$; \Counter{y}$++$; \Counter{d}$--$
            
            \Loop \Counter{b}$--$; \Counter{x}$++$; \Counter{d}$--$
            
            \Loop \Counter{x}$--$; \Counter{b}$++$; \Counter{d}$--$
            
            \Loop \Counter{y}$--$; \Counter{x}$++$; \Counter{d}$--$
            
            \Counter{c} ${-}{=}~ 2$
        \end{algorithm}
        \centering 
    \end{minipage}
    \caption{Left: ZeroTest(\Counter{x}); Right: ZeroTest(\Counter{y})}
    
    \label{fig:zero-test}
\end{figure}

The counter $\Counter{d}$ can reach the value $0$ if and only if (i) exactly $C/2$ zero-tests have been performed in $\mathcal{A}$, and (ii) each zero-test passed successfully. Thus we call $\Counter{d}$ the \emph{sensor counter} of the triple $\mathcal{M}$.

Now we discuss how the geometric dimension of the counter program $\mathcal{A}$ changes after plugging in the multiplication triple $\mathcal{M}$. 

\begin{itemize}
    \item For the counter $\Counter{b}$, its update in every transition is a linear combination of updates of $\Counter{x}$ and $\Counter{y}$. So $\Counter{b}$ does not contribute to the geometric dimension.
    \item For the counter $\Counter{d}$, we look at the program ZeroTest(\Counter{x}) for example. The sum of effects of the 4 loops there induces a vector in the cycle space that has only one entry with value $-4$ corresponding to $\Counter{d}$. So $\Counter{d}$ contributes $1$ to the geometric dimension.
    \item The counter $\Counter{c}$ is not updated in any loop of the zero-test programs. But zero-test can occur in loops of $\mathcal{A}$. Hence $\Counter{c}$ also contributes $1$ to the geometric dimension in general.
\end{itemize}

Suppose $\mathcal{A}$ operates on $d$ counters. Let $\mathcal{A}_{\mathcal{M}}$ be the counter program obtained from $\mathcal{A}$ by implementing zero-tests using the multiplication triple $\mathcal{M}$. The above discussion implies that $\sumdim(\mathcal{A}_{\mathcal{M}}) \le d + 2$.

Therefore, once we have obtained a multiplication triple $\mathcal{M}$ with values $(B, C, BC)$ for $B = \Tower(n)$ and $C$ being sufficiently large, the simulation of any 2-counter automaton $\mathcal{A}$ yields a VASS $\mathcal{A}_{\mathcal{M}}$ of geometric dimension at most 4. In the following we focus on implementing a counter program $\mathcal{G}$ that generates a multiplication triple with $B = \Tower(n)$.

\paragraph*{The (Old) Amplifier}

Amplifier is a VASS (or counter program) that consumes a multiplication triple and produce a larger multiplication triple. For tower-hardness we will need an exponential amplifier $\textbf{ampl}$. It operates on 7 counters $\Counter{b}, \Counter{c}, \Counter{d}, \Counter{b'}, \Counter{c'}, \Counter{d'}, \Counter{t}$, such that if at the beginning we have 
\begin{equation}
    (\Counter{b}, \Counter{c}, \Counter{d}) = (B, C, BC) \text{ and all other counters}= 0
\end{equation}
then all complete runs in $\textbf{ampl}$ ends with 
\begin{equation}
    (\Counter{b'}, \Counter{c'}, \Counter{d'}) = (2^B, C', 2^BC') \text{ {\bf if} all other counters }= 0 \text{ (which is guaranteed if } \Counter{d} = 0 \text{)}; 
\end{equation}
Moreover, for each $C' \in \mathbb{N}$ there is such a complete run starting with some $C$.

The amplifier was implemented in \cite{CzerwinskiO22} as the program shown in \autoref{fig:old-amplifier}. We remark that we are using the input triple with the roles of $\Counter{b}$ and $\Counter{c}$ swapped. Clearly the \textbf{loop} from lines 3 through 5 contributes most to the geometric dimension. Notice that 6 counters are involved in this loop, among which 3 are counters from the multiplication triple. According to the discussion in the previous section, the effect of every transition maintains the invariant $\Delta(\Counter{b'}) + \Delta(\Counter{d'}) + \Delta(\Counter{t}) + \Delta(\Counter{c})) = 0$, thus the geometric dimension of this loop is 5. 

\begin{figure}[h]
    \begin{minipage}[t]{0.5\textwidth}
        \begin{algorithm}[H]
            \SetKw{Loop}{loop:}
            \SetKwBlock{LoopB}{loop}{}
            \Counter{b'} ${+}{=}~1$
            
            \Loop \Counter{c'} ${+}{=}~1$; \Counter{d'} ${+}{=}~1$
            
            \LoopB{
                \textbf{mult}(\Counter{b'}, \Counter{t}, $2^8$)
                
                \textbf{mult}(\Counter{d'}, \Counter{t}, $2^8$)
            }
            
            \Loop \Counter{c}${-}{-}$
        \end{algorithm}
        \centering 
    \end{minipage}
    \begin{minipage}[t]{0.5\textwidth}
        \begin{algorithm}[H]
            \SetKw{Loop}{loop:}
            \SetKwBlock{LoopB}{loop}{}            
            \Loop \Counter{x} ${-}{=}~1$; \Counter{t} ${+}{=}~1$
            
            \textbf{ZeroTest}$_{\Counter{c}, \Counter{b}, \Counter{d}}$(\Counter{x})
            
            \Loop \Counter{x} ${+}{=}~c$; \Counter{t} ${-}{=}~1$
            
            \textbf{ZeroTest}$_{\Counter{c}, \Counter{b}, \Counter{d}}$(\Counter{t})
        \end{algorithm}
        \centering 
    \end{minipage}
    \caption{Left: Implementation of \textbf{ampl}; Right: Implementation of the procedure \textbf{mult}(\Counter{x}, \Counter{t}, $c$)}
    \label{fig:old-amplifier}
\end{figure}

The program $\mathcal{G}$ generating a multiplication triple with $B = \Tower(n)$ repeats the above amplifier \textbf{ampl} for $n$ times. Each amplifier use the previous output as its input, and use fresh counters as its output (we do not care much about the number of counters here). We are able to check if the results of all amplifiers are correct, by checking if all the sensor counters of the involved multiplication triples become zero at the end. Notice that SCCs of these amplifiers are disjoint. We conclude that $\maxdim(\mathcal{G}) \le 5$. This is not optimal as it implies \tower-hardness only for SCC dimension $5$.

\paragraph*{The Modified Amplifier}

The implementation of exponential amplifier in \autoref{fig:old-amplifier} is not optimal for geometric dimension as the counters \Counter{b'} and \Counter{d'} are multiplied in the same loop. It turns out that the two multiplications can be separated into two loops while ensuring they are executed the same number of iterations. The key is two use two identical copies of multiplication triples.

We will implement an amplifier $\textbf{ampl}'$ that operates on counters $\Counter{b}_1, \Counter{c}_1, \Counter{d}_1$, $\Counter{b}_2, \Counter{c}_2, \Counter{d}_2$, $\Counter{b}', \Counter{c}', \Counter{d}', \Counter{t}$. Assuming we have, at the beginning,
\begin{equation}
    (\Counter{b}_1, \Counter{c}_1, \Counter{d}_1) = (\Counter{b}_2, \Counter{c}_2, \Counter{d}_2) = (B, C, BC) \text{ and all other counters }= 0
\end{equation}
then all complete runs in $\textbf{ampl}$ ends with 
\begin{equation}
    (\Counter{b'}, \Counter{c'}, \Counter{d'}) = (2^B, C', 2^BC') \text{ {\bf if} all other counters}= 0 ~ (\text{guaranteed if } \Counter{d}_1 = \Counter{d}_2 = 0)
\end{equation}
Moreover for each $C' \in \mathbb{N}$ there is such a complete run starting with some $C$. We remark that in order to make $\textbf{ampl}'$ repeatable, we shall view each of the output counters $\Counter{b}', \Counter{c}', \Counter{d}'$ as identical copies of two counters. This does not increase the geometric dimension.

See~\autoref{fig:mod-amplifier} for the implementation of $\textbf{ampl}'$. 
Recall that each triple $(B, C, BC)$ allows $B/2$ zero tests. Thus if at the end we have $\Counter{d}_1 = \Counter{d}_2 = 0$ then both loop of multiplications are executed precisely $B / 4$ iterations, this implies that $\Counter{b}'$ and $\Counter{d}'$ are correctly multiplied by $2^B$.

\begin{figure}[h]
    \begin{minipage}[t]{\textwidth}
        \begin{algorithm}[H]
            \SetKw{Loop}{loop:}
            \SetKwBlock{LoopB}{loop}{}
            \Counter{b'} ${+}{=}~1$
            
            \Loop \Counter{c'} ${+}{=}~1$; \Counter{d'} ${+}{=}~1$
            
            \Loop
                \textbf{mult}(\Counter{b'}, \Counter{t}, $2^4$) with zero tests using $\Counter{c}_1, \Counter{b}_1, \Counter{d}_1$
                
            \Loop $\Counter{c}_1{-}{-}$
            
            \Loop
                \textbf{mult}(\Counter{d'}, \Counter{t}, $2^4$) with zero tests using $\Counter{c}_2, \Counter{b}_2, \Counter{d}_2$
            
            \Loop $\Counter{c}_2{-}{-}$
        \end{algorithm}
        \centering 
    \end{minipage}
    \caption{Implementation of $\textbf{ampl}'$}
    \label{fig:mod-amplifier}
\end{figure}

Each loop of multiplication now involves only 5 counters, among which 3 are from a multiplication triple. According to previous discussion the geometric dimension of each loop is 4. By plugging in the modified amplifier we conclude:

\maxDimFourTowerHard*

\end{document}